\theoremstyle{plain}
\newtheorem{theorem}{Theorem}[section]
\newtheorem{lemma}[theorem]{Lemma}
\newtheorem{proposition}[theorem]{Proposition}
\theoremstyle{remark}
\newtheorem{remark}[theorem]{Remark}
\numberwithin{equation}{section}
\newcommand{\Z}{\mathbb{Z}}
\newcommand{\R}{\mathbb{R}}
\newcommand{\C}{\mathbb{C}}
\newcommand{\T}{\mathbb{T}}
\newcommand{\D}{\mathcal{D}}
\newcommand{\ic}{\mathrm{i}}
\renewcommand{\d}{\, {\rm d }}
\renewcommand{\div}{\operatorname{div}}
\newcommand{\curl}{\operatorname{curl}}
\renewcommand{\leq}{\leqslant}
\renewcommand{\epsilon}{\varepsilon}
\newcommand\Real{\operatorname{Re}} 
\newcommand\Imag{\operatorname{Im}} 
\def\cal#1{\mathcal{#1}}
\def\mb#1{{\bf {#1}}}
\date\today
\title[Inviscid Water-Waves]{Inviscid Water-Waves and interface modeling}
\author[E. Dormy]{Emmanuel Dormy}
\author[C. Lacave]{Christophe Lacave}
\address[E. Dormy]{D\'epartement de Math\'ematiques et Applications, UMR-8553, \'Ecole Normale Sup\'erieure, CNRS, PSL University, 75005 Paris, France}
\email{Emmanuel.Dormy@ens.fr}
\address[C. Lacave]{Univ. Savoie Mont Blanc, CNRS, LAMA, 73000 Chamb\'ery,
  France.\footnote{Previously at Univ. Grenoble Alpes, CNRS, IF, 38000 Grenoble, France}}
\email{Christophe.Lacave@univ-smb.fr}
\begin{document}

\maketitle

\begin{abstract} 
We present a rigorous mathematical analysis of the modeling of inviscid water waves.
The free-surface is described as a parametrized curve.
We introduce a numerically stable algorithm which accounts for its evolution with time.
The method is shown to converge using approximate solutions, such as Stokes waves and Green-Naghdi
solitary waves. It is finally tested on a wave breaking problem, for which
an odd-even coupling suffices to achieve numerical convergence up to the
splash without the need for additional filtering.
\end{abstract}

\tableofcontents 

\section{Introduction}\label{sect-intro}

The study of water waves has a long mathematical history (Airy, Boussinesq,
Cauchy, Kelvin, Laplace, Navier, Rayleigh, Saint-Venant, Stokes, to cite
only a few). It has been studied in a variety of situations, probably the most complex
of these being the wave breaking problem. What happens when a waves overturns
raises significant mathematical difficulties. The water-air interface cannot be
described as a graph any longer. A parametric description of the interface
and the tracking of its Lagrangian evolution are needed.

We want to derive here a stable numerical strategy to solve for
one-dimensional water waves (i.e. in a 2D domain, or a 3D domain assuming
independence in one horizontal coordinate of space). We numerically
approximate the free-surface Euler equations 
without introducing artificial regularizing parameters. This is
particularly important in the case of loss of regularity of the interface,
in order to study the possible formation of singularity
(e.g. \cite{Baker11}).

\subsection{Problem formulation}
We consider a simple periodic domain
$\mathcal{D}= \T_{L} \times\R$ see Fig.~\ref{Geometry}, and introduce two
boundaries, $\Gamma_{S}$ the free surface water-air, and $\Gamma_{B}$ the bottom.
The domain is thus decomposed in three subdomains,
$\mathcal{D}_{F}$ the fluid domain, $\mathcal{D}_{A}$ the air domain,
$\mathcal{D}_{B}$ below the bottom.

\begin{figure}
 \centerline{\includegraphics[height=0.25\textwidth]{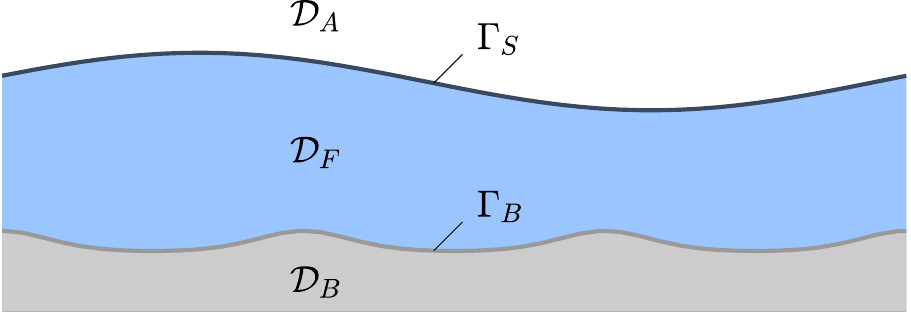}}
 \caption{The domain being considered consists of
 $\mathcal{D}= \T_{L} \times\R \, ,$
 which is decomposed in
 $\mathcal{D}=\mathcal{D}_{F}\cup \mathcal{D}_{A}\cup \mathcal{D}_{B}\cup \Gamma_{S}\cup \Gamma_{B}\, ,$
 where $\Gamma_{S} = \overline{\mathcal{D}_{A}} \cap
 \overline{\mathcal{D}_{F}}\, ,$ and $\Gamma_{B} = \overline{\mathcal{D}_{F}} \cap
 \overline{\mathcal{D}_{B}}\, .$}
\label{Geometry}
\end{figure}

Since we want our mathematical approximation to be able to describe
an interface which is not a graph (i.e. overturning of water in the context of a
breaking wave) we need to be able to track it as a parametrized
curve. Indeed, a description in the form $y=h(x)$ would develop a shock
(discontinuity) as soon as the water starts to overturn. 

Two cases will be considered, the single fluid problem, in which the air
density is neglected, and the bi-fluid problem. In the latter case, the
Euler equation needs to be considered both in the water 
($\mathcal{D}_{F}$) and in the air ($\mathcal{D}_{A}$).
At the water-bottom interface ($\Gamma_{B}$) the normal component of
velocity needs to vanish (impermeability condition).
Whereas at the
water-air interface ($\Gamma_{S}$) two quantities need to be continuous
across the interface: the normal velocity and the pressure. The latter, for
an inviscid fluid, being equivalent to the continuity of the normal component of the stress tensor.

The velocity tangential to the interface is notably not continuous across
$\Gamma_{S}$.
This results in a localized distribution of vorticity along $\Gamma_{S}$ in
the form of a vortex sheet.

The free surface $\Gamma_{S}$ is initially parametrized by arclength from
left to right: $e\in [0,L_{S}] \mapsto z_{S}(e)=z_{S,1}(e)+ \ic z_{S,2}(e) \, .$ 
In the same way, the bottom $\Gamma_{B}$ is parametrized by $e\in [0,L_{B}] \mapsto z_{B}(e)=z_{B,1}(e)+ \ic
z_{B,2}(e)\, .$ 

It is useful to introduce the tangent vector $\tau_{S} = \tau_{S,1} +\ic
\tau_{S,2}= |z_{S,e}(e)|^{-1} z_{S,e}\, , $ which is pointing to the right
and the normal $n_{S}=n_{S,1} +\ic n_{S,2}= -\tau_{S,2} +\ic \tau_{S,1}=\ic \tau_{S}$ is
pointing out of the fluid domain (where $z_{S,e}$ is the derivative in $e$).
The same is done at the bottom with the normal now pointing in the fluid domain.

It should be stressed that the arclength is not preserved as the fluid
surface $\Gamma_{S}$ evolves, it is thus important to consider $|z_{e}|(t,e)\,.$

We introduce on any vector field $\mb u = ( u_{1}, u_{2})$ the following three operations
$\widehat{\mb u}=u_{1}-\ic u_{2} \, ,$
$(u_{1},u_{2})^\perp =(-u_{2},u_{1}) \, ,$
and the curl operator
$\curl \mb u = \partial_{1} u_{2}-\partial_{2} u_{1}\, .$ 
Finally, we also introduce for any vector $\mb x = ( x_{1}, x_{2})$
the complex notation $x=x_{1}+\ic x_{2} \, .$

\subsection{Numerical strategies}
Interface evolution methods aim at capturing the time evolution of the
water-air interface ($\Gamma_{S}$) using solely the knowledge of this
vorticity distribution. This results in having to cope with singular
integrals along the vortex sheet, but simplifies the problem numerically
in that neither the water domain ($\mathcal{D}_{F}$), nor the air domain
($\mathcal{D}_{A}$), need to be meshed, as would be the case for example
using a Finite-Element Method. Such approaches are closer in spirit to a
Boundary-Element Method used in three-dimensional problems
(e.g. \cite{Grilli89,guyenne_grilli_2006,Pomeau_2008,Pomeau}).

A classical formulation of water waves is the celebrated Zakharov-Craig-Sulem description \cite{Zakharov68,CraigSulemSulem}.
While this description has proven extremely useful from a theoretical point
of view (e.g. \cite{Alazard,Lannes05,Germain12}), it raises difficulties from a
numerical point of view and its adaptation to the framework of singular integral formulations 
could be the subject of a future work (see Remark~\ref{rem:D2N} for a discussion).
 
We will consider instead two different approaches to compute the free-surface
evolution. The first one is based on a so-called `vortex method', the
discontinuity of the tangential velocity at the air-water interface is
modelled by a vortex sheet (with vorticity distribution $\gamma$).
In this approach, a finite but large number of localized vortices is used to approximate a continuous vortex
sheet. Such an approach has been shown to be efficient for Euler flows in
the full space \cite{GHL90} and for exterior domains \cite{ADL}.
The interface is
transported by the resulting normal flow (as expressed in \eqref{eq:dtz1}).
In the case of free surface flows, such as water waves, the vorticity
distribution along the $\Gamma_{S}$ surface is however not
preserved following a Lagrangian trajectory, and its evolution in time needs to
be traced in the system describing the evolution of two Euler flows with
two continuity conditions (see Equation \eqref{inv_gammaSt}).
This approach has been pioneered in \cite{Baker82}
and further developed recently in \cite{Wilkening21}. It is
also a useful description for mathematical proofs
(e.g. \cite{Wu97,Castro}).

The second approach is referred to as the `dipole layer', where the velocity is
now described by the jump in potential between the two fluids (measured by
the dipole distribution $\mu$, related to $\gamma$ via $\gamma =
\partial_e \mu $). The jump in tangential velocity then stems from
expressing the velocity as a potential. 
The dipole layer evolution follows from the Bernoulli equation 
(and takes the form given in \eqref{inv_muSt}), whereas the interface is again
transported by the normal flow (as expressed in \eqref{eq:dtz2}).
This approach is known
as the `dipole method' (it is equivalent to double layer potentials in
potential theory). This approach was first investigated in \cite{Baker82}
and then in \cite{BAKER198353,Baker11}.

The former method is lighter to derive and offers the possibility to account for Euler flows with
vorticity, whereas the latter involves more analytical work and assumes an irrotational flow. We will see
however that the latter has better convergence properties for strongly
non-linear configurations.
In both cases, the spatial discretization in terms of singular integrals is
known to converge toward the Euler equation \cite{ADL}, see also
\cite{Beale96} for a study of the vortex method in the deep-water case. 

The deep-water case is a trivial limit of the above description, in which
the bottom vortex sheet (distributed on $\Gamma_{B}$) is sent to infinity. It is formally
equivalent to simply suppressing it, or setting its vorticity to zero.

\subsection{Objectives of the present study}
Our aim is to construct a numerical scheme based on solid
mathematical developments and free of smoothing or regularizations and
which can later be used to guide theoretical understanding and
further mathematical constructions on these problems. 
The goal of this article is thus to derive a formulation of these methods in the
most general case (bi-fluid or single fluid, including possibly a non-flat
bottom, including vorticity and mean currents).

The discrete approximation will be based on a relevant reformulation of the
continuous problem. The first step is to introduce some quantities defined
on the 1D interface, which then allows to recover the full fluid motion.

\begin{theorem}\label{theo:main-ellip}
Given $\Gamma_{S},\Gamma_{B}\in C^{1,1}$, the following holds true.

For any $\mb v\in L^2(\Omega_{F})$ such that $\div \mb v=0\,$, $\curl \mb
v\in L^\infty(\Omega_{F})\,$, $\mb v\cdot n\vert_{\Gamma_{B}}=0$, there
exists $(\gamma_{S},\gamma_{B}) \in C^{0,\alpha}$ and $(\mu_{S},\mu_{B})\in
C^{1,\alpha}$ (for any $\alpha\in (0,1)$) such that $\mb v$ can be
recovered uniquely in terms of $(\gamma_{S},\gamma_{B},\int_{\Gamma_{B}}\mb
v\cdot\tau, \curl v)$ or in terms of $(\mu_{S},\mu_{B},\int_{\Gamma_{B}}\mb
v\cdot\tau, \curl v)$. 
\end{theorem}

This theorem will offer a base to turn a 2D problem into a 1D problem. It
involves singular integral representation. This allows us to describe the
free surface problem knowing only $(\gamma_{S},\gamma_{B})$ in the case of the vortex
formulation, or $(\mu_{S},\mu_{B}$ in the case of the dipole formulation. 

\begin{remark}\label{rem:main-ellip}
 It will be showed in a second step that $\gamma_{B}$ (resp. $\mu_{B}$) can
 be determined uniquely in terms of $\gamma_{S}$ (resp. $\gamma_{B}$) and
 $(\int_{\Gamma_{B}}\mb v\cdot\tau, \curl v)$. 
\end{remark}

The next step will then be to reformulate the water-waves equations into an evolution equation for $\gamma_{S}$ or $\mu_{S}$ and $z_S$.
 
\begin{theorem}\label{theo:main-dyn}
 Let $z_{S}$ and $\mb v$ be a regular solution of the water-waves equations, then the following holds:
\begin{enumerate}
 \item The single fluid equation without vorticity but with mean current and the bi-fluid model without vorticity and without mean current can be reformulated as explicit equations on $(\partial_{t}z_{S},\partial_{t}\mu_{S})$ which involves only $(z_{S}, \mu_{S}, \int_{\Gamma_{B}}\mb v_{0}\cdot\tau)$: see \eqref{eq:dtz2} and \eqref{inv_muSt}.
 \item The single fluid and bi-fluid equations with vorticity and with mean current can be reformulated as explicit equations on $(\partial_{t}z_{S},\partial_{t}\gamma_{S})$ which involves only $(z_{S}, \gamma_{S}, \int_{\Gamma_{B}}\mb v_{0}\cdot\tau, \curl \mb v)$: see \eqref{eq:dtz1} and \eqref{inv_gammaSt}.\end{enumerate}
\end{theorem}

The first item (1) corresponds to the dipole formulation, whereas the
second (2) corresponds to the vortex
formulation. Of course, to be complete, we need to add the transport
equation for $\curl \mb v$ by the velocity given in terms of
$(\gamma_{S},\int_{\Gamma_{B}}\mb v_{0}\cdot\tau, \curl v)$. It could
appear strange to consider the dipole formulation which is less general
than the vortex formulation, however we will see that it gives more stable numerical
schemes. Surface tension is included in the derivation, but can also be omitted. 

Both theorems above do not introduce any regularization and hold true for the continuous problem.
The resulting expressions are thus non-trivial and, in a first reading, we
advise to drop all terms associated to the density of air (i.e. associated
to the bi-fluid formulation), uniform or localized vorticity and circulation (mean
currents).

We then want to ensure
that our numerical scheme converges in a realistic manner (i.e. for
realistic parameters that can be achieved in practice) to the solution of the
continuous problem. We introduce in this work a regularization-free
approach to solve for the water-wave problem
(i.e. without explicit filtering or any other regularization introducing
extra parameters to the problem).
To discretise singular integrals, we follow an approach introduced in
\cite{ADL} for which rigorous convergence proofs were provided in the case
of a smooth boundary ($C^\infty$).

We verify conserved quantities at the
discrete level. We illustrate on simple test cases the
numerical convergence to the
approximate solutions (e.g. Stokes waves, Green-Naghdi solitary waves). We
also demonstrate stability and convergence of our numerical solution for
the wave-breaking problem. 

Finally, we investigate the effects of regularization strategies on the 
solution and illustrate numerically how they can yield irrelevant solutions.

\subsection{Plan of the paper}

In the next section, we introduce the singular integral representation,
thus proving Theorem~\ref{theo:main-ellip} by solving the corresponding
elliptic equations.

In Section~\ref{sec3}, we first prove Remark~\ref{rem:main-ellip} and then reformulate the water-waves equations using this
formalism, hence proving Theorem~\ref{theo:main-dyn}. 

In these two sections, we have chosen to present a general derivation, using a minimal amount of
simplifying assumptions. Simplifying assumptions are thus introduced as the
derivation proceeds, only when they become necessary. We believe this
highlights why and where each hypothesis is needed. This justifies in
particular the additional restrictions associated with the dipole formulation.

Section~\ref{sec4} presents our discretization strategy. Numerical
results as well as convergence tests are presented in Section~\ref{Sect_Num}. The
comparison with previously used regularization strategies (filtering and
offsetting) is performed in Section~\ref{sec6}. Finally in Section~\ref{sec7} we discuss
potential applications and further development.

Finally, the Plemelj formulae, discrete expressions both for the vortex and the
dipole method, as well as a list of notations are presented in appendices.

\section{Singular integrals representation at fixed time} 

In this section we will establish Theorem~\ref{theo:main-ellip} by studying
the elliptic problem related to inviscid flows.

\subsection{Stream and potential functions}\label{sec-elliptic}

The aim of this subsection is to express the velocity in both fluids in
terms of stream-function and velocity potential. This involves the
resolution of a div-curl problem in terms of the vorticity and the circulation.

\subsubsection{Resolution of the Laplace problem in $ \T_{L} \times\R$}

Even if we assume
that the fluid is curl-free in $\cal D_{F}\,$, the non-trivial boundary condition
on $\Gamma_{S}$ will be interpreted as a vortex sheet in $\cal D =\T_L\times \R\,$. For this
reason, we introduce the Green kernel in $\cal D$
\begin{equation} 
G(\mb x)=\frac1{4\pi} \ln\Big(\cosh \frac{2\pi x_{2}}{L}-\cos \frac{2\pi
 x_{1}}{L}\Big), \ G(\mb x,\mb y)=G(\mb x-\mb y) \,,
\label{Greenkernel}
\end{equation} 
and we recall the following result proved in \cite{BeichmanDenisov}:

\begin{proposition}\label{prop-Green}
 For any $f\in L^\infty_{c}(\D)\,$, every solution $\Psi$ of the following elliptic problem
 \[
 \Delta \Psi = f\,,\quad \lim_{x_{2}\to +\infty} \partial_{2}\Psi = -\lim_{x_{2}\to -\infty} \partial_{2}\Psi\,, \quad |\Psi |\leq C_1 (|x_{2}|+1)
 \]
 can be written as
 \begin{equation}
 \Psi(\mb x)= \Psi[f](\mb x)=\int_{\D} G(\mb x,\mb y) f(\mb y)\d \mb y + C_2 \quad
 \text{where\ } C_1 \text{\ and\ } C_2 \text{\ are constants.}
 \label{Green_formulation}
 \end{equation} 
\end{proposition}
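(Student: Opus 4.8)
The plan is to establish existence via the explicit formula and then prove uniqueness by showing any difference of solutions is an affine function of $x_2$ that must in fact be constant. Write $\Psi_0(\mb x) = \int_{\D} G(\mb x, \mb y) f(\mb y)\d\mb y$. First I would verify that $\Psi_0$ is well-defined: since $f \in L^\infty_c(\D)$, the integral runs over a compact set $K = \supp f$, and near the diagonal $G(\mb x, \mb y) \sim \frac{1}{2\pi}\ln|\mb x - \mb y|$ is locally integrable in $\R^2$ (hence on $\T_L \times \R$), so the integral converges absolutely and defines a locally bounded function. One then checks $\Delta_{\mb x} G(\mb x - \mb y) = \delta(\mb x - \mb y)$ in $\D$ — this is the defining property of the Green kernel \eqref{Greenkernel} on the cylinder, already recorded in \cite{BeichmanDenisov} — so $\Delta \Psi_0 = f$ in the distributional sense, and elliptic regularity upgrades this appropriately.

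**Next I would pin down the growth and the Neumann-type condition at $\pm\infty$.** For $|x_2|$ large compared to the support of $f$, expand $G$: from \eqref{Greenkernel}, as $x_2 \to +\infty$ one has $\cosh\frac{2\pi x_2}{L} \sim \frac12 e^{2\pi x_2/L}$, so $G(\mb x - \mb y) = \frac{x_2 - y_2}{2L} + O(e^{-2\pi(x_2-y_2)/L})$ uniformly for $\mb y \in K$; symmetrically $G(\mb x - \mb y) = -\frac{x_2 - y_2}{2L} + O(e^{2\pi(x_2-y_2)/L})$ as $x_2 \to -\infty$. Integrating against $f$ gives $\Psi_0(\mb x) = \pm\frac{x_2}{2L}\int_\D f + (\text{bounded, decaying})$ as $x_2 \to \pm\infty$, which shows $|\Psi_0| \le C(|x_2| + 1)$, and differentiating the expansion gives $\partial_2 \Psi_0 \to \frac{1}{2L}\int_\D f$ as $x_2 \to +\infty$ and $\partial_2\Psi_0 \to -\frac{1}{2L}\int_\D f$ as $x_2 \to -\infty$, i.e. exactly the required symmetric limit condition. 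Hence $\Psi_0$ (and $\Psi_0 + C$ for any constant) is a solution, establishing the ``$\Leftarrow$'' direction of \eqref{Green_formulation}.

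**For uniqueness**, let $\Psi$ be any solution and set $w = \Psi - \Psi_0$. Then $\Delta w = 0$ in $\D = \T_L \times \R$, $w$ satisfies $|w| \le C(|x_2| + 1)$, and $\partial_2 w$ has opposite limits at $\pm\infty$. Decompose $w$ in Fourier modes in the periodic variable $x_1$: $w(x_1, x_2) = \sum_{k \in \Z} c_k(x_2) e^{2\pi \ic k x_1 / L}$. Harmonicity forces $c_k'' = (2\pi k/L)^2 c_k$, so for $k \ne 0$, $c_k$ is a combination of $e^{\pm 2\pi |k| x_2 / L}$; the linear growth bound kills the growing exponential on each side, and matching forces $c_k \equiv 0$ for $k \ne 0$. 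The zero mode satisfies $c_0'' = 0$, so $c_0(x_2) = a x_2 + b$; thus $w = a x_2 + b$. Now the boundary condition: $\partial_2 w = a$ is constant, and the condition $\lim_{x_2 \to +\infty}\partial_2 w = -\lim_{x_2 \to -\infty}\partial_2 w$ reads $a = -a$, forcing $a = 0$. Hence $w \equiv b$ is constant, giving $\Psi = \Psi_0 + C$.

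**The main obstacle** I anticipate is not any single hard estimate but rather being careful about regularity: $f$ is merely $L^\infty$, so $\Psi_0$ is $C^{1,\alpha}_{\loc}$ but not $C^2$, and statements like ``$\partial_2 w$ has a limit at $\pm\infty$'' must be interpreted so that the Fourier-mode argument is legitimate — this is clean because $w = \Psi - \Psi_0$ is \emph{harmonic}, hence smooth, on all of $\D$, so no regularity issue survives into the uniqueness step; the only care needed is justifying term-by-term differentiation of the Fourier series, which is standard for harmonic functions on the cylinder with polynomial growth. A secondary point is making the asymptotic expansion of $G$ uniform in $\mb y$ over the compact support of $f$, which is routine given the explicit formula \eqref{Greenkernel}.
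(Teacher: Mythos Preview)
Your argument is correct. Note, however, that the paper does not supply its own proof of this proposition: it is stated with the preamble ``we recall the following result proved in \cite{BeichmanDenisov}'' and then immediately used. The only related computation the paper carries out is the asymptotic expansion \eqref{behinf}, which coincides with the existence half of your argument (your expansion of $G(\mb x-\mb y)$ as $x_2\to\pm\infty$ is exactly what produces \eqref{behinf}, up to keeping track of the constant $-\tfrac{\ln 2}{4\pi}$ and the $y_2$-moment of $f$).

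Your uniqueness step via Fourier separation in $x_1$ is the standard and natural route on the cylinder $\T_L\times\R$, and the key observation that the symmetric Neumann condition $\lim_{+\infty}\partial_2 w = -\lim_{-\infty}\partial_2 w$ forces the affine zero mode $a x_2+b$ to have $a=0$ is exactly the mechanism that singles out the constant. The regularity remark you flag is handled correctly: since $w=\Psi-\Psi_0$ is harmonic it is smooth, so the mode-by-mode argument and the pointwise limits of $\partial_2 w$ are unproblematic regardless of the limited regularity of $\Psi_0$ itself. In short, you have filled in what the paper delegates to the reference, by the expected method.
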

The relation between the Green kernel \eqref{Greenkernel} in $\T_{L} \times\R$ and the usual kernel $\frac1{2\pi} \ln |\mb x-\mb y|$ in $\R^2$ is formally derived in Appendix~\ref{app-cot}.

From the explicit formula, it is easy to observe using Taylor expansions that
\begin{equation}\label{behinf}
\begin{aligned}
 \int_{\D} G(\mb x,\mb y)& f(\mb y)\d \mb y = \\
 & \Big( \frac{x_{2}}{2L}-\frac{\ln 2}{4\pi} \Big) \int_{\D}f(\mb y)\d \mb
 y -\frac1{2L} \int_{\D} y_{2}f(\mb y)\d \mb y + \mathcal{O}(e^{-|x_{2}|})
 \text{ as } x_2 \to +\infty\,,\\
 \int_{\D} G(\mb x,\mb y) &f(\mb y)\d \mb y = \\
 &\Big(- \frac{x_{2}}{2L}-\frac{\ln 2}{4\pi} \Big) \int_{\D}f(\mb y)\d \mb
 y +\frac1{2L} \int_{\D} y_{2}f(\mb y)\d \mb y + \mathcal{O}(e^{-|x_{2}|})
 \text{ as } x_2 \to -\infty \,.
\end{aligned}
\end{equation}
Thus $\lim_{x_{2}\to +\infty} \partial_{2}\Psi = -\lim_{x_{2}\to -\infty} \partial_{2}\Psi$ is a necessary and sufficient condition to use the
representation formula \eqref{Green_formulation}.

\subsubsection{Stream function and potential construction in the fluid domain}
We now apply this elliptic formalism to define the
stream function for the vortex method. Let us consider the following elliptic problem on $\D_{F}$ : for any functions
$g$ with zero mean and $\omega\,$, we want to analyze a vector field $\mb u$
such that 
\begin{equation}\label{ellip0}
\div \mb u=0 \text{ in } \D_{F}\,, \quad \curl \mb u =\omega \text{ in } \D_{F}\,,
\quad \mb u\cdot\mb n=0 \text{ on } \Gamma_{B}\,, \quad \mb u\cdot\mb n=g
\text{ on } \Gamma_{S} \,,
\end{equation}
which we want to extend in $\D\,,$ in order to be able to use the above
proposition.

In \eqref{ellip0}, the divergence free assumption stems from the incompressibility property
and the third condition corresponds to the impermeability of the boundary
at the bottom. By the Stokes formula, the fact $g$ has a zero mean is a necessary condition coming from these two
assumptions.

Unfortunately, \eqref{ellip0} has infinitely many solutions because of the harmonic vector field, also called the constant background current in \cite{Moon}, $\mb H$:
\[
\div \mb H=\curl \mb H=0 \text{ in } \D_{F}\,, \quad \mb H\cdot \mb n=0 \text{ on } \Gamma_{B}\cup \Gamma_{S}
\]
for instance when $\Gamma_{B}= \T_{L}\times \{-1\}$ and $
\Gamma_{S}=\T_{L}\times \{0\}\,$, we have $\mb H=\mb e_{1}
\mathds{1}_{\D_{F}} \,. $

In order to uniquely determine $\mb u$ from $\omega\,$, we need to prescribe
the circulation either on the bottom or below the free surface, knowing that we
have the following compatibility condition from the Stokes formula 
\begin{equation}\label{Stokes-comp}
 \int_{\Gamma_{B}}\mb u\cdot \mb \tau \d \sigma - \int_{\Gamma_{S}} \mb
 u\cdot \mb \tau \d \sigma = \int_{\D_{F}} \omega\, \d \mb x
\end{equation}
where the integrals are taken from left to right. 

\begin{lemma}\label{lem-ellip}
 For any $\omega\in L^\infty(\D_{F})$ , $g\in C^0(\Gamma_{S})$ with zero mean value, and $\gamma\in \R$ given, there exists a unique $\mb u\in H^1(\D_{F})$ such that 
\begin{equation}\label{ellip1}
\begin{aligned}
 &\div \mb u=0 \text{ in } \D_{F}\,, \quad \curl \mb u =\omega \text{ in } \D_{F}\,, \quad \mb u\cdot \mb n=0 \text{ on } \Gamma_{B}\,, \\
 &\mb u\cdot \mb n=g \text{ on } \Gamma_{S}\,, \quad \int_{\Gamma_{B}} \mb u\cdot \mb \tau \d \sigma= \gamma\,.
\end{aligned}
\end{equation}
Moreover, there exists a unique $\psi_{F} \in H^2(\D_{F})$ up to an
arbitrary constant, such that
\begin{equation*}
 \mb u=\nabla^\perp \psi_{F}\, .
\end{equation*}
\label{stream_vortex}
\end{lemma} 

\begin{proof}
 The proof of this lemma comes from standard ideas in elliptic theory and we give here only the main lines.
 
 The existence comes from the existence of $\phi\in H^1(\D_{F})$ solving the following Laplace problem with Neumann boundary condition (where the zero mean assumption is needed):
 \begin{equation*}
 \Delta \phi=0 \text{ in } \D_{F}\,, \quad \partial_{n}\phi =-(\nabla^\perp \Psi+\alpha \mb e_{1})\cdot \mb n \text{ on } \Gamma_{B}\,, \quad
 \partial_{n}\phi = g-(\nabla^\perp \Psi+\alpha \mb e_{1})\cdot \mb n \text{ on } \Gamma_{S}\
\end{equation*}
where $\Psi = \Psi[\omega]$ is defined in \eqref{Green_formulation} and $\alpha\in \R$ such that $\int_{\Gamma_{B}} (\nabla^\perp \Psi+\alpha \mb e_{1}) \cdot \mb \tau \d \sigma=\gamma$. Indeed, it is then enough to set $\mb u = \nabla \phi + \nabla^\perp \Psi + \alpha \mb e_{1}$.

Having such a $\mb u$, it is always possible to construct the stream function $\psi_{F}\,,$ because $\div \mb u=0$ and $\int_{\Gamma_{S}}\mb u\cdot \mb n = \int_{\Gamma_{B}}\mb u\cdot \mb n =0$
imply that
$\int_{\Gamma}\mb u^\perp\cdot\mb \tau =0$ for any
closed loop $\Gamma\, $, which allows us to construct $\psi_{F}\,$, uniquely up to a
constant.

The uniqueness can be deduced from the stream function. Indeed, denoting with tilde the difference of two solutions, we note that $\partial_{\tau}\tilde\psi_{F}=\tilde{\mb u} \cdot \mb n=0$ implies that $\tilde\psi_{F}$ is constant on each component of the boundary and we conclude by integrating by parts:
\[
\int_{\D_{F}}|\tilde{\mb u}|^2 = \tilde\psi_{F}\vert_{\Gamma_{B}}\int_{\Gamma_{B}} \tilde{\mb u}\cdot \tau\d s- \tilde\psi_{F}\vert_{\Gamma_{S}}\int_{\Gamma_{S}} \tilde{\mb u}\cdot \tau\d s =0.
\] 
\end{proof}

We should note that the conservation laws for the 2D Euler equations
(including the circulation and the total vorticity) imply that 
$\int_{\Gamma_{B}}\mb u\cdot \mb \tau$ and $\int_{\Gamma_{S}} \mb
u\cdot \mb \tau $ are both conserved quantities.

In the case of the dipole formulation, we need to write $\mb u$ as a gradient, which
is possible only by subtracting the curl and the circulation parts. Of
course, we could take advantage of the fact
$\mb u - \nabla^\perp \Psi[\omega]-\frac{\gamma}L e_{1}$
is curl free with zero circulation, and can thus be written as a
gradient in $\cal D_{F}\,$. Nevertheless this approach introduces additional
difficulties.

For example, if we take into account the density of air, it will be crucial to
properly define the air velocity field. However, for the single-fluid water-waves
equations (in which the density of air is neglected), we are left with
several possible choices, in particular stationary vector-fields could be
used, thus simplifying the computation below.
To underline where the properties of the vector fields are
important, we stay general for now and we will introduce constraints as they
become necessary.

Let us consider any $\mb u_{\omega,\gamma}$ such that
\begin{equation}\label{u-fgamma-1}
\div \mb u_{\omega,\gamma}=0 \text{ in } \D_{F}\,, \ \curl \mb
u_{\omega,\gamma} =\omega \text{ in } \D_{F}\,, \ \int_{\Gamma_{B}} \mb
u_{\omega,\gamma} \cdot \mb n \d s= 0\,, \ \int_{\Gamma_{B}} \mb
u_{\omega,\gamma} \cdot \mb \tau \d s= \gamma \,.
\end{equation}

Finding $\mb u$ the solution of \eqref{ellip1} is equivalent to look for $\mb u_{R}:=\mb u-\mb u_{\omega,\gamma}$ which is div and curl free,
without circulation and flux. The existence and uniqueness of $\mb u_{R}$ comes from Lemma~\ref{lem-ellip}. It can be written as the perpendicular gradient of a stream function, but also as the gradient of a potential function:
\begin{equation}\label{uR-pot}
 \mb u_{R} = \nabla \phi_{F} = \nabla^\perp \tilde\psi_{F}
\end{equation}
where $\phi_{F}$ and $\tilde\psi_{F}$ are uniquely determined, up to a
constant. Here we use that $\mb u_{R}$ is curl free with zero circulation to state that $\int_{\Gamma}\mb u\cdot\mb \tau =0$ for any
closed loop $\Gamma\, $. Even if it may not seem natural to study $\tilde\psi_{F}$ instead of
$\psi_{F}\,$, we will see below that $\tilde\psi_{F}$ is an interesting
quantity to consider for the dipole formulation.

\subsubsection{Extension to the full domain}
\label{firststep}
Now that we have established Lemma~\ref{stream_vortex} and
 \eqref{uR-pot}, and in order to
apply Proposition~\ref{prop-Green} to obtain a representation
formula, we first need to extend continuously the potential $\phi_{F}$ or the
stream functions $\psi_{F}$ or $\tilde\psi_{F}\,$.
Extending the potential is related to 
the fluid charge method developed in \cite{ADL}. This method is
unfortunately not relevant for a free surface problem, see Remark~\ref{rem-FCM}.

We, therefore, prefer to extend the stream functions continuously. This is
equivalent to assuming the continuity of the normal part of the velocity
across the boundary. Such an extended vector field is divergence free in
the whole domain $\D\,$, hence can be written using a stream function,
and the boundaries can be interpreted as vortex sheets, 
corresponding to the jump in the tangential velocity.

At the bottom, we extend $\mb u$ in the simplest possible way, i.e. such that
\[
\div \mb u = \curl \mb u=0 \text{ in } \D_{B}, \quad \mb u\cdot \mb n= 0 \text{ on } \Gamma_{B}\,, \quad \int_{\Gamma_{B}} \mb
u \cdot \mb \tau \d s= 0 \,,
\]
which implies $\mb u\vert_{\D_{B}}=0 \,.$

This is equivalent to extending $\psi$ by the constant
$\psi_{F}\vert_{\partial \Gamma_{B}}$ (and indeed, $\mb u\cdot \mb n=0$
implies that $\psi_{F}$ is constant on $\Gamma_{B}$). In order to use
Proposition~\ref{prop-Green},
we have to extend the stream function in the air such that
$u_{2}\to 0$ as $x_{2}\to +\infty\,$. Hence we extend it in the air $\cal D_A$ with the
unique\footnote{The existence and uniqueness can be proved via the double layer potential together with the Green kernel in $\T_{L}\times \R$ (see Proposition~\ref{prop-Green}).} solution of 
\begin{gather*}
\div \mb u= \curl \mb u =0 \text{ in } \D_{A}, \quad \mb u \cdot \mb n=g \text{ on } \Gamma_{A}, \\
 |\mb u | \to 0 \text{ when } x_{2}\to \infty, \quad \int_{\Gamma_{A}} \mb
 u\cdot \mb \tau \d s= 0 \,.
\end{gather*}
Note that this equation is the physically relevant formulation if we are
interested in the bi-fluid water-wave model, for which the continuity of
normal velocity simply reflects that the two fluids are not mixing.

Note also that it could, in principle, be possible to add some vortices in
the air. We should stress however that the circulation has to vanish at infinity in order to use
Proposition~\ref{prop-Green}, if not, we would have to change the extension below the bottom. 

This extended vector field can be expressed as 
\begin{equation}\label{vortex-stream}
 \mb u=\nabla^\perp \psi \,,
\end{equation}
where $\psi$ is continuous in $\D$ and determined up to an arbitrary
constant.
This extension will be sufficient for the vortex formulation.

Regarding the derivation of the dipole formulation, we now have to extend
$\mb u_{R}\,$. It will be convenient for the bottom condition to
extend $\mb u_{R}$ by zero in $\D_{B}$ (see further down Remark~\ref{rem-phiB}).
In order to achieve this, we must add the following assumption on
$\mb u_{\omega,\gamma}$: 
\begin{equation}\label{u-fgamma-2}
\mb u_{\omega,\gamma} \cdot \mb n =0 \text{ on } \Gamma_{B} \,.
\end{equation}
This assumption allows us to extend $\mb u_{R}$ in $\D_{B}\,$ by zero and
$\tilde \psi$ by the constant $\tilde \psi_{F}\vert_{\partial
 \Gamma_{B}}\,$.

Again, for a compatibility at infinity, and in order to
write $\mb u_{R}$ as a potential, we must extend $\mb u_{R}$ in the air
by a vector field satisfying 
\begin{gather*}
\div \mb u_{R}= \curl \mb u_{R} =0 \text{ in } \D_{A}\,, \quad \mb u_{R}
\cdot \mb n=g -\mb u_{\omega,\gamma} \cdot \mb n \text{ on } \Gamma_{A}\,, \\
 |\mb u_{R} | \to 0 \text{ when } x_{2}\to \infty\,, \quad \int_{\Gamma_{S}}
 \mb u_{R}\cdot \mb \tau \d s= 0 \,.
\end{gather*}
From the above equations, we can write
$\mb u_{R}=\nabla^\perp \tilde\psi=\nabla \phi \,,$ where $\tilde\psi$
is continuous and uniquely defined up to an arbitrary constant.
The potential $\phi$ jumps across $\Gamma_{S}$ and
$\Gamma_{B} \,,$ and we have complete freedom to choose independently the
constants in each of the connected components: $\D_{B}\,$, $\D_{F}$ and
$\D_{A} \,.$ These four constants will be determined below in order to be
able to write $\psi\,$, $\tilde \psi$ and $\phi$ in the form of a singular integral by
applying Proposition~\ref{prop-Green}.

Using the uniqueness of the solution to the elliptic problem \eqref{ellip1}, it follows that 
\begin{equation}\label{decomp}
\mb u=\mb u_{\omega,\gamma} + \nabla^\perp \tilde\psi= \mb u_{\omega,\gamma} + \nabla
\phi \ \text{in}\ \D_{F} \,.
\end{equation}
Even if we have already properly defined the extension of $\tilde \psi$ in
order to be able to use a Biot-Savart representation formula
\eqref{Green_formulation},
we still need to discuss the expression of $\mb u_{\omega,\gamma}$ in $\D_{A}$ to infer the value of $\mb u$ in the air. Such a discussion is postponed to the end of the next section.

\begin{remark}
We can apply the whole analysis of this paper to treat cases involving several submerged solids
$\cal S_{k}\Subset \D_{S}\,$, simply constructing the harmonic vector such
that 
 \[
 \div \mb H=\curl\mb H=0 \text{ in }\D_{F}, \quad \mb H\cdot\mb n=0 \text{ on }\Gamma_{B}\cup_{k} \partial\cal S_{k},\quad \int_{\Gamma_{B}} \mb H\cdot \mb \tau=\gamma_{0}, \quad \int_{\partial \cal S_{k}} \mb H\cdot \mb \tau=\gamma_{k}
 \]
 where $\gamma_{k}$ is initially given. In the same way, if we are only
 interested by the single-fluid water-waves equations, we can simply
 construct $\mb H$ initially in $\D_{F}\cup \Gamma_{S}\cup \D_{A}$ and this
 problem can be solved in the dipole formulation. If
 $\omega=\gamma_{0}=\gamma_{1}=\gamma_{k}=0$ for all $k$, the dipole
 formulation is possible for both the single-fluid and bi-fluid water-waves
 equations. Otherwise, we will need to use the vortex formulation where the
 inclusion of such solids is a minor modification of the numerical code. In the
 vortex formulation, we can even include the case where the solids are
 moving with a prescribed velocity and rotation by setting
 $\mb H\cdot\mb n=(\mb \ell_{i} + r_{i} \mb x^\perp)\cdot \mb n\,$. 

 The case of immerged solids moving under the influence of the flow involves the
 computation of pressure forces at the boundary of the solid (see
 e.g. equation (5.4) in \cite{Wilkening21}).
 The case of a floating (partially immerged) solid would be even more
 challenging (see the recent developments in \cite{Lannes1,Lannes2,Lannes3}).

\end{remark}

\subsection{Potential and dipole formulae}\label{sec-BS}
We now want to use Proposition~\ref{prop-Green} to express
$\widehat{\mb u}=u_{1}-\ic u_{2} $ (in the vortex formulation), $\phi$ and
$\tilde \psi$ (in the dipole formulation) as singular integrals (i.e. equations \eqref{BS-u}, \eqref{phi-int}, \eqref{psi-mu} below).
In the previous subsection, we have constructed continuous $\psi$ or
$\tilde \psi$ on $\D\,$, where the perpendicular gradient is continuous on
$\D_{B}\cup \D_{F}\cup \D_{A}\,$, and its normal part is continuous across
the interfaces $\Gamma_{B}$ and $\Gamma_{S}\,$. Extending $\mb u$ or $\mb
u_{R}$ in this way ensures that $\div \mb u =0$ in $\D\,$, whereas the jump
of the tangential part can be seen as a vortex sheet, namely 
 \[
 \curl \mb u=\Delta \psi =\omega + |z_{S,e}|^{-1} \gamma_{S} \delta_{\Gamma_{S}} +|z_{B,e}|^{-1} \gamma_{B}\delta_{\Gamma_{B}} \text{ in } \D
 \]
where 
\begin{equation}\label{def-gamma}
\begin{aligned}
 \gamma_{S}(e) &: =|z_{S,e}(e)|[\lim_{\mb z\in \D_{F}\to \mb z_{S}(e)} \mb u - \lim_{z\in \D_{A}\to z_{S}(e)} \mb u]\cdot \mb \tau(e) \\
\gamma_{B}(e) & :=-|z_{B,e}(e)|[\lim_{\mb z\in \D_{F}\to \mb z_{B}(e)} \mb u]\cdot \mb \tau(e)
\end{aligned} 
\end{equation}
are such that the mean value is 
\begin{equation}\label{vort-mv}
\int (\omega+ |z_{S,e}|^{-1}\gamma_{S} \delta_{\Gamma_{S}} + |z_{B,e}|^{-1}\gamma_{B}\delta_{\Gamma_{B}})=0\,,
 \end{equation}
 see \eqref{Stokes-comp}.
These formulae and the following ones also hold replacing $\mb u,\,
\omega\,, \gamma_{S}\,, \gamma_{B}$ by $\mb u_{R}\,, 0\,,
\tilde\gamma_{S}\,, \tilde\gamma_{B}\,$. 
 
Proposition~\ref{prop-Green} implies that $\psi$ is determined up to a
constant, which is fixed when we choose to represent\footnote{Even if
$\delta_{\Gamma}$ is not a bounded function, it belongs to $H^{-1}(\D)$
where the well-posedness of elliptic problem is usually proven, and the
formula can be rigorously established for $C^1$ curve, see
\cite{musk,kellogg,fabes}.} it as follows
\begin{align*}
\psi (\mb x)=& \int_{\Gamma_{S}} G(\mb x,\mb y) |z_{S,e}|^{-1}\gamma_{S} \d \sigma(\mb y)+ \int_{\Gamma_{B}} G(\mb x,\mb y) |z_{B,e}|^{-1}\gamma_{B} \d \sigma(\mb y) + \int_{\D_{F}} G(\mb x,\mb y) \omega(\mb y) \d \mb y \\
=& \int_{0}^{L_{S}} G(\mb x,\mb z_{S}(e)) \gamma_{S}(e) \d e+ \int_{0}^{L_{B}} G(\mb x,\mb z_{B}(e)) \gamma_{B}(e) \d e + \int_{\D_{F}} G(\mb x,\mb y) \omega(\mb y) \d \mb y\,. 
\end{align*}
By the explicit formula of the Green kernel, we deduce from the previous
formula the Biot-Savart law which yields the velocity $\mb u=\nabla^\perp
\psi$ for all $x$ in $\D_{F}\cup \D_{A}\cup \D_{S}\,$: 
\begin{align}
\widehat{\mb u} (x)
= &
 \int_{0}^{L_{S}} \gamma_{S}(e) \widehat{ \nabla^{\perp}G}(\mb x-\mb z_{S}(e)) \d e 
 + \int_{0}^{L_{B}} \gamma_{B}(e) \widehat{ \nabla^{\perp}G}(\mb x-\mb z_{B}(e)) \d e \nonumber\\
&+ \int_{\D_{F}} \widehat{ \nabla^{\perp}G}(\mb x,\mb y) \omega(\mb y) \d \mb y \nonumber\\
 =& \int_{0}^{L_{S}} \gamma_{S}(e) \frac1{2L\ic} \cot\Big(\frac{x-z_{S}(e) }{L/\pi}\Big) \d e 
+\int_{0}^{L_{B}} \gamma_{B}(e) \frac1{2L\ic} \cot\Big(\frac{x-z_{B}(e) }{L/\pi}\Big) \d e, \label{BS-u}\\
&+ \int_{\D_{F}} \frac1{2L\ic} \cot\Big(\frac{x-y }{L/\pi}\Big) \omega(\mb y) \d \mb y \nonumber
\end{align}
because
\begin{equation}\label{cot-nabla}
 \widehat{\nabla^\perp_{\mb x}} G(\mb x) = \frac{ - \sinh\frac{x_{2}}{L/(2\pi)} -\ic\sin \frac{x_{1}}{L/(2\pi)} }{2L\Big( \cosh\frac{x_{2}}{L/(2\pi)}-\cos\frac{x_{1}}{L/(2\pi)}\Big)} = \frac1{2L\ic} \cot\Big(\frac{x_{1}+\ic x_{2}}{L/\pi}\Big) \,,
\end{equation}
where we have used that 
$-\sinh b-\ic\sin a = -\ic(\sin a-\sin(\ic b)) = -2\ic \sin \frac{a-\ic b}2 \cos \frac{a+\ic b}2$
and
$\cosh b-\cos a = \cos{\ic b}-\cos a = 2 \sin \frac{a-\ic b}2 \sin \frac{a+\ic b}2\,$.
This formula with cotangent kernel is singular when $x$ goes to the
boundary $\Gamma_{S}\cup \Gamma_{B} \, .$ This is natural because it encodes
the jump of the tangential part of the velocity. The limit formula, the so
called Plemelj formulae, will play a crucial role in the sequel and are
recalled in Appendix~\ref{app-cot}. Another key tool presented in this
Appendix is the following desingularization rule
\begin{multline}\label{eq.desing}
 {\rm pv}\int \cot\Big(\frac{z(e)-z(e')}{L/\pi}\Big) f(e')\d e' \\= \int
 \cot\Big(\frac{z(e)-z(e')}{L/\pi}\Big) \frac{ f(e')z_{e}(e)-f(e)z_{e}(e')
 }{z_{e}(e)} \d e' \, ,
\end{multline}
because it transforms a principal value integral into a classical integral of
a smooth function. This exact relation will be systematically used in order
to handle regular terms, which can be integrated with greater accuracy,
resulting in improved stability. It is worth stressing that this desingularization
does not alter the accuracy of the scheme, as opposed to regularization
technics.
We would like to stress again that this periodic Biot-Savart law is
formally related to the usual Biot-Savart law in $\R^2$: see
Appendix~\ref{app-cot}. 

Theorem~\ref{theo:main-ellip} is stated for bounded vorticities. Because
the resulting equations will be later discretised, we restrict our
attention below to a vorticity $\omega$ composed of a constant part
$\omega_{0}\mathds{1}_{\D_{F}}$ and a part that we approximate by a sum of
Dirac masses $\sum_{j=1}^{N_{v}} \gamma_{v,j} \delta_{z_{v,j}(t)} \,$, see
\cite{GHL90}.

The velocity generated by the Dirac masses is simply 
$
 \frac1{2L\ic} \sum_{j=1}^{N_{v}}\gamma_{v,j} \cot\Big(\frac{x-z_{v,j} }{L/\pi}\Big)
\,.$
The velocity associated to the constant part can be simplified thanks to an integration by parts
\begin{align*}
 \int_{\D_{F}} \nabla^\perp G(\mb x-\mb y) \omega_{0} \d \mb y 
 =& \omega_{0}\begin{pmatrix}
- \int_{\D_{F}}( \nabla G) (\mb x-\mb y) \cdot \mb e_{2}(\mb y) \d \mb y \\
 \int_{\D_{F}} (\nabla G) (\mb x-\mb y) \cdot \mb e_{1} (\mb y) \d \mb y
\end{pmatrix}\\
=& \omega_{0}\begin{pmatrix}
\int_{\D_{F}}\nabla_{y}( G (\mb x-\mb y)) \cdot \mb e_{2}(\mb y) \d \mb y \\
- \int_{\D_{F}} \nabla_{y}( G (\mb x-\mb y)) \cdot \mb e_{1} (\mb y) \d \mb y
\end{pmatrix}\\
=& \omega_{0} \begin{pmatrix}
 \int_{\partial\D_{F}} G (\mb x-\mb y) \mb e_{2} (\mb y)\cdot \tilde{\mb n}_{F} (\mb y)\d \sigma(\mb y) \\
 -\int_{\partial\D_{F}} G (\mb x-\mb y) \mb e_{1} (\mb y)\cdot \tilde{\mb n}_{F}(\mb y) \d \sigma(\mb y)
\end{pmatrix}\\
= & -\frac{ \omega_{0}}{4\pi}\int_{\partial\D_{F}} \ln\Big(\cosh\frac{x_{2}-y_{2}}{L/(2\pi)} -\cos\frac{x_{1}-y_{1}}{L/(2\pi)} \Big) \tilde{\mb n}_{F}^\perp(\mb y) \d \sigma(\mb y)
\end{align*}
where $\tilde{\mb n}_{F}$ is the unit normal vector outward to $\D_{F}$. This implies that
\begin{align}
\int_{\D_{F}} \frac1{2L\ic} \cot&\Big(\frac{x-y }{L/\pi}\Big) \omega_{0} \d \mb y \nonumber\\
=&
\frac{ \omega_{0}}{4\pi}\int_{0}^{L_{S}} \ln\Big(\cosh \Imag
\frac{x-z_{S}(e)}{L/(2\pi)} -\cos \Real \frac{x-z_{S}(e)}{L/(2\pi)} \Big)
\overline{z_{S,e}(e)} \d e \nonumber\\
&-\frac{ \omega_{0}}{4\pi}\int_{0}^{L_{B}} \ln\Big(\cosh \Imag \frac{x-z_{B}(e)}{L/(2\pi)} -\cos \Real \frac{x-z_{B}(e)}{L/(2\pi)} \Big) \overline{z_{B,e}(e)} \d e\,,
\label{IPP}
\end{align}
which is well defined and continuous in $\D\,$. 

Let us note that we can compute $\mb u_{\omega,\gamma}$ 
 for $\omega= \omega_{0}+\sum \gamma_{v,j}\delta_{z_{v,j}}$ in the same way.

Therefore, we have a complete formula \eqref{BS-u} which gives $\mb u=\nabla^\perp \psi$ in terms of $\omega\,$, $\gamma_{S}$ and $\gamma_{B}\,$, which will be used for the vortex formulation.

For the dipole formulation, we have, exactly in the same way,
\begin{equation}\label{BS-uR}
\begin{aligned}
 \widehat{\mb u_{R}} (x)
=& \int_{0}^{L_{S}} \tilde\gamma_{S}(e) \frac1{2L\ic} \cot\Big(\frac{x-z_{S}(e) }{L/\pi}\Big) \d e \\
&+\int_{0}^{L_{B}} \tilde\gamma_{B}(e) \frac1{2L\ic} \cot\Big(\frac{x-z_{B}(e) }{L/\pi}\Big) \d e\,,
\end{aligned}
 \end{equation}
where
\begin{align*}
 \tilde\gamma_{S}(e) &: =|z_{S,e}(e)|[\lim_{\mb z\in \D_{F}\to \mb z_{S}(e)} \mb u_{R} - \lim_{\mb z\in \D_{A}\to \mb z_{S}(e)} \mb u_{R}]\cdot \mb \tau(e) \\
\tilde\gamma_{B}(e) & :=-|z_{B,e}(e)|[\lim_{\mb z\in \D_{F}\to \mb z_{B}(e)} \mb u_{R}]\cdot \mb \tau(e)\,.
\end{align*}

In the previous subsection, we have defined $\mb u_{R}$ and the extension
such that $\mb u_{R}=\nabla \phi$ in $\D_{B}\cup \D_{F}\cap \D_{A}$ where
we have the choice to fix one constant by connected component. As the mean
values of $\tilde\gamma_{S}$ and $\tilde\gamma_{B}$ are zero, we know from
the behavior at infinity \eqref{behinf} that $\nabla \phi =\mb u_R= \nabla^\perp \tilde
\psi$ goes to zero exponentially fast when $x_{2}\to \infty\,$. In order to
control the boundary term in the following computation, we thus set the constant
in $\D_{A}$ such that $\phi$ goes to zero at infinity. In the same way, we
set the constant in $\D_{B}$ so that $\phi_{B}\to 0$ when
$x_{2}\to-\infty\,$.
As $\phi$ is not continuous across the interfaces and we
need the value from both side, we denote the restriction of $\phi$ in
$\D_{F}$ (resp. in $\D_{A}$ and in $\D_{B}$) by $\phi_{F}$ (resp. by
$\phi_{A}$ and $\phi_{B}$). For any $\mb x\in \D_{F}\,$, we compute 
\begin{align*}
 \phi(\mb x)=&\langle \phi_{F}, \Delta G(\cdot -\mb x) \rangle \\
 =& -\int_{\D_{F}} \nabla \phi_{F} (\mb y)\cdot \nabla G(\mb y-\mb x)\d \mb y 
 + \int_{\Gamma_{S}} \phi_{F}(\mb y) \partial_{n} G(\mb y-\mb x) \d \sigma(\mb y) \\
& - \int_{\Gamma_{B}} \phi_{F}(\mb y) \partial_{n} G(\mb y-\mb x) \d \sigma(\mb y)\\
 =& \int_{\Gamma_{S}} \Big(\phi_{F}(\mb y) \partial_{n} G(\mb y-\mb x) - \partial_{n} \phi_{F} (\mb y) G(\mb y-\mb x)\Big) \d \sigma(\mb y)\\
 & - \int_{\Gamma_{B}}\Big(\phi_{F}(\mb y) \partial_{n} G(\mb y-\mb x) - \partial_{n} \phi_{F} (\mb y) G(\mb y-\mb x) \Big) \d \sigma(\mb y)\,,
\end{align*}
where we keep in mind that $\mb n=\mb \tau^\perp$ is pointing outward on
$\Gamma_{S}$ whereas it is pointing inward on $\Gamma_{B}\,$. As
$\mb u_R \cdot \mb n$ is continuous, we have
\begin{align*}
 \phi(\mb x)  =& \int_{\Gamma_{S}} \Big(\phi_{F}(\mb y) \partial_{n} G(\mb y-\mb x) - \partial_{n} \phi_{A} (\mb y) G(\mb y-\mb x)\Big) \d \sigma(\mb y)\\
 & - \int_{\Gamma_{B}}\Big(\phi_{F}(\mb y) \partial_{n} G(\mb y-\mb x) - \partial_{n} \phi_{B} (\mb y) G(\mb y-\mb x) \Big) \d \sigma(\mb y)\,.
\end{align*}
As $\Delta G(\cdot-\mb x)=0$ in $\D_{A}\cup \D_{B}$  (for $\mb x\in \D_{F}$), we can integrate by parts in
the air and in the bottom domains as we did above in $\D_{F}$ to state
\begin{gather*}
 \int_{\Gamma_{S}} \Big(\phi_{A}(\mb y) \partial_{n} G(\mb y-\mb x) - \partial_{n} \phi_{A} (\mb y) G(\mb y-\mb x)\Big) \d \sigma(\mb y)=0\\
 \int_{\Gamma_{B}}\Big(\phi_{B}(\mb y) \partial_{n} G(\mb y-\mb x) - \partial_{n} \phi_{B} (\mb y) G(\mb y-\mb x) \Big) \d \sigma(\mb y)=0
\end{gather*}
where we have used the fact that $G(\mb x)=\mathcal{O}(x_{2})$
and $\nabla G(\mb x)=\mathcal{O}(1)$ at infinity. This implies
\begin{align*}
 \phi(\mb x)
& = \int_{\Gamma_{S}} ( \phi_{F}(\mb y)-\phi_{A}(\mb y) ) \partial_{n} G(\mb y-\mb x) \d \sigma(\mb y) - \int_{\Gamma_{B}} ( \phi_{F}(\mb y)-\phi_{B}(\mb y)) \partial_{n} G(\mb y-\mb x) \d \sigma(\mb y)\,.
\end{align*}
Doing a similar computation for $\mb x\in \D_{A}$: \begin{align*}
 \phi(\mb x)
 =&\langle \phi_{A}, \Delta G(\cdot -\mb x) \rangle 
 = -\int_{\D_{A}} \nabla \phi_{A} (\mb y)\cdot \nabla G(\mb y-\mb x)\d \mb y - \int_{\Gamma_{S}} \phi_{A}(\mb y) \partial_{n} G(\mb y-\mb x) \d \sigma(\mb y)\\
 = &\int_{\Gamma_{S}} \partial_{n} \phi_{A} (\mb y) G(\mb y-\mb x)\d \sigma(\mb y) - \int_{\Gamma_{S}} \phi_{A}(\mb y) \partial_{n} G(\mb y-\mb x) \d \sigma(\mb y)\\
 = & \int_{\Gamma_{S}} \partial_{n} \phi_{F} (\mb y) G(\mb y-\mb x)\d \sigma(\mb y)- \int_{\Gamma_{S}} \phi_{A}(\mb y) \partial_{n} G(\mb y-\mb x) \d \sigma(\mb y)\\
 = & \int_{\Gamma_{S}} \phi_{F}(\mb y) \partial_{n} G(\mb y-\mb x)\d \sigma(\mb y)- \int_{\Gamma_{S}} \phi_{A}(\mb y) \partial_{n} G(\mb y-\mb x) \d \sigma(\mb y)\\ 
 &- \int_{\Gamma_{B}}\Big(\phi_{F}(\mb y) \partial_{n} G(\mb y-\mb x) - \partial_{n} \phi_{F} (\mb y) G(\mb y-\mb x) \Big) \d \sigma(\mb y)\\
 = & \int_{\Gamma_{S}} \Big(\phi_{F}(\mb y) - \phi_{A}(\mb y)\Big) \partial_{n} G(\mb y-\mb x) \d \sigma(\mb y)\\ 
 &- \int_{\Gamma_{B}}\Big(\phi_{F}(\mb y) \partial_{n} G(\mb y-\mb x) - \partial_{n} \phi_{B} (\mb y) G(\mb y-\mb x) \Big) \d \sigma(\mb y)\\
 =& \int_{\Gamma_{S}} (\phi_{F} (\mb y)-\phi_{A} (\mb y)) \partial_{n}G(\mb y-\mb x)\d \sigma(\mb y) - \int_{\Gamma_{B}} (\phi_{F} (\mb y)-\phi_{B} (\mb y)) \partial_{n}G(\mb y-\mb x)\d \sigma(\mb y) \,,
\end{align*}
we notice that this formula holds true in $\D_{B}\cup \D_{F}\cup \D_{A}\,$.
So, we are computing now $\partial_{n}G(\mb y-\mb x)$
\begin{align*}
\nabla G(\mb y-\mb x) \cdot \mb n(\mb y) \d \sigma(\mb y) 
 =&\frac1{2L\Big( \cosh\frac{z_{2}(e)-x_{2}}{L/(2\pi)}-\cos\frac{z_{1}(e)-x_{1}}{L/(2\pi)}\Big)}
\begin{pmatrix}
\sin \frac{z_{1}(e)-x_{1}}{L/(2\pi)} \\ \sinh\frac{z_{2}(e)-x_{2}}{L/(2\pi)}
\end{pmatrix}
\cdot
\begin{pmatrix}
 -z_{2,e}(e) \\ z_{1,e}(e) 
\end{pmatrix}\d e
\\
=&-\Real \Bigg[\frac{-\sinh\frac{z_{2}(e)-x_{2}}{L/(2\pi)} - \ic \sin \frac{z_{1}(e)-x_{1}}{L/(2\pi)} } {2L\Big( \cosh\frac{z_{2}(e)-x_{2}}{L/(2\pi)}-\cos\frac{z_{1}(e)-x_{1}}{L/(2\pi)}\Big)}
z_{e}(e) \Bigg]\d e\\
=&-\Real \Bigg[\frac1{2L\ic} \cot\Big(\frac{z(e) - x}{L/\pi}\Big) z_{e}(e) \Bigg]\d e
\end{align*}
so, setting
\begin{equation}\label{def-mu}
\mu_{S}(e)=(\phi_{F} - \phi_{A})(z_{S}(e))\,, \quad \mu_{B}(e)=(\phi_{B}- \phi_{F}) (z_{B}(e))\,, 
\end{equation}
we finally get for any $\mb x\in \D_{F}\cup \D_{A} \cup \D_{B}$
\begin{align}
 \phi(\mb x)
 =&- \int_{0}^{L_{S}} \mu_{S}(e) \Real \Bigg[\frac1{2L\ic} \cot\Big(\frac{z_{S}(e) - x}{L/\pi}\Big) z_{S,e}(e) \Bigg]\d e \nonumber\\
&-\int_{0}^{L_{B}} \mu_{B}(e)\Real \Bigg[\frac1{2L\ic} \cot\Big(\frac{z_{B}(e) - x}{L/\pi}\Big) z_{B,e}(e) \Bigg]\d e \nonumber \\
 =&\int_{0}^{L_{S}} \mu_{S}(e) \Real \Bigg[\frac1{2L\ic} \cot\Big(\frac{x-z_{S}(e) }{L/\pi}\Big) z_{S,e}(e) \Bigg]\d e \label{phi-int}\\
&+\int_{0}^{L_{B}} \mu_{B}(e)\Real \Bigg[\frac1{2L\ic} \cot\Big(\frac{x-z_{B}(e) }{L/\pi}\Big) z_{B,e}(e) \Bigg]\d e\,.\nonumber
\end{align}
We note here that we did not provide any restriction on the constant for
$\phi_{F}$ so the previous formula holds true if we change $\phi_{F}$ (so
$\mu_{S}$ and $\mu_{B}$) by a constant. It is therefore possible 
to fix initially this constant in such a way that 
\begin{equation}\label{avgmuB}
\int_{0}^{L_{S}}\mu_{S,0}(e)\d e=0\,.
\end{equation}
This condition is not conserved in time.

Let us also note that with our extension and Assumption \eqref{u-fgamma-2},
we have $\phi_{B}=0$ in $\D_{B}\,$. 

It is also possible to derive the stream function $\tilde \psi$ from
$\mu_{S}$ and $\mu_{B}\,$. To do this, we first remark that 
\[
\mb u_R\cdot \mb \tau = \nabla \phi \cdot \mb \tau = |z_{e}|^{-1} \partial_{e} ( \phi(z)) \,,
\] 
hence
\[
\tilde\gamma_{S}(e)=\partial_{e} \Big[ \phi_{F}(z_{S}(e)) - \phi_{A}(z_{S}(e)) \Big] =\partial_{e} \mu_{S}(e)
\text{\quad and\quad}
\tilde\gamma_{B}(e)=-\partial_{e} \phi_{F}(z_{B}(e)) = \partial_{e} \mu_{B}(e)\,,
\]
and then for any constants $C_{S},C_{B}\in \R$
\begin{align*}
 \tilde \psi(x)
& = \int_{0}^{L_{S}} \partial_{e}(\mu_{S}(e)+C_{S}) G(\mb z_{S}(e)-\mb x) \d e + \int_{0}^{L_{B}} \partial_{e}( \mu_{B}(e)+C_{B}) G(\mb z_{B}(e)-\mb x) \d e\\
& = - \int_{0}^{L_{S}}(\mu_{S}(e)+C_{S}) \partial_{e} \Big(G(\mb z_{S}(e)-\mb x)\Big) \d e - \int_{0}^{L_{B}} (\mu_{B}(e)+C_{B}) \partial_{e} \Big(G(\mb z_{B}(e)-\mb x)\Big) \d e\,.
\end{align*}
So we need to compute
\begin{align*}
 \nabla G(\mb z(e)-\mb x )\cdot 
\begin{pmatrix}
 z_{e,1}\\z_{e,2}
\end{pmatrix}
=&
\frac1{2L\Big( \cosh\frac{z_{2}(e)-x_{2}}{L/(2\pi)}-\cos\frac{z_{1}(e)-x_{1}}{L/(2\pi)}\Big)}
\begin{pmatrix}
\sin \frac{z_{1}(e)-x_{1}}{L/(2\pi)} \\ \sinh\frac{z_{2}(e)-x_{2}}{L/(2\pi)}
\end{pmatrix}
\cdot
\begin{pmatrix}
 z_{e,1}(e) \\ z_{e,2}(e)
\end{pmatrix}
\\
=&-\Imag \Bigg[\frac{-\sinh\frac{z_{2}(e)-x_{2}}{L/(2\pi)} - \ic \sin \frac{z_{1}(e)-x_{1}}{L/(2\pi)} } {2L\Big( \cosh\frac{z_{2}(e)-x_{2}}{L/(2\pi)}-\cos\frac{z_{1}(e)-x_{1}}{L/(2\pi)}\Big)}
z_{e}(e) \Bigg] \\
=&-\Imag \Bigg[\frac1{2L\ic} \cot\Big(\frac{z(e)-x}{L/\pi}\Big)
z_{e}(e) \Bigg]
\end{align*}
and we finally get 
\begin{align}
\tilde \psi (x)
 =&\int_{0}^{L_{S}} (\mu_{S}(e)+C_{S}) \Imag \Bigg[\frac1{2L\ic} \cot\Big(\frac{z_{S}(e) - x}{L/\pi}\Big) z_{S,e}(e) \Bigg]\d e \nonumber\\
&+\int_{0}^{L_{B}} (\mu_{B}(e)+C_{B})\Imag \Bigg[\frac1{2L\ic} \cot\Big(\frac{z_{B}(e) - x}{L/\pi}\Big) z_{B,e}(e) \Bigg]\d e\nonumber \\
=&-\int_{0}^{L_{S}} (\mu_{S}(e)+C_{S}) \Imag \Bigg[\frac1{2L\ic} \cot\Big( \frac{x-z_{S}(e) }{L/\pi}\Big) z_{S,e}(e) \Bigg]\d e \label{psi-mu}\\
&-\int_{0}^{L_{B}} (\mu_{B}(e)+C_{B})\Imag \Bigg[\frac1{2L\ic}
 \cot\Big(\frac{ x - z_{B}(e) }{L/\pi}\Big) z_{B,e}(e) \Bigg]\d e
\nonumber\,.
\end{align}

As this formula is valid for any values of $C_{B}$ and $C_{S}\,$, it holds
true for $C_{S}=C_{B}=0$ and besides
\[
\int_{0}^{L_{S}} \Imag \Bigg[\frac1{2L\ic} \cot\Big( \frac{x-z_{S}(e)
 }{L/\pi}\Big) z_{S,e}(e) \Bigg]\,\d e = \int_{0}^{L_{B}}\Imag
\Bigg[\frac1{2L\ic} \cot\Big(\frac{ x - z_{B}(e) }{L/\pi}\Big) z_{B,e}(e)
 \Bigg]\, \d e =0\,.
\]

For Section~\ref{sec-bernoulli}, it will be convenient to introduce the
quantity
\[
\Phi_{S}(e)=(\phi_{F} + \phi_{A})(z_{S}(e)) \, ,
\]
which is complementary to $\mu_{S}\, ,$ 
and which can be expressed thanks to the formula giving $\phi$ and the
limit formula (see Appendix~\ref{app-cot}) 
\begin{align*}
 \Phi_{S}(e)
 =& \int_{0}^{L_{S}} \mu_{S}(e') \Real \Bigg[\frac1{L\ic} \cot\Big(\frac{z_{S}(e)-z_{S}(e') }{L/\pi}\Big) z_{S,e}(e') \Bigg]\d e'\\
&+\int_{0}^{L_{B}} \mu_{B}(e')\Real \Bigg[\frac1{L\ic} \cot\Big(\frac{z_{S}(e)-z_{B}(e') }{L/\pi}\Big) z_{B,e}(e') \Bigg]\d e'\,.
\end{align*}
Computing the limit for $e'\to e\,$, we note that the first
integral is a classical integral of a continuous function, where the
extension for $e'=e$ is 
\[
-\mu_{S}(e) \Real \Big[ \frac{z_{S,ee}(e)}{2\pi \ic z_{S,e}(e)}\Big].
\]
Even if this integral could be well approximated by Riemann sum for smooth
fluid surface, it occurs that the following
formula will be convenient to get non singular integrals, taking advantage
of the desingularization \eqref{eq.desing}
\begin{multline}\label{phi-mu}
 \Phi_{S}(e)
 = \int_{0}^{L_{S}} ( \mu_{S}(e') - \mu_{S}(e)) \Real \Bigg[\frac1{L\ic} \cot\Big(\frac{z_{S}(e)-z_{S}(e') }{L/\pi}\Big) z_{S,e}(e') \Bigg]\d e'\\
+\int_{0}^{L_{B}} \mu_{B}(e')\Real \Bigg[\frac1{L\ic} \cot\Big(\frac{z_{S}(e)-z_{B}(e') }{L/\pi}\Big) z_{B,e}(e') \Bigg]\d e' 
\end{multline}
which is then extended for $e=e'$ by zero.

We conclude this section with one last compatibility condition, which is not used in
this article but will be used in a forthcoming article. 
\begin{remark}
The function $z\mapsto \phi-\ic \tilde \psi$ is harmonic in $\D_{B}\cup
\D_{F}\cup \D_{A}\,$, hence the integrals along two curves going from left
to right are the same if both curves are included in the same connected
component. With the limit at infinity, it is clear that 
 \[
 \int_{0}^{L_{S}}\Big(\phi_{A}(z_{S}(e))-\ic\tilde \psi(z_{S}(e)) \Big)z_{S,e}(e)\d e=-\ic L\lim_{x_{2}\to+\infty} \tilde \psi\,.
 \]
 whereas 
 \[
 \int_{0}^{L_{B}}\Big(\phi_{B}(z_{B}(e))-\ic\tilde \psi(z_{B}(e)) \Big)z_{B,e}(e)\d e=-\ic L \lim_{x_{2}\to-\infty} \tilde \psi = \ic L \lim_{x_{2}\to+\infty} \tilde \psi\,.
 \]
Inside the fluid we have
\[
 \int_{0}^{L_{S}}\Big(\phi_{F}(z_{S}(e))-\ic\tilde \psi(z_{S}(e)) \Big)z_{S,e}(e)\d e = \int_{0}^{L_{B}}\Big(\phi_{F}(z_{B}(e))-\ic\tilde \psi(z_{B}(e)) \Big)z_{B,e}(e)\d e\,.
\]
By continuity of the stream function, we get
\begin{align*}
 \int_{0}^{L_{S}}\mu_{S}(e) z_{S,e}(e) \d e=& \int_{0}^{L_{S}} (\phi_{F} - \phi_{A})(z_{S}(e))z_{S,e}(e) \d e\\
 =&\int_{0}^{L_{S}}\Big(\phi_{F}(z_{S}(e))-\ic\tilde \psi(z_{S}(e)) \Big)z_{S,e}(e)\d e\\
 & - \int_{0}^{L_{S}}\Big(\phi_{A}(z_{S}(e))-\ic\tilde \psi(z_{S}(e)) \Big)z_{S,e}(e)\d e \\
 =&\int_{0}^{L_{B}}\Big(\phi_{F}(z_{B}(e))-\ic\tilde \psi(z_{B}(e)) \Big)z_{B,e}(e)\d e + \ic L\lim_{x_{2}\to+\infty} \tilde \psi\\
 =& \int_{0}^{L_{B}} (\phi_{F}-\phi_{B})(z_{B}(e)) z_{B,e}(e)\d e+ 2\ic L\lim_{x_{2}\to+\infty} \tilde \psi\\
 =&- \int_{0}^{L_{B}}\mu_{B}(e) z_{B,e}(e) \d e+ 2\ic L\lim_{x_{2}\to\infty} \tilde \psi
 \end{align*}
we thus have for all time
\begin{equation} 
 \int_{0}^{L_{S}}\mu_{S}(e) \Real\Big[z_{S,e}(e) \Big]\d e=- \int_{0}^{L_{B}}\mu_{B}(e) \Real\Big[z_{B,e}(e) \Big]\d e\,.
 \end{equation} 
\end{remark}

\begin{remark}\label{rem:D2N}
The celebrated Dirichlet to Neumann operator in the Zakharov-Craig-Sulem
formulation \cite{Zakharov68,CraigSulemSulem} is very close to the dipole derivation. For $\varphi\in
H^{1/2}(\Gamma_{S})$ given, the principle is indeed to find
$u_{R}=\nabla\phi_F=\nabla^\perp \tilde \psi$ such that 
\[
\Delta \phi_{F}=0\text{ in } \D_{F}\ ,\ \partial_{n} \phi_{F}=0\text{ on } \Gamma_{B}\ , \ \phi_{F}=\varphi\text{ on } \Gamma_{S}\, .
\]
Extending as we did $\tilde\psi$ by continuity and defining $\phi$, we can represent $\phi$ through the singular representation formulation \eqref{phi-int}. Therefore, we should first find uniquely $\mu_{S}$ and $\mu_{B}$ such that $\phi_{B}=0$ on $\Gamma_{B}$ and $\phi_{F}=\varphi$ on $\Gamma_{S}$ thanks to the limit formulae of Appendix~\ref{app-cot} (see \eqref{muB-muS} for this kind of application). With $(\mu_{S},\mu_{B})$ found, we differentiate in order to get $(\tilde\gamma_{S},\tilde\gamma_{B})$ which allow us to construct $u_{R}$ \eqref{BS-uR}, hence $\partial_{n}\phi_{F}\vert_{\Gamma_{S}}$ again with the limit formulae. This ends the definition of the Dirichlet to Neumann operator $\varphi\mapsto \partial_{n}\phi_{F}\vert_{\Gamma_{S}}$.
\end{remark}

\subsection{Discussion about $\mb u_{\omega,\gamma}$ for the dipole formulation.}\label{sec-uog}

The simplest case that we will study in details is the case where $\omega=\gamma=0$ where we choose of course $\mb u_{\omega,\gamma}=0$. Then $\mb u = \nabla \phi_{F}=\nabla^\perp \tilde\psi_{F}$ is naturally defined in the full domain $\D$. In this easy case, we can consider both the bi-fluid water-wave
equation, in which the air is assumed to be an incompressible fluid with a non-zero density, or
the single-fluid water-waves equations, where we neglect the density of the
air in $\D_{A}\,$.

When we have some vorticity or background current, we need to discuss the expression of $\mb u_{\omega,\gamma}$ in $\D_{A}$ to infer the value of $\mb u$ in the air.
There are essentially two natural options:
\begin{itemize}
 \item either to have an explicit formula for $\mb u_{\omega,\gamma}\,$, or
 at least assume it is independent of time;
 \item or to extend by zero.
\end{itemize}
The choice depends on whether we need the physical air velocity, i.e. if we consider the single fluid or the bi-fluid water-wave
equation.

In the latter case (single-fluid), we do not need to know the velocity in the air, and we can simply set
\[
\mb u_{\omega,\gamma} = \frac{\gamma}{L}\mb e_{1} \mathds{1}_{\D_{F}}\quad \text{if
 the bottom is flat and }\omega=0 \,.
\]
Then we cannot say that $\mb u_{\omega,\gamma} + \nabla^\perp \tilde\psi$ defines
the velocity in the air, because the normal part of the velocity is not
continuous. A natural idea would then be to set
\[
\mb u_{\omega,\gamma} = \frac{\gamma}{L}\mb e_{1} \chi(x_{2})\quad \text{if the bottom is flat and }\omega=0\,.
\]
If we choose $\chi(x_{2})= 1$ for all $x_{2}\,$, this implies that a
non-physical circulation is present in the air, which is
equal to the circulation in the water.
Alternatively if $\chi(x_{2})$ is chosen to decay smoothly from $1$ near
the interface to $0$ at infinity, this implies a strange, also non-physical, vorticity in the air
$\curl \mb u=-\frac{\gamma}{L} \chi'(x_{2}) \,.$
Both cases do not correspond to the actual air velocity.
Hence, in the limiting case of vanishing air density, we can use this simpler
expression for the velocity $\mb u_{\omega,\gamma}\, .$
The air velocity can, however, not be reconstructed in that case (as it
does not influence the interface evolution). Therefore, we will consider later the case of the single-fluid water-waves equation without vorticity but with background current, constructing the time independent $\mb H$ solving
 \[
 \div \mb H=\curl\mb H=0 \text{ in }\D_{F}\cup \Gamma_{S}\cup \D_{A}\,, \quad \mb H\cdot\mb n=0 \text{ on }\Gamma_{B}\,,\quad \int_{\Gamma_{B}} \mb H\cdot \mb \tau=\gamma
 \]
and setting $\mb u_{\omega,\gamma} =\mb H\,$. Namely, we set
 \[
\widehat{\mb u_{\omega,\gamma}} (x)= \widehat{\mb H} (x)=\int_{0}^{L_{B}} \gamma_{B,H}(e) \frac1{2L\ic} \cot\Big(\frac{x-z_{B}(e) }{L/\pi}\Big) \d e
\]
where $\gamma_{B,H}$ is the unique\footnote{See Section~\ref{sec-gammaB}.} solution of
\[
{\rm pv} \int_{0}^{L_{B}} \gamma_{B,H}(e') \Imag \Bigg[ \frac{z_{B,e}(e)}{2L\ic} \cot\Big(\frac{z_{B}(e)-z_{B}(e') }{L/\pi}\Big)\Bigg] \d e'
 =0 \, ,
 \]
with
\begin{equation*}
 \int_{0}^{L_{B}} \gamma_{B,H}(e') \d e' = - 2\gamma \,.
\end{equation*}
Indeed, $\widehat{\mb u_{\omega,\gamma}}$ constructed in this way has a
circulation $\gamma$ in $\D_{F}\cup \Gamma_{S}\cup \D_{A}$ and $-\gamma$ in
$\D_{B}\,$, which is compatible with the limit behavior of the stream
function associated to a vorticity which has a non vanishing mean value (see
Proposition~\ref{prop-Green}). For the flat bottom, we recover $\mb H = \frac{\gamma}{L}\mb e_{1}$ in $\D_{F}\cup \Gamma_{S}\cup \D_{A}$ and $\mb H = -\frac{\gamma}{L}\mb e_{1}$ in $\D_{S}$.

In the case of the single-fluid water-waves equations with a flat
 bottom $\Gamma_{B}=\T_{L}\times \{ -h_{0}\}\,$, we could think to the simple formula coming from the image method:
\[
\widehat{\mb u_{\omega,\gamma} }(x)
= 
 \frac{\gamma}{L} + \int_{\D} \frac1{2L\ic} \cot\Big(\frac{x-y }{L/\pi}\Big) (\omega \mathds{1}_{\D_{F}}+\tilde \omega \mathds{1}_{\D_{B}})(\mb y) \d \mb y 
\]
where $\tilde \omega(x_{1},x_{2}):=-\omega(x_{1}, -x_{2}-2h_{0})\,$. Hence,
we have after two integrations by parts
\begin{align*}
\widehat{\mb u_{\omega,\gamma}} (x) 
=& \frac{\gamma}{L} + \frac1{2L\ic} \sum_{j=1}^{N_{v}}\gamma_{v,j} \Bigg( \cot\Big(\frac{x-z_{v,j} }{L/\pi} \Big)- \cot\Big(\frac{x-\overline{z_{v,j}}+2\ic h_{0} }{L/\pi} \Big) \Bigg)\\
&+\frac{ \omega_{0}}{4\pi}\int_{0}^{L_{S}} \ln\Big(\cosh \Imag \frac{x-z_{S}(e)}{L/(2\pi)} -\cos \Real \frac{x-z_{S}(e)}{L/(2\pi)} \Big) \overline{z_{S,e}(e)} \d e\\
&-\frac{ \omega_{0}}{2\pi}\int_{0}^{L_{B}} \ln\Big(\cosh \Imag \frac{x-z_{B}(e)}{L/(2\pi)} -\cos \Real \frac{x-z_{B}(e)}{L/(2\pi)} \Big) \overline{z_{B,e}(e)} \d e\\
&+\frac{ \omega_{0}}{4\pi}\int_{0}^{L_{S}} \ln\Big(\cosh \Imag \frac{x-\overline{z_{S}(e)}+2\ic h_{0}}{L/(2\pi)} -\cos \Real \frac{x-\overline{z_{S}(e)}+2\ic h_{0}}{L/(2\pi)} \Big) z_{S,e}(e) \d e\,.
\end{align*}
Unfortunately, we will observe later (see Section~\ref{sec-bernoulli}) that this approach is
unpractical.

In the first case (bi-fluid formulation), if we want to
extend $\mb u$ in such a way that the normal component of the velocity is
continuous and $\div \mb u=\curl \mb u=0$ in $\D_{A}\,$, we then have to
solve at any time an elliptic problem in $\D_{A}$ to extend the flow
$\mb u_{\omega,\gamma}$ in the correct way.
Alternatively, we could prefer to extend $\mb u_{\omega,\gamma}$ by zero.
In this case, we would need to add the following condition
\begin{equation}\label{u-fgamma-3}
\mb u_{\omega,\gamma} \cdot \mb n =0 \text{ on } \Gamma_{S} \,.
\end{equation}
and solve at any time the elliptic problem \eqref{u-fgamma-1} in $\D_{F}$
with \eqref{u-fgamma-2} and \eqref{u-fgamma-3}. Namely, we set
\begin{align*}
\widehat{\mb u_{\omega,\gamma}} (x)=&\int_{0}^{L_{S}} \gamma_{S,\omega,\gamma}(e) \frac1{2L\ic} \cot\Big(\frac{x-z_{S}(e) }{L/\pi}\Big) \d e 
+ \int_{0}^{L_{B}} \gamma_{B,\omega,\gamma}(e) \frac1{2L\ic} \cot\Big(\frac{x-z_{B}(e) }{L/\pi}\Big) \d e \\
&+ \int_{\D_{F}} \frac1{2L\ic} \cot\Big(\frac{x-y }{L/\pi}\Big) \omega (\mb y) \d \mb y 
\end{align*}
where $(\gamma_{S,\omega,\gamma}\,, \gamma_{B,\omega,\gamma})$ is properly constructed at each time. It will also be observed in Section~\ref{sec-bernoulli} that this approach is too complicated. In the case of a bi-fluid formulation with circulation, or with internal vorticity, the vortex method will be preferred.

The dipole formulation can however be considered only in two cases: the
case of a bi-fluid formulation in the absence of both internal vorticity
and circulation or the case of a single fluid formulation in the absence of
internal vorticity.

This concludes the proof of Theorem~\ref{theo:main-ellip} since the
solution of an elliptic problem in a $C^{1,1}$ domain belongs to
$\cap_{p>2} W^{2,p}$ when $\omega \in L^{\infty} \, .$ This justifies that
the velocity and then $\gamma$ is $C^{0,\alpha} $
for all $\alpha \in (0,1)\, .$
We recall that $\gamma_S = \partial \mu_S/\partial e \, ,$ which yields
higher regularity on $\mu_S \, .$

Note that $\gamma_S \in C^{0,\alpha}$ is enough to be able to reformulate
principal value integrals in the form of classical integrals \eqref{eq.desing}.

\section{Evolution of water-waves}\label{sec3}

The bottom $z_{B}$ and the constant part of the vorticity $\omega_{0}$ are
initially given. At any time, for a given $(z_{v,j})_{j=1,\dots,N_{v}}$ and
$z_{S}\,$, we have established in Section~\ref{sec-elliptic} the existence of
$(\gamma_{S},\gamma_{B})$ or $(\mu_{S},\mu_{B})$ from $g\,$. Conversely, from $\gamma_{S}$ or $\mu_{S}\,$,
we will first show that there is a unique $\gamma_{B}$ or $\mu_{B}$ satisfying
the boundary conditions at the bottom, and hence proving Remark~\ref{rem:main-ellip}. We will thus use
Section~\ref{sec-BS} to get the velocity everywhere, and then deduce the 
displacements of the point vortex and the free surface: $\partial_{t}
z_{v,j}$ and $\partial_{t}z_{S}\,$. The last step is to use the Euler or
the Bernoulli equations to determine $\partial_{t} \gamma_{S}$ or
$\partial_{t} \mu_{S}\, ,$ namely proving Theorem~\ref{theo:main-dyn}. 

Therefore, if we know $g\,$ initially, we can construct
$(\gamma_{S,0},\gamma_{B,0})$ or $(\tilde\gamma_{S,0}, \tilde\gamma_{B,0})$ such that the
corresponding velocity \eqref{BS-u} or \eqref{BS-uR} verifies the correct
boundary conditions. From $\tilde\gamma_{S,0}$ we will construct $\mu_{S,0}$
as the primitive of $\tilde\gamma_{S,0}$ with zero mean.
For $t>0\,$, the main numerical strategy can be summarized as
\begin{itemize}
 \item for the vortex method:
 \[
\Big(z_{S}\,,(z_{v,j})_{j}\,, \gamma_{S}\Big) \mapsto \Big(z_{S}\,,(z_{v,j})_{j} \,, \gamma_{S}\,, \gamma_{B}\Big) \mapsto \Big(\partial_{t}z_{S}\,,(\partial_{t}z_{v,j})_{j} \,, \partial_{t}\gamma_{S}\Big) \, ;
\]
 \item for the dipole method:
 \[
\Big(z_{S}\,,(z_{v,j})_{j} \,, \mu_{S}\Big) \mapsto \Big(z_{S}\,,(z_{v,j})_{j} \,, \mu_{S}\,, \mu_{B}\Big) \mapsto \Big(\partial_{t}z_{S}\,,(\partial_{t}z_{v,j})_{j} \,, \partial_{t}\mu_{S}\Big) \,.
\]
\end{itemize}

\subsection{Determination of $\gamma$ or $\mu$ from the boundary condition}\label{sec-gammaBmuB}

The quantities $z_{B}\,,$ $\gamma\,,$ $\omega_{0}\,,$ $(\gamma_{v,j})_{j}\,
, $ $g \, ,$ are
given by the initial conditions, and we want to solve 
\begin{itemize}
 \item $(z_{S,0}\,,(z_{v,j,0})_{j} \,, g_{0})\mapsto \gamma_{S,0}$ for the initial setting in the vortex formulation;
 \item $(z_{S}\,,(z_{v,j})_{j}\,, \gamma_{S})\mapsto \gamma_{B}$ for every time step in the vortex formulation; 
 \item $(z_{S,0}\,, u_{\omega,\gamma,0}\,,g_{0})\mapsto \tilde\gamma_{S,0}\mapsto \mu_{S,0}$ for the initial setting in the dipole formulation;
 \item $(z_{S}\,, u_{\omega,\gamma}\,,\mu_{S})\mapsto \mu_{B}$ for every time step in the dipole formulation.
\end{itemize}

\subsubsection{Initial $\gamma_{S,0}$ for the vortex formulation}\label{sec-gammaS0}

In many situation, such as solitary waves, 
$z_{B}\,,z_{S}\,, g=\mb u_{F}\cdot \mb n\vert_{\Gamma_{S}}\,, \gamma\,,
\omega_{0}$ and $(\gamma_{v,j}\,,z_{v,j})_{j=1,\dots, N_{v}}\,$ are known initially. By
uniqueness of the elliptic problem (see \eqref{ellip1} with our extension
\eqref{vortex-stream}), we know that there exists a unique pair
$(\gamma_{S}\,,\gamma_{B})$ such that the normal velocity $\mb u\cdot \mb
n=-\Imag (\hat{\mb u}\frac{z_{S,e}}{|z_{S,e}|})$ verifies the proper boundary
condition on the free surface. Using \eqref{BS-u},
\begin{multline*}
{\rm pv}\int_{0}^{L_{S}} \gamma_{S}(e') \Imag \Bigg[\frac{z_{S,e}(e)}{2L\ic} \cot\Big(\frac{z_{S}(e)-z_{S}(e') }{L/\pi}\Big)\Bigg] \d e'\\
+
 \int_{0}^{L_{B}} \gamma_{B}(e') \Imag \Bigg[ \frac{z_{S,e}(e)}{2L\ic} \cot\Big(\frac{z_{S}(e)-z_{B}(e') }{L/\pi}\Big)\Bigg] \d e'
 = {\rm RHS}_{V0,S}(e) \, ,
 \end{multline*}
where thanks to \eqref{IPP}
\begin{align*}
 &\hspace{-10pt}{\rm RHS}_{V0,S}(e) \\=& -g |z_{S}(e)| - \int_{\D_{F}} \Imag \Bigg[\frac{z_{S,e}(e)}{2L\ic} \cot\Big(\frac{z_{S}(e)-y }{L/\pi}\Big)\Bigg] \omega_{0} \d \mb y \\
 =& -g |z_{S}(e)|
- \sum_{j=1}^{N_{v}}\gamma_{v,j} \Imag \Bigg[\frac{z_{S,e}(e)}{2L\ic} \cot\Big(\frac{z_{S}(e)-z_{v,j} }{L/\pi}\Big) \Bigg] \\
&-\frac{ \omega_{0}}{4\pi}\int_{0}^{L_{S}} \ln\Big(\cosh \Imag \frac{z_{S}(e)-z_{S}(e')}{L/(2\pi)} -\cos \Real \frac{z_{S}(e)-z_{S}(e')}{L/(2\pi)} \Big) \Imag \Bigg[z_{S,e}(e)\overline{z_{S,e}(e')}\Bigg] \d e'\\
&+\frac{ \omega_{0}}{4\pi}\int_{0}^{L_{B}} \ln\Big(\cosh \Imag \frac{z_{S}(e)-z_{B}(e')}{L/(2\pi)} -\cos \Real \frac{z_{S}(e)-z_{B}(e')}{L/(2\pi)} \Big) \Imag \Bigg[z_{S,e}(e)\overline{z_{B,e}(e')}\Bigg] \d e'\, ;
\end{align*}
on the bottom:
\begin{multline*}
\int_{0}^{L_{S}} \gamma_{S}(e') \Imag \Bigg[\frac{z_{B,e}(e)}{2L\ic} \cot\Big(\frac{z_{B}(e)-z_{S}(e') }{L/\pi}\Big)\Bigg] \d e'\\
+
{\rm pv} \int_{0}^{L_{B}} \gamma_{B}(e') \Imag \Bigg[ \frac{z_{B,e}(e)}{2L\ic} \cot\Big(\frac{z_{B}(e)-z_{B}(e') }{L/\pi}\Big)\Bigg] \d e'
 = {\rm RHS}_{V0,B}(e)
 \end{multline*}
 where 
\begin{align*}
 &\hspace{-10pt}{\rm RHS}_{V0,B}(e)\\
 =& -\int_{\D_{F}} \Imag \Bigg[\frac{z_{B,e}(e)}{2L\ic} \cot\Big(\frac{z_{B}(e)-y }{L/\pi}\Big)\Bigg] \omega_{0} \d \mb y \\
 =& 
- \sum_{j=1}^{N_{v}}\gamma_{v,j} \Imag \Bigg[\frac{z_{B,e}(e)}{2L\ic} \cot\Big(\frac{z_{B}(e)-z_{v,j} }{L/\pi}\Big) \Bigg] \\
&-\frac{ \omega_{0}}{4\pi}\int_{0}^{L_{S}} \ln\Big(\cosh \Imag \frac{z_{B}(e)-z_{S}(e')}{L/(2\pi)} -\cos \Real \frac{z_{B}(e)-z_{S}(e')}{L/(2\pi)} \Big) \Imag \Bigg[z_{B,e}(e)\overline{z_{S,e}(e')}\Bigg] \d e'\\
&+\frac{ \omega_{0}}{4\pi}\int_{0}^{L_{B}} \ln\Big(\cosh \Imag \frac{z_{B}(e)-z_{B}(e')}{L/(2\pi)} -\cos \Real \frac{z_{B}(e)-z_{B}(e')}{L/(2\pi)} \Big) \Imag \Bigg[z_{B,e}(e)\overline{z_{B,e}(e')}\Bigg] \d e'\,,
\end{align*}
together with the circulation assumptions:
\begin{gather*}
 \int_{0}^{L_{S}} \gamma_{S}(e') \d e' =\gamma-\omega_{0} | \D_{F}| - \sum_{j}\gamma_{v,j} \\
 \int_{0}^{L_{B}} \gamma_{B}(e') \d e' = - \gamma\,.
\end{gather*}

The existence and uniqueness of a solution is related to the operator $B$
in \cite{ADL}, and Section~\ref{sec-discret} will detail how
these integrals can be discretized, ensuring that the resulting matrices are invertible.

\subsubsection{Time dependent $\gamma_{B}$ for the vortex formulation}\label{sec-gammaB}

For any given time, knowing $z_{B}\,$,$ \gamma\,$,
$\omega_{0}\,$, $(\gamma_{v,j})_{j=1,\dots, N_{v}}$ from the initial conditions, we
need to construct $\gamma_{B}$ from $z_{S}\,$, $\gamma_{S}\,$,
$(z_{v,j})_{j=1,\dots, N_{v}}$ such that the normal velocity $\mb u\cdot \mb
n=-\Imag (\hat{\mb u}\frac{z_{S,e}}{|z_{S,e}|})$ satisfies the
impermeability boundary condition on the bottom. This problem is then
simpler than the previous one: 
\begin{equation}\label{gammaB-gammaS}
{\rm pv} \int_{0}^{L_{B}} \gamma_{B}(e') \Imag \Bigg[ \frac{z_{B,e}(e)}{2L\ic} \cot\Big(\frac{z_{B}(e)-z_{B}(e') }{L/\pi}\Big)\Bigg] \d e'
 = {\rm RHS}_{VB}(e)
 \end{equation}
 where 
\begin{align*}
 &\hspace{-10pt} {\rm RHS}_{VB}(e) \\
 =& 
 -\int_{0}^{L_{S}} \gamma_{S}(e') \Imag \Bigg[\frac{z_{B,e}(e)}{2L\ic} \cot\Big(\frac{z_{B}(e)-z_{S}(e') }{L/\pi}\Big)\Bigg] \d e'\\
&- \int_{\D_{F}} \Imag \Bigg[\frac{z_{B,e}(e)}{2L\ic} \cot\Big(\frac{z_{B}(e)-y }{L/\pi}\Big)\Bigg] \omega_{0} \d \mb y \\
 =& 
- \int_{0}^{L_{S}} \gamma_{S}(e') \Imag \Bigg[\frac{z_{B,e}(e)}{2L\ic} \cot\Big(\frac{z_{B}(e)-z_{S}(e') }{L/\pi}\Big)\Bigg] \d e'\\
&- \sum_{j=1}^{N_{v}}\gamma_{v,j} \Imag \Bigg[\frac{z_{B,e}(e)}{2L\ic} \cot\Big(\frac{z_{B}(e)-z_{v,j} }{L/\pi}\Big) \Bigg] \\
&-\frac{ \omega_{0}}{4\pi}\int_{0}^{L_{S}} \ln\Big(\cosh \Imag \frac{z_{B}(e)-z_{S}(e')}{L/(2\pi)} -\cos \Real \frac{z_{B}(e)-z_{S}(e')}{L/(2\pi)} \Big) \Imag \Bigg[z_{B,e}(e)\overline{z_{S,e}(e')}\Bigg] \d e'\\
&+\frac{ \omega_{0}}{4\pi}\int_{0}^{L_{B}} \ln\Big(\cosh \Imag \frac{z_{B}(e)-z_{B}(e')}{L/(2\pi)} -\cos \Real \frac{z_{B}(e)-z_{B}(e')}{L/(2\pi)} \Big) \Imag \Bigg[z_{B,e}(e)\overline{z_{B,e}(e')}\Bigg] \d e'\,,
\end{align*}
together with the circulation assumptions:
\begin{equation*}
 \int_{0}^{L_{B}} \gamma_{B}(e') \d e' = - \gamma\,.
\end{equation*}

This problem is related to the vortex method in the case of an impermeable
boundary. The invertibility of this problem was studied in details in
\cite{ADL}, and the discretization will be described in Section~\ref{sec-discret}. 

This justifies Remark~\ref{rem:main-ellip} in the case of the vortex formulation.

\subsubsection{Initial $\mu_{S,0}$ for the dipole formulation}\label{sec-muS0}

Regarding the dipole formulation, we will often have to construct
$\mu_{S,0}$ knowing $g=\mb u_{F}\cdot \mb n\vert_{\Gamma_{S}}\,$.
As usual, $z_{B}\,,z_{S}\,, \gamma\,, \omega_{0}$ and
$(\gamma_{v,j}\,,z_{v,j})_{j=1,\dots, N_{v}}\,$ are initially given. 

The first step is to construct $\mb u_{\omega,\gamma}$ if
$(\omega\,,\gamma)\neq (0\,,0)\,$ (see Section~\ref{sec-uog}).
Given an expression of $\widehat{\mb u_{\omega,\gamma}}\, ,$
we are looking first for $\tilde\gamma_{S},\tilde\gamma_{B}$ such that the
associated $\mb u_{R}$ \eqref{BS-uR} solves the elliptic problem coming
from \eqref{ellip1}, \eqref{u-fgamma-1} and \eqref{u-fgamma-2}, i.e. we
consider the unique solution of 
\begin{multline*}
{\rm pv}\int_{0}^{L_{S}} \tilde\gamma_{S}(e') \Imag \Bigg[\frac{z_{S,e}(e)}{2L\ic} \cot\Big(\frac{z_{S}(e)-z_{S}(e') }{L/\pi}\Big)\Bigg] \d e'\\
+
 \int_{0}^{L_{B}} \tilde\gamma_{B}(e') \Imag \Bigg[ \frac{z_{S,e}(e)}{2L\ic} \cot\Big(\frac{z_{S}(e)-z_{B}(e') }{L/\pi}\Big)\Bigg] \d e'
 = -g |z_{S}(e)| - \Imag \Big[z_{S,e}(e) \widehat{\mb u_{\omega,\gamma}}(z_{S}(e))\Big] 
 \end{multline*}
 and
 \begin{multline*}
\int_{0}^{L_{S}} \tilde\gamma_{S}(e') \Imag \Bigg[\frac{z_{B,e}(e)}{2L\ic} \cot\Big(\frac{z_{B}(e)-z_{S}(e') }{L/\pi}\Big)\Bigg] \d e'\\
+
{\rm pv} \int_{0}^{L_{B}} \tilde\gamma_{B}(e') \Imag \Bigg[ \frac{z_{B,e}(e)}{2L\ic} \cot\Big(\frac{z_{B}(e)-z_{B}(e') }{L/\pi}\Big)\Bigg] \d e'
 = 0
 \end{multline*}
 together with the circulation assumptions
\begin{equation*}
 \int_{0}^{L_{S}} \tilde\gamma_{S}(e') \d e' = \int_{0}^{L_{B}}\tilde \gamma_{B}(e') \d e' = 0\,.
\end{equation*}

Of course, if we have chosen the third construction of $\widehat{\mb u_{\omega,\gamma}}\,$, then we can replace $\Imag \Big[z_{S,e}(e) \widehat{\mb u_{\omega,\gamma}}(z_{S}(e))\Big] $ by zero.

Finally, we set the initial value of $\mu_{S}$ as the anti-derivative of $\tilde \gamma_{S}$ with zero mean value
\[
\mu_{S,0}(e)= \int_{0}^e \tilde \gamma_{S}(e')\d e' -\frac1{L_{S}}\int_{0}^{L_{S}} \int_{0}^e \tilde \gamma_{S}(e')\d e'\d e \,.
\]

\subsubsection{Time dependent $\mu_{B}$ for the dipole formulation}\label{sec-muB}

When the time evolves, we need, for the dipole formulation, to construct
$\mu_{B}$ from $z_{S}\,, \mu_{S}\,$, again from the initial data
$z_{B}\,$. This problem is very simple, as we know that the potential obtained from $\mu_{B}$ and
$\mu_{S}$ (see \eqref{phi-int}) satisfies $\phi_{B}=0$ in $\mathcal{D}_{B}$. In particular, the limit of the potential (see Appendix~\ref{app-cot}) by below $\Gamma_{B}$ vanishes, which reads 
\begin{multline}\label{muB-muS}
\frac12 \mu_{B}(e) + \int_{0}^{L_{B}} \mu_{B}(e')\Real \Bigg[\frac1{2L\ic} \cot\Big(\frac{z_{B}(e)-z_{B}(e') }{L/\pi}\Big) z_{B,e}(e') \Bigg]\d e'\\
=-\int_{0}^{L_{S}} \mu_{S}(e') \Real \Bigg[\frac1{2L\ic} \cot\Big(\frac{z_{B}(e)-z_{S}(e') }{L/\pi}\Big) z_{S,e}(e') \Bigg]\d e'
\end{multline}
where the function in the left hand side integral is extended for $e=e'$ by
\[
-\mu_{B}(e)\Real\Bigg[\frac1{4\pi\ic} \frac{z_{B,ee}(e)}{z_{B,e}(e)} \Bigg]\,. 
\]

By uniqueness of $\mu_{B}$ satisfying such an equation, we state that it is enough to solve it for $\mu_{S}$ given.
This operator is different than the one in \eqref{gammaB-gammaS} and corresponds
to the operator $A^*$ in \cite{ADL}. We will highlight in
Section~\ref{sec-discret}
that it is possible to interpret this problem as a small
perturbation of the identity, and its inverse will be obtained in the form
of a Neumann series.

This justifies Remark~\ref{rem:main-ellip} in the case of the dipole formulation.

\begin{remark}\label{rem-phiB}
This problem is easily stated as we are looking for $\mu_{B}$ such
that $\phi_{B}=0$ below the bottom, which is possible only if we have
constructed $\mb u_{\omega,\gamma}$ tangent to the bottom
\eqref{u-fgamma-2}. 
\end{remark}

\subsection{Displacement of the free surface}

For the vortex formulation, we already have $\gamma_{S}$ and
$\gamma_{B}\,$, whereas for the dipole formulation we simply construct them
from $\mu_{S}$ and $\mu_{B}$: $\tilde\gamma_{B}=\partial_{e} \mu_{B}(e)$
and $\tilde\gamma_{S}=\partial_{e} \mu_{S}(e)\,$. 

With $\gamma_{S}$ and $\gamma_{B}\,$, it is now possible from \eqref{BS-u}
to compute the velocity anywhere. We recall that the tangential velocity is
discontinuous at the water-air interface.
We can thus assume that the free surface is moving as 
\[
\partial_{t} \mb z_{S}(t,e) = \alpha \mb u_{F}(t,z_{F}(e)) + (1-\alpha) \mb u_{A}(t,z_{F}(e))
\]
with a parameter $\alpha\in [0,1]\,$. By continuity of the
normal component, the evolution of the free surface does not depend on the
choice of $\alpha\,.$ It is however an interesting parameter from a
numerical point of view. It controls the distribution of points on the
interface. A choice of $\alpha$ can, for example, allow us to vary the
resolution in space. In practice, we found that
$\alpha=1$ efficiently concentrates computational points near the tip of a breaking wave.

The limit formulae of Appendix~\ref{app-cot} give
\begin{equation}
\begin{aligned}
 &\hspace{-10pt}\partial_{t} \overline{z_{S}(t,e)} \\
 =& \int_{0}^{L_{S}} \frac{\gamma_{S}(e')z_{S,e}(e) -\gamma_{S}(e)z_{S,e}(e') }{z_{S,e}(e)}\frac1{2L\ic} \cot\Big(\frac{z_{S}(e)-z_{S}(e') }{L/\pi}\Big) \d e' \\
&+\int_{0}^{L_{B}} \gamma_{B}(e') \frac1{2L\ic} \cot\Big(\frac{z_{S}(e)-z_{B}(e') }{L/\pi}\Big) \d e'\\
&+\frac{2\alpha-1}2 \frac{\gamma_{S}(e)}{z_{S,e}(e)} 
+ \sum_{j=1}^{N_{v}} \frac{\gamma_{v,j}}{2L\ic} \cot\Big(\frac{z_{S}(e)-z_{v,j} }{L/\pi}\Big) \\
&+\frac{ \omega_{0}}{4\pi}\int_{0}^{L_{S}} \ln\Big(\cosh \Imag \frac{z_{S}(e)-z_{S}(e')}{L/(2\pi)} -\cos \Real \frac{z_{S}(e)-z_{S}(e')}{L/(2\pi)} \Big) \overline{z_{S,e}(e')} \d e'\\
&-\frac{ \omega_{0}}{4\pi}\int_{0}^{L_{B}} \ln\Big(\cosh \Imag \frac{z_{S}(e)-z_{B}(e')}{L/(2\pi)} -\cos \Real \frac{z_{S}(e)-z_{B}(e')}{L/(2\pi)} \Big) \overline{z_{B,e}(e')} \d e'\,.
\end{aligned}
\label{eq:dtz1}
\end{equation} 
This formula highlights that the dependence on $\alpha$ appears only through
a tangent vector field
$\frac{\gamma_{S}(e)}{|z_{S,e}(e)|} \overline{z_{S,e}(e)}\,$. 

The displacement of the point vortices is an obvious application of this
formula.
For all $i=1,\dots,N_{v}\, ,$
\begin{align*}
\partial_{t} \overline{z_{v,i}(t)} 
 =& \int_{0}^{L_{S}} \gamma_{S}(e') \frac1{2L\ic} \cot\Big(\frac{z_{v,i}-z_{S}(e') }{L/\pi}\Big) \d e' 
+\int_{0}^{L_{B}} \gamma_{B}(e') \frac1{2L\ic} \cot\Big(\frac{z_{v,i}-z_{B}(e') }{L/\pi}\Big) \d e'\\
&+ \sum_{j=1,\ j\neq i}^{N_{v}} \frac{\gamma_{v,j}}{2L\ic} \cot\Big(\frac{z_{v,i}-z_{v,j} }{L/\pi}\Big) \\
&+\frac{ \omega_{0}}{4\pi}\int_{0}^{L_{S}} \ln\Big(\cosh \Imag \frac{z_{v,i}-z_{S}(e')}{L/(2\pi)} -\cos \Real \frac{z_{v,i}-z_{S}(e')}{L/(2\pi)} \Big) \overline{z_{S,e}(e')} \d e'\\
&-\frac{ \omega_{0}}{4\pi}\int_{0}^{L_{B}} \ln\Big(\cosh \Imag \frac{z_{v,i}-z_{B}(e')}{L/(2\pi)} -\cos \Real \frac{z_{v,i}-z_{B}(e')}{L/(2\pi)} \Big) \overline{z_{B,e}(e')} \d e'\,.
\end{align*}

In the case of the dipole formulation, we simply use \eqref{BS-uR} to get
$\mb u_{R}$ everywhere. We thus need the expression of
$\mb u_{\omega,\gamma}\,$, see Section~\ref{sec-muS0} for these formulae
depending on the physical setting. In particular, we notice that often, we
do not have the velocity in the air, hence we should only consider the case 
 \[\partial_{t} \mb z_{S}(t,e) = \mb u_{F}(t,z_{F}(e)) 
\]
which makes sense for the single-fluid water-waves equations.
We thus write
\begin{align*}
\partial_{t} \overline{z_{S}(t,e)} 
 =& \int_{0}^{L_{S}} \frac{\tilde\gamma_{S}(e')z_{S,e}(e) -\tilde\gamma_{S}(e)z_{S,e}(e') }{z_{S,e}(e)}\frac1{2L\ic} \cot\Big(\frac{z_{S}(e)-z_{S}(e') }{L/\pi}\Big) \d e' \\
&+\int_{0}^{L_{B}} \tilde\gamma_{B}(e') \frac1{2L\ic} \cot\Big(\frac{z_{S}(e)-z_{B}(e') }{L/\pi}\Big) \d e'
+\frac{1}2 \frac{\tilde \gamma_{S}(e)}{z_{S,e}(e)} 
+ \widehat{\mb u_{\omega,\gamma}}(z_{S}(e))\,,
\end{align*}
where the last formula of $\widehat{\mb u_{\omega,\gamma}}$ in Section
\ref{sec-muS0} has to be desingularized when $x=z_{S}(e)$ as we did in the
previous formula \eqref{eq:dtz1}. 

It is worth stressing that some freedom is left on the choice of $\alpha\,,$
which will not affect the shape of the solution, but only the tangential
distribution of points on the interface. 

In the case of the bi-fluid problem, we have constructed $\mb u_{\omega,\gamma}$
tangent to free surface, it is thus more natural to include the parameter $\alpha$
\begin{equation} 
\begin{aligned}
 &\hspace{-10pt}\partial_{t} \overline{z_{S}(t,e)} \\
 =& \int_{0}^{L_{S}} \frac{\tilde\gamma_{S}(e')z_{S,e}(e) -\tilde\gamma_{S}(e)z_{S,e}(e') }{z_{S,e}(e)}\frac1{2L\ic} \cot\Big(\frac{z_{S}(e)-z_{S}(e') }{L/\pi}\Big) \d e' \\
&+\int_{0}^{L_{B}} \tilde\gamma_{B}(e') \frac1{2L\ic} \cot\Big(\frac{z_{S}(e)-z_{B}(e') }{L/\pi}\Big) \d e'
+\frac{2\alpha-1}2 \frac{\tilde \gamma_{S}(e)}{z_{S,e}(e)} 
+\alpha \widehat{\mb u_{\omega,\gamma}}(z_{S}(e))\,.
\end{aligned}
\label{eq:dtz2}
\end{equation}

\subsection{Bernoulli equation and dipole formulation}\label{sec-bernoulli}

Writing $\mb u = \mb u_{\omega,\gamma}+\nabla \phi_{F}\,$, the Euler
equations take the form
\[
\nabla\Big[ \partial_{t}\phi_{F} +\frac12|\mb u|^2\Big] + \partial_{t}\mb u_{\omega,\gamma} +(\curl \mb u) \mb u^\perp=-\nabla \Big[\frac{p_{F}}{\rho_{F}}+gx_{2}\Big]
\]
where $\rho_{F}$ is the density of the fluid and $g$ is the gravity
acceleration. Hence in the neighborhood of the free surface where the
vorticity is contant, the Euler equations can be reduced to the modified
Bernoulli equation 
\[
\partial_t \phi_{F}+\frac{1}{2}\vert \nabla \phi_{F} + \mb u_{\omega,\gamma} \vert^2 + \phi_{t,\omega,\gamma} -\omega_{0} (\psi_{\omega,\gamma} + \tilde\psi_{F}) =-\frac{p_{F}}{\rho_{F}}-g x_{2} \,,
\]
with $\psi_{\omega,\gamma}$ the stream function of $\mb u_{\omega,\gamma}$
and $\phi_{t,\omega,\gamma}$ the potential of $\partial_{t} \mb
u_{\omega,\gamma}$\footnote{Such a potential exists in the neighborhood of
the free surface, because $\curl\partial_{t} \mb
u_{\omega,\gamma}=\partial_{t}\omega_{0}=0$ and by the conservation of 
circulation.}. 
\begin{remark}
For all the examples given above, the difficult component to
express is the stream function associated to the constant part. This is because it
is not obvious how to express $\int_{\D_{F}} G(\mb x-\mb y) \d \mb y$ as
an integral over the boundaries. Hence, it is easier to assume
$\omega_{0}=0$ which removes the presence of $\psi_{\omega,\gamma}\,$. 
 
We should also observe that the potential $\phi_{t,\omega,\gamma}$ is even
more complicated to obtain. Therefore, in the sequel, we only
consider stationary $\mb u_{\omega,\gamma}$ where we can forget
$\partial_{t}\mb u_{\omega,\gamma}$ and $\phi_{t,\omega,\gamma}\,$. Of
course, this implies that we also assume $\gamma_{v,j}=0\,$, hence
$\omega=0$ and we can replace $\mb u_{\omega,\gamma}$ with $\mb u_{\gamma}\,$. 

From now on, we will restrict our attention to the following two situations for the dipole formulation:
\begin{itemize}
 \item the bi-fluid and the single fluid water-waves equations without circulation and vorticity, (i.e. $\mb u_{\gamma}=0$);
 \item the single fluid water-waves equations with circulation and without vorticity, (i.e. $\mb u_{\gamma}=\gamma \mb e_{1}/L$ for the flat bottom and $\mb u_{\gamma}=\mb H\,$, initially constructed for other bottom).
\end{itemize}
\end{remark}
 
For the bi-fluid water-waves equations, we also consider the Bernoulli equation in the air
\[
\partial_t \phi_{A}+\frac{1}{2}\vert \nabla \phi_{A} \vert^2=-\frac{p_{A}}{\rho_{A}}-g x_{2} \,.
\]
For the single-fluid water-waves equations, the density of the air is neglected and the pressure in the air is constant.

It is useful to count here boundary conditions for the fluid domain
$\mathcal{D}_{F} \, .$ At the bottom boundary $\Gamma_{B}$ the normal
component of the flow vanishes, which provides the needed boundary
condition.
At the top boundary $\Gamma_{S}$, in the bi-fluid problem both the normal
component of velocity and the pressure are continuous with that is the air domain
$\mathcal{D}_{A}$ above, across the boundary $\Gamma_{S}\, .$
These two continuity relations are thus enough to close the fluid system in
$\mathcal{D}_{F}$ (morally the number of jump conditions needed at the
boundary between two domains is $n_1+n_2$, where $n_1$ and $n_2$ are the
numbers of outer conditions required for the PDE solution in domains 1 and
2 respectively).
The single-fluid water-waves problem, corresponds to the limit of a vanishing density
for the air. The two jump conditions on $\Gamma_{S}$ are then replaced by a single
boundary condition on pressure $p=0$, which again is enough to close the
Euler system in $\mathcal{D}_{F}$.

In any case, we need to relate $p_{A}$ and $p_{F}$.
In the presence
of surface tension, the pressure is not continuous and a pressure jump is
achieved, which is directly related to the surface curvature $\kappa$
\begin{equation}\label{SurfaceTension}
[ p{\bf n} ]=(p_{A}-p_{F})\mb n=-\sigma \kappa{\bf n} \, .
\end{equation}
The surface tension coefficient $\sigma$ is here related to capillary effects (e.g., \cite[$\S $9.1.2]{Lannes-Livre}).

Setting the atmospheric pressure to $0$ for the single-fluid equations, we
consider the limit of the Bernoulli equation at the interface
\begin{multline*}
 \partial_t \Big( \phi_{F}(z_{S}(e)) \Big) - \partial_{t}\mb z_{S}(e) \cdot (\nabla\phi_{F})(z_{S}(e)) +\frac{1}{2}\vert \nabla \phi_{F}+ \mb u_{\gamma} \vert^2(z_{S}(e)) \\
 =-\frac{\sigma }{\rho_{F}}\kappa(z_{S}(e)) -g z_{S,2}(e)\,, 
\end{multline*}
hence we get 
\begin{align}
\frac12\partial_{t}\mu_{S}(e) =&\frac12 \partial_t \Big( \phi_{F}(z_{S}(e)) \Big)- \frac12\partial_t \Big( \phi_{A}(z_{S}(e)) \Big)= \partial_t \Big( \phi_{F}(z_{S}(e)) \Big)-\frac12 \partial_t \Phi_{S}(e) \nonumber\\
=&- \frac12\partial_t \Phi_{S}(e) + \partial_{t}\mb z_{S}(e) \cdot( \nabla\phi_{F})(z_{S}(e)) -\frac12 \vert \nabla \phi_{F}+ \mb u_{\gamma} \vert^2(z_{S}(e))\label{dtmupart0}\\
& -\frac{\sigma }{\rho_{F}}\kappa(z_{S}(e)) -g z_{S,2}(e) \,.\nonumber
\end{align}

For the bi-fluid water-waves equation without circulation, we need to consider
the Bernoulli equation in the air, hence performing the difference we get 
\begin{multline*}
\partial_t \mu_{S}(e) + \partial_{t}\mb z_{S}(e) \cdot (\nabla\phi_{A} - \nabla\phi_{F})(z_{S}(e)) +\frac{1}{2} \Big(\vert \nabla \phi_{F}\vert^2- \vert \nabla \phi_{A} \vert^2 \Big)(z_{S}(e))\\
 =\frac{\rho_{F}-\rho_{A}}{\rho_{F}\rho_{A}}p_{F}(z_{S}(e))- \frac{\sigma }{\rho_{A}}\kappa(z_{S}(e))\,, 
\end{multline*}
whereas the sum gives
\begin{multline*}
\partial_t \Phi_{S}(e) - \partial_{t}\mb z_{S}(e) \cdot (\nabla\phi_{A} + \nabla\phi_{F})(z_{S}(e)) +\frac{1}{2} \Big(\vert \nabla \phi_{F} \vert^2+ \vert \nabla \phi_{A} \vert^2 \Big)(z_{S}(e)) \\
=-\frac{\rho_{F}+\rho_{A}}{\rho_{F}\rho_{A}}p_{F}(z_{S}(e))+ \frac{\sigma }{\rho_{A}}\kappa(z_{S}(e)) -2g z_{S,2}(e)\,.
\end{multline*}
We remove the pressure by multiplying the second equation by the Atwood number
\begin{equation}\label{Atwood}
A_{tw} = \frac{\rho_{F}-\rho_{A}}{\rho_{F}+\rho_{A}} 
\end{equation}
and finally obtain 
\begin{align}
\frac12\partial_t \mu_{S}(e)
=& -\frac{A_{tw}}2 \partial_t \Phi_{S}(e) +\frac12 \partial_{t}\mb z_{S}(e) \cdot ( (A_{tw}+1)\nabla\phi_{F} +(A_{tw}-1) \nabla\phi_{A})(z_{S}(e))\nonumber\\
&-\frac{1}{4} \Big((A_{tw}+1) \vert \nabla \phi_{F}
\vert^2+ (A_{tw}-1) \vert \nabla \phi_{A} \vert^2 \Big)(z_{S}(e))\nonumber \\
& -\frac{(A_{tw}+1)\sigma }{2\rho_{F}}\kappa(z_{S}(e)) -gA_{tw} z_{S,2}(e)
\label{dtmupart1}
\end{align}
because
$\frac{A_{tw}-1}{\rho_{A}}=\frac{-2}{\rho_{F}+\rho_{A}}=-\frac{A_{tw}+1}{\rho_{F}}\, .$
Even if the derivation differs, it is worth stressing that the
single-fluid water-waves equations without circulation are recovered when setting $A_{tw}=1\,$
in the above equation.

We have already derived the expression for $\partial_{t} z_{S}(e)$ and in
the similar way for $\widehat{\nabla \phi_{F}} (z_{S}(e))$ and
$\widehat{\nabla \phi_{A}} (z_{S}(e))\,$, our next step is to compute
$\partial_t \Phi_{S}(e)\,$.
From \eqref{phi-mu} 
\begin{align}
 &\hspace{-10pt}\frac{\partial_{t} \Phi_{S}(e)}2\nonumber\\
 = &\int_{0}^{L_{S}} ( \partial_{t}\mu_{S}(e') - \partial_{t}\mu_{S}(e))\Real \Bigg[\frac1{2L\ic} \cot\Big(\frac{z_{S}(e)-z_{S}(e') }{L/\pi}\Big) z_{S,e}(e') \Bigg]\d e'\nonumber\\
&+\int_{0}^{L_{B}} \partial_{t}\mu_{B}(e')\Real \Bigg[\frac1{2L\ic} \cot\Big(\frac{z_{S}(e)-z_{B}(e') }{L/\pi}\Big) z_{B,e}(e') \Bigg]\d e'\nonumber\\
&+\int_{0}^{L_{S}} ( \mu_{S}(e) - \mu_{S}(e'))\Real \Bigg[\frac{\pi}{2L^2\ic} \sin^{-2}\Big(\frac{z_{S}(e)-z_{S}(e') }{L/\pi}\Big) (\partial_{t}z_{S}(e)-\partial_{t}z_{S}(e') )z_{S,e}(e') \Bigg]\d e'\nonumber\\
&+\int_{0}^{L_{S}} ( \mu_{S}(e') - \mu_{S}(e))\Real \Bigg[\frac1{2L\ic} \cot\Big(\frac{z_{S}(e)-z_{S}(e') }{L/\pi}\Big) \partial_{t}z_{S,e}(e') \Bigg]\d e'\nonumber\\
&-\int_{0}^{L_{B}} \mu_{B}(e')\Real \Bigg[\frac{\pi}{2L^2\ic} \sin^{-2}\Big(\frac{z_{S}(e)-z_{B}(e') }{L/\pi}\Big)\partial_{t}z_{S}(e) z_{B,e}(e') \Bigg]\d e'
 \label{dtmupart2}
\end{align}
where the third right hand side integral concerns a continuous function whose value for $e'=e$ is 
\[
\mu_{S,e}(e) \Real \Bigg[\frac{1}{2\pi\ic} \frac{\partial_{t}z_{S,e}(e)}{z_{S,e}(e)} \Bigg] 
\]
whereas in the fourth integral the extension is
\[
-\mu_{S,e}(e) \Real \Bigg[\frac{1}{2\pi\ic} \frac{\partial_{t}z_{S,e}(e)}{z_{S,e}(e)} \Bigg]
\]
which is exactly opposite to the first one, and then can be omitted.

Finally, substituting \eqref{dtmupart2} into \eqref{dtmupart0} or \eqref{dtmupart1}, yields an
equation of the form
\[
A_{S}^* [\partial_{t} \mu_{S}](e)+ C_{D} [\partial_{t} \mu_{B}](e) =
G_{D,1}(e) \, ,
\]
where the operator $A_{S}$ is the same kind of operator as in
\eqref{muB-muS} (see Section~\ref{sec-discret} for an explanation on how to
discretize such an operator, the discrete expressions of $A_{S}\,$, $C_{D}$
and $G_{D,1}$ are given in Appendix~\ref{app-dipole}). 

As the above equation involves $\partial_{t} \mu_{B}\,$, we derive another
equation differentiating \eqref{muB-muS} with respect to time: 
\begin{align*}
& \int_{0}^{L_{S}} \partial_{t} \mu_{S}(e') \Real \Bigg[\frac{z_{S,e}(e')}{2L\ic} \cot\Big(\frac{z_{B}(e)-z_{S}(e') }{L/\pi}\Big)\Bigg] \d e' \\
&+\frac12 \partial_{t} \mu_{B}(e)+ \int_{0}^{L_{B}} \partial_{t} \mu_{B}(e') \Real \Bigg[ \frac{z_{B,e}(e')}{2L\ic} \cot\Big(\frac{z_{B}(e)-z_{B}(e') }{L/\pi}\Big)\Bigg] \d e'\\
&\hspace{3cm} = -\int_{0}^{L_{S}} \mu_{S}(e') \Real \Bigg[\frac{\partial_{t}z_{S,e}(e')}{2L\ic} \cot\Big(\frac{z_{B}(e)-z_{S}(e') }{L/\pi}\Big)\Bigg] \d e' \\
&\hspace{3.3cm} -\int_{0}^{L_{S}} \mu_{S}(e') \Real \Bigg[\frac{\pi z_{S,e}(e') \partial_{t} z_{S}(e')}{2L^2\ic} \sin^{-2}\Big(\frac{z_{B}(e)-z_{S}(e') }{L/\pi}\Big)\Bigg] \d e' 
\end{align*}
which gives an equation of the form
\[
D_{D} [\partial_{t} \mu_{S}](e)+ A_{B}^* [\partial_{t} \mu_{B}](e) = G_{D,2}(e)
\]
where $A_{B}^*$ is precisely the same operator as on the left hand side of
\eqref{muB-muS}, which was already inverted. From this equation, we have
\[
\partial_{t} \mu_{B} =A_{B}^{*-1} \Big[ G_{D,2}-D_{D} [\partial_{t} \mu_{S}]\Big]
\]
which means that $\partial_{t} \mu_{S}$ will be obtain by solving
\begin{equation}\label{inv_muSt}
\Big(A_{S}^*- C_{D}A_{B}^{*-1} D_{D} \Big)[\partial_{t} \mu_{S}]= G_{D,1} - C_{D}A_{B}^{*-1} G_{D,2}\,. 
\end{equation}
We will discuss in Section~\ref{sec-discret} that $A_{S}^*-
C_{D}A_{B}^{*-1} D_{D}$ can be seen as a perturbation of a simple matrix,
though not the identity. The discrete version is also given in
Appendix~\ref{app-dipole}. 

This equation, together with \eqref{eq:dtz2} corresponds to the vortex formulation
and the first part of Theorem~\ref{theo:main-dyn}.

\subsection{Euler equation and vortex formulation}\label{sec-Euler}

In \cite{Baker82}, the authors differentiate the equation for 
$\partial_{t}\mu$ \eqref{dtmupart1} to get the equation for
$\partial_{t} \gamma\, .$ This is difficult to justify because the kernels in the
integrals are singular.
Let us also stress that such a derivation yields a vortex formation which
should only be used without circulation, i.e. with zero mean value for
$\gamma_{S,0}\,$.

Alternatively, we write the Euler equations in $\D_{F}$ 
\[
 \partial_{t} \mb u_{F} + (\mb u_{F}\cdot \nabla) \mb u_{F} = \frac{-\nabla
 p_{F}}{\rho_{F}} -g \mb e_{2} \, .
 \]
For the single-fluid water-waves equation, we have $\nabla p_{A}=0\, ,$
whereas for the bi-fluid water-waves equations we also write the Euler equations in $\D_{A}$
 \[
 \partial_{t} \mb u_{A} + (\mb u_{A}\cdot \nabla) \mb u_{A} = \frac{-\nabla p_{A}}{\rho_{A}} -g \mb e_{2}\,.
\]

As for the dipole formulation, we need to relate the pressures on both sides of
the interface using the continuity of the normal component of the stress
tensor at the interface \eqref{SurfaceTension}.
This implies by differentiating with respect to $e$
\[
\mb z_{S,e}(e) \cdot (\nabla p_{A}-\nabla p_{F})(z_{S}(e)) =-\sigma \frac{\d}{\d e} \Big(\kappa(z_{S}(e))\Big)\,.
\]
So we need to consider the limit of the tangential part of the Euler equations at the interface. 

For the single-fluid water-waves equations, one can simply replace $\mb z_{S,e}(e)\cdot \nabla p_{F}(z_{S}(e))$ by $\sigma \frac{\d}{\d e} \Big(\kappa(z_{S}(e))\Big)$ and obtain
\begin{multline*}
 \partial_{t} \Big( \mb u_{F} (z_{S}(e)) \cdot \mb z_{S,e}(e) \Big) + \Big[\Big((\mb u_{F}(z_{S}(e))- \partial_{t} \mb z_{S}(e))\cdot \nabla\Big) \mb u_{F} (z_{S}(e))\Big]\cdot \mb z_{S,e}(e) \\- \mb u_{F} (z_{S}(e)) \cdot \partial_{t}\mb z_{S,e}(e)
 =- \frac{\sigma}{\rho_{F}} \frac{\d}{\d e} \Big(\kappa(z_{S}(e))\Big) -g
 z_{S,e,2} \, .
\end{multline*}
In order to introduce $\partial_{t}\gamma_{S}\, ,$ we use
\begin{align*}
 &\hspace{-10pt} \Psi_{S}(e) := ( \mb u_{F} + \mb u_{A}) (z_{S}(e)) \cdot \mb z_{S,e}(e)\\
 =& {\rm pv}\int_{0}^{L_{S}}\gamma_{S}(e') \Real\Bigg[\frac{z_{S,e}(e)}{L\ic } \cot\Big(\frac{z_{S}(e)-z_{S}(e')}{L/\pi}\Big) \Bigg] \d e'\\
&+ \int_{0}^{L_{B}}\gamma_{B}(e') \Real\Bigg[\frac{z_{S,e}(e)}{L\ic } \cot\Big(\frac{z_{S}(e)-z_{B}(e')}{L/\pi}\Big) \Bigg] \d e'\\
&+ \sum_{j=1}^{N_{v}} \gamma_{v,j} \Real\Bigg[\frac{z_{S,e}(e)}{L\ic } \cot\Big(\frac{z_{S}(e)-z_{v,j} }{L/\pi}\Big)\Bigg] \\
&+\frac{ \omega_{0}}{2\pi}\int_{0}^{L_{S}} \ln\Big(\cosh \Imag \frac{z_{S}(e)-z_{S}(e')}{L/(2\pi)} -\cos \Real \frac{z_{S}(e)-z_{S}(e')}{L/(2\pi)} \Big) \Real\Big[z_{S,e}(e)\overline{z_{S,e}(e')}\Big] \d e'\\
&-\frac{ \omega_{0}}{2\pi}\int_{0}^{L_{B}} \ln\Big(\cosh \Imag \frac{z_{S}(e)-z_{B}(e')}{L/(2\pi)} -\cos \Real \frac{z_{S}(e)-z_{B}(e')}{L/(2\pi)} \Big) \Real\Big[z_{S,e}(e)\overline{z_{B,e}(e')}\Big] \d e'
\end{align*}
so by the definition of $\gamma_{S}$ \eqref{def-gamma}, we write
\begin{align}
\frac12\partial_{t}\gamma_{S}(e) =& \partial_{t} \Big( \mb u_{F} (z_{S}(e)) \cdot \mb z_{S,e}(e) \Big) -\frac12 \partial_t \Psi_{S}(e) \nonumber\\
=&-\frac12 \partial_t \Psi_{S}(e) 
- \Big[\Big((\mb u_{F}(z_{S}(e))- \partial_{t} \mb z_{S}(e))\cdot \nabla\Big) \mb u_{F} (z_{S}(e))\Big]\cdot \mb z_{S,e}(e)\label{dtgammapart0} \\
&+ \mb u_{F} (z_{S}(e)) \cdot \partial_{t}\mb z_{S,e}(e)
 - \frac{\sigma}{\rho_{F}} \frac{\d}{\d e} \Big(\kappa(z_{S}(e))\Big) -g z_{S,e,2} \,.\nonumber
\end{align}

For the bi-fluid formulation, we proceed as for the dipole formulation, i.e.
\begin{itemize}
 \item we compute from both Euler equations 
 \[\partial_{t}\gamma_{S}(e)= \partial_{t} \Big( \mb u_{F} (z_{S}(e)) \cdot \mb z_{S,e}(e) \Big) -\partial_{t} \Big( \mb u_{A} (z_{S}(e)) \cdot \mb z_{S,e}(e) \Big)\, ;\]
 \item we express in the same way 
 \[\partial_{t}\Psi_{S}(e)= \partial_{t} \Big( \mb u_{F} (z_{S}(e)) \cdot \mb z_{S,e}(e) \Big) +\partial_{t} \Big( \mb u_{A} (z_{S}(e)) \cdot \mb z_{S,e}(e) \Big)\, ;\]
 \item we replace $\mb z_{S,e}(e) \cdot \nabla p_{A}(z_{S}(e))$ by $\mb z_{S,e}(e) \cdot \nabla p_{F}(z_{S}(e)) -\sigma \frac{\d}{\d e} \Big(\kappa(z_{S}(e))\Big)$\, ;
 \item we remove the pressure term by multiplying the second equation by
 the Atwood number \eqref{Atwood}, and by adding the two equations, and we use $\frac{A_{tw}-1}{\rho_{A}}=\frac{-2}{\rho_{F}+\rho_{A}}=-\frac{A_{tw}+1}{\rho_{F}}\,$.
\end{itemize}
In the end, we get a slightly modified equation for $\partial_{t}\gamma_{S}(e)$:
\begin{align}
\frac12\partial_{t}\gamma_{S}(e)
=&-\frac{A_{tw}}2 \partial_t \Psi_{S}(e) 
-\frac{1+A_{tw}}2 \Big[\Big((\mb u_{F}(z_{S}(e))- \partial_{t} \mb z_{S}(e))\cdot \nabla\Big) \mb u_{F} (z_{S}(e))\Big]\cdot \mb z_{S,e}(e) \nonumber \\
&+\frac{1-A_{tw}}2 \Big[\Big((\mb u_{A}(z_{S}(e))- \partial_{t} \mb z_{S}(e))\cdot \nabla\Big) \mb u_{A} (z_{S}(e))\Big]\cdot \mb z_{S,e}(e)
-g A_{tw} z_{S,e,2} \label{dtgammapart1} \\
&+\Big(\frac{1+A_{tw}}2 \mb u_{F}-\frac{1-A_{tw}}2 \mb u_{A} \Big)(z_{S}(e)) \cdot \partial_{t}\mb z_{S,e}(e)
 - \frac{(1+A_{tw})\sigma}{2\rho_{F}} \frac{\d}{\d e} \Big(\kappa(z_{S}(e))\Big) \nonumber
\end{align}
which coincides with the first one when we set $A_{tw}=1\,$.

We may simplify the second and third right hand side terms by 
\begin{align*}
 \Big((\mb u_{F}(z_{S})- \partial_{t} \mb z_{S})\cdot \nabla\Big) \mb u_{F} (z_{S}) 
 &= (1-\alpha)\frac{\gamma_{S}}{|z_{S,e}|^2} \Big(\mb z_{S,e}\cdot \nabla\Big) \mb u_{F} (z_{S})\\
 & = (1-\alpha)\frac{\gamma_{S}}{|z_{S,e}|^2} \partial_{e} \Big(\mb u_{F} (z_{S})\Big)\, ,\\
 \Big((\mb u_{A}(z_{S})- \partial_{t} \mb z_{S})\cdot \nabla\Big) \mb u_{A} (z_{S}) 
 &= -\alpha\frac{\gamma_{S}}{|z_{S,e}|^2} \Big(\mb z_{S,e}\cdot \nabla\Big) \mb u_{A} (z_{S}) = -\alpha\frac{\gamma_{S}}{|z_{S,e}|^2} \partial_{e} \Big(\mb u_{A} (z_{S})\Big)\,.
\end{align*}

We then substitute the computation of $\partial_{t}\Psi_{S}$ (see
Appendix~\ref{app-vortex}) into \eqref{dtgammapart0} or
\eqref{dtgammapart1} to get an equation of the form
\[
A_{S} [\partial_{t} \gamma_{S}](e)+ C_{V} [\partial_{t} \gamma_{B}](e) = G_{V,1}(e)
\]
where $A_{S}$ is the adjoint operator of $A_{S}^*$ in the dipole formulation, see Section~\ref{sec-discret} for properties of these operators. 

Finally, we differentiate with respect to time \eqref{gammaB-gammaS} to get an equation of the form
\[
D_{V} [\partial_{t} \gamma_{S}](e)+ B_{B} [\partial_{t} \gamma_{B}](e) = G_{V,2}(e)
\]
where the conservation of circulation $\int \partial_{t} \gamma_{B}=0$ is
included in the last line of the discretization. This allows to obtain
$\partial_{t} \gamma_{S}$ by solving 
\begin{equation} 
\Big(A_{S}- C_{V}B_{B}^{-1} D_{V} \Big)[\partial_{t} \gamma_{S}]= G_{V,1} - C_{V}B_{B}^{-1} G_{V,2}\,.
\label{inv_gammaSt}
\end{equation} 
Again, the discrete forms are given in Appendix~\ref{app-vortex}.

This equation, together with \eqref{eq:dtz1} corresponds to the vortex formulation
and the last part of Theorem~\ref{theo:main-dyn}.

\begin{remark}\label{rem-FCM} 
 In \cite{ADL}, we have developed a method referred to as the ``fluid charge
 method''.
 In this method, after retrieving $\mb u_{\omega,\gamma}$
 we have written $\mb u_{R}=\nabla \phi_{F}$ and solved the Laplace problem
 with homogeneous Neumann boundary condition on $\Gamma_{B}$.
 This method relies on an extension of $\phi$ in ${\cal D}_{B}$ by
 continuity.
 We can establish that 
\begin{equation*}
 \widehat{\nabla \phi} (x)
= \int_{0}^{L_{S}} \sigma_{S}(e) \frac1{2L} \cot\Big(\frac{x-z_{S}(e) }{L/\pi}\Big) \d e 
+\int_{0}^{L_{B}} \sigma_{B}(e) \frac1{2L} \cot\Big(\frac{x-z_{B}(e) }{L/\pi}\Big) \d e\,,
\end{equation*}
where
\begin{align*}
 \sigma_{S}(e) &: =|z_{S,e}(e)|[\lim_{z\in \D_{A}\to z_{S}(e)} \partial_{n}\phi - \lim_{z\in \D_{F}\to z_{S}(e)} \partial_{n}\phi] \\
\sigma_{B}(e) & :=|z_{B,e}(e)| [\lim_{z\in \D_{F}\to z_{B}(e)} \partial_{n}\phi - \lim_{z\in \D_{B}\to z_{B}(e)} \partial_{n}\phi]=-|z_{B,e}(e)| [ \lim_{z\in \D_{B}\to z_{B}(e)} \partial_{n}\phi] \,.
\end{align*}
Indeed, in this formulation, the tangential part is continuous whereas the
normal part has a jump. Therefore, we can adapt Section~\ref{sec-gammaBmuB}
to find $\sigma_{B}$ such that $\mb u_{R}$ satisfies the impermeability
condition on $\Gamma_{B}$. Note that this problem is related to the
inversion of the operator $A$ in \cite{ADL}. Next, we can use the previous
formula to get the displacement of the free surface in terms of
$\sigma_{S}$. 

Unfortunately, in the case of water-waves, we do not have an equation for $\partial_{t}\sigma_{S}$. In
Sections~\ref{sec-bernoulli} and \ref{sec-Euler}, we have used the
continuity of the normal component of the stress tensor
\eqref{SurfaceTension}.
For the dipole formulation, this equation on
$p_{F}-p_{A}$ provides the connexion between the two Bernoulli equations and
allows to get an equation on $\partial_{t}\mu_{S}=\frac12 \partial_t \Big(
\phi_{F}(z_{S}(e)) \Big)- \frac12\partial_t \Big( \phi_{A}(z_{S}(e))
\Big)$.
For the vortex formulation, a differentiation along the free surface
of this relation on $p_{F}-p_{A}$ yields an equation on
$\mb z_{S,e}(e) \cdot (\nabla p_{A}-\nabla p_{F})(z_{S}(e))$.
This establishes a connexion between the tangential part of the two Euler
equations.
Here, it is important that $\gamma_{S}$ corresponds to the jump of the
tangential velocities. In the fluid charge method, $\sigma_{S}$ corresponds
to the jump of the normal velocities. To get an equation for
$\partial_{t}\sigma_{S} \, ,$ we would thus need a relation for
$\mb z_{S,e}(e)^\perp \cdot (\nabla p_{A}-\nabla p_{F})(z_{S}(e))$,
i.e. a sort of continuity of the normal derivative of the normal component
of the stress tensor, which has no physical meaning.
Note that, defining the velocity above the free surface such that the
normal component has a jump implies that it does not corresponds to the air velocity. 

Let us note that Baker in \cite[Equation (4.4)]{Baker82} 
considered either the vortex or dipole formulation for the free surface combined
with the fluid charge method at the bottom.
Because of the extension of $\psi$ above the fluid and $\phi$ below the
fluid domain, Proposition~\ref{prop-Green} cannot be applied to such a formulation.
For this reason, we have chosen in the previous sections to consider the same
formulation for the free surface and for the bottom.
\end{remark}

\subsection{The deep-water case}

It is easy to derive a deep-water formulation removing contributions
from the bottom. Following line by line the previous sections without the
presence of $\D_{B}$ and $\Gamma_{B}$, i.e. assuming that the fluid domain
is infinite in the vertical direction, we can get the following model
\begin{itemize}
 \item the dipole formulation for the bi-fluid water-waves equations
 without vorticity and circulation, where the velocity is given by
 $\gamma_{S}=\partial_{e}\mu_{S}$. The equation verifyed by
 $\partial_{t}\mu_{S}$ stays exactly \eqref{dtmupart1}, but dropping all
 terms involving $\mu_{B}$ in the expression \eqref{dtmupart2} for
 $\partial_{t}\Phi_{S}$. This yields 
 \[
A_{S}^* [\partial_{t} \mu_{S}](e) = G_{D,1}(e) \, ,
\]
where $G_{D,1}$ is giving in \eqref{GD1} where we remove $\mb u_{\gamma}$
and $\mu_{B}$; 
 \item the dipole formulation for the single-fluid water-waves equations
 with circulation and without vorticity, where $\mb
 u_{\gamma}=\frac\gamma L\mb e_{1}$ and
 $\tilde\gamma_{S}=\partial_{e}\mu_{S}$. The equation verified by
 $\partial_{t}\mu_{S}$ stays exactly \eqref{dtmupart0}, but again dropping
 all terms involving $\mu_{B}$ in the expression
 \eqref{dtmupart2} for $\partial_{t}\Phi_{S}$. This yields 
 \[
A_{S}^* [\partial_{t} \mu_{S}](e) = G_{D,1}(e) \, ,
\]
where $G_{D,1}$ is giving in \eqref{GD1} where we remove $\mu_{B}$ and
replace $A_{tw}$ by $1$ and $\mb u_{\gamma}$ by $\frac\gamma L\mb e_{1}\,$; 
\item the vortex formulation with circulation and where the vorticity is
 composed of point vortices (no constant part). The velocity is given by
 $\gamma_{S}$ and the equation verified by $\partial_{t}\gamma_{S}$ stays
 exactly \eqref{dtgammapart1}, but dropping all terms involving
 $\gamma_{B}$ in the expression for $\partial_{t}\Psi_{S}$, see
 Appendix~\ref{app-vortex}. This yields 
 \[
A_{S} [\partial_{t} \gamma_{S}](e) = G_{V,1}(e) \, .
\]
\end{itemize}

Let us note that the equations obtained in this case are very close to
\cite[Equations (2.14)-(2.17)]{Baker82}, but where we have used the
desingularization \eqref{phi-mu}, which allows us to justify the
derivatives and to handle only classical integrals. In this earlier
article, principal value integrals with $x/\sin^2 x$ singularities may be
a cause of numerical instabilities.

\section{Numerical discretization}\label{sec4}

\subsection{Time integration}

Whereas most earlier numerical studies on water waves breaking used
high-order Runge-Kutta integrators \cite{Wilkening21,Baker11,Scolan}, we preferred to
restrict our study to a second order in time, but symplectic integrator for harmonic oscillators.
We use the so-called Verlet integrator, which amounts to using a staggered
grid in time, and preserves the Hamiltonian structure of harmonic oscillators.

The governing equations take the form
\begin{equation*}
 \partial _t X = G(Z,X)\, , \qquad
 \partial _t Z = F(Z,X)\, , 
\end{equation*} 
where $X\equiv \gamma$ in the vortex formulation and $X\equiv \mu$ in the
dipole formulation. $F$ and $G$ denote here non-linear differential
operators.
These are discretized in the form
\begin{equation*}
 X^{n+1/2}-X^{n-1/2} = \Delta t \, G(Z^n,X^n)\, , \quad
 Z^{n+1}-Z^{n} = \Delta t \, F(Z^{n+1/2},X^{n+1/2})\, .
\end{equation*} 
The right-hand-side of the first equation involves $X^n$
which is not known and that of the second equation similarly involves the
unknown $Z^{n+1/2}\, .$
These are respectively constructed as $2\,X^n \simeq X^{n+1/2}+X^{n-1/2}$
and $2\, Z^{n+1/2} \simeq Z^{n+1}+Z^{n}$ and calculated using a fixed point
relaxation.

We observed numerically that this symplectic integrator offers better
stability properties than standard Runge-Kutta integrator and yields
remarkable conservation properties on test cases (see Section~\ref{Sect_Num}).

\subsection{Shifted grids in space}\label{sec-shift}
Besides the use of a staggered mesh in time, a staggered mesh in space can
also be used. This approach was for example used and fully justified (via
a mathematical demonstration) in \cite{ADL} to enforce impermeability
boundary conditions.

Shifted grids are also used here for the free surface. Our aim is to avoid
regularization techniques and yet desingularize the integrals involved in
the computation. We reformulated all singular integrals as regular ones
using relation \eqref{eq.desing} (see also Appendix~\ref{app-cot}).
The resulting integral is now non-singular, but can only be defined at the
former singularity as a continuous prolongation. This extension necessarily
involves higher order derivatives, which can induce some numerical errors
when the curvature becomes large (i.e. in a situation relevant to wave
breaking). Evaluating the integrals on a shifted dual grid resolves this problem
as the function is at worst evaluated half a grid point away from the
former singularity. The integral eventually needs to be interpolated on
the original grid for time stepping. This introduces some numerical
smoothing, which is however entirely controlled by the grid size and thus
vanishes in the limit of a large number of points. Finally, all derivatives
in space, with respect to $e$, in the discrete expressions are evaluated
thanks to second order finite difference formula.

\subsection{Discretization and inversion of singular operator by Neumann series}\label{sec-discret}

One of the main numerical difficulties lies in the resolution of linear
systems with matrices related to singular kernel operators. If it is well
known that the continuous operators are indeed invertible \cite{DautrayLions}, the relevant discretization, the invertibility
and the convergence of the discrete operators have recently been studied in
\cite{ADL}. The notation $A\, ,\,B$ and their adjoints $A^*\, ,\,B^*$ come
from Equation (3.1) in this article, where the relation and the inversion
is based on the Poincar\'e-Bertrand formula concerning the inversion of Cauchy
integrals, see \cite[Section 3]{ADL} for full details. 

Given an arc-length parametrization : $[0,L_{B}] \to \Gamma_{B}\,$, if 
\[
0=e_{B,1} <e_{B,2} < \dots <e_{B,N} < L_{B} 
\]
 are close to the uniform distribution $\theta_{i}= (i-1)L_{B}/N_{B}$, then
 the matrix $A^*_{B,N}$ appearing to compute $\mu_{B}$, see Section~\ref{sec-muB},
 defined as
\begin{align*}
 A_{B,N}^*(i,j)=& \frac{L_{B}}{N_{B}} \Real \Bigg[\frac1{2L_{B}\ic} \cot\Big(\frac{z_{B}(e_{B,i})-z_{B}(e_{B,j}) }{L/\pi}\Big) z_{B,e}(e_{B,j}) \Bigg]
 \ \forall i\neq j \in [1,N_{B}]\times [1,N_{B}]\,,\\
 A_{B,N}^*(i,i)=&\frac12- \frac{L_{B}}{N_{B}} \Real\Bigg[\frac1{2\pi\ic} \frac{z_{B,ee}(e_{B,i})}{z_{B,e}(e_{B,i})} \Bigg] \ \forall i\in [1,N_{B}]\,,
\end{align*}
is invertible and can be seen as a perturbation of $\frac12{\rm I}_{2}$.
It is thus a well conditioned matrix, see for instance \cite[Section
 8.3]{ADL}. It can be inverted very efficiently by a Neumann series. We
refer to \cite[Section 8]{ADL}, in particular Theorem 8.1 therein where the
convergence rate is given. 

Namely, we write 
\[
 A_{B,N}^*=\frac12({\rm I}_{N}-R_{B,N}),\quad \|R_{B,N}\|<1 \, ,
\]
which implies that
\[
 A_{B,N}^{*-1}=2({\rm I}_{N} - R_{B,N} )^{-1}= 2 \sum_{k=0}^{+\infty}
 R^k_{B,N} \, .
\]
In view of a fixed point procedure, we denote $R_{B,N}:=-2A_{B,N}^*+{\rm I}_{N}$, and 
\[
U_{n+1}=2 \sum_{k=0}^{n+1} R^k_{B,N} = R_{B,N} \Big(2 \sum_{k=0}^{n}
R_{B,N}^k\Big) + 2{\rm I}_{N} = R_{B,N} U_{n} +2{\rm I}_{N} , \quad
U_{0}=2{\rm I}_{N} \, .
\]
Indeed, the distance between $U_{n+1}$ and $U_{n}$ controls the error
 in the operator norm
\[
\|A_{B,N}^{*-1}-U_{n} \| = 2 \| \sum_{k=n+1}^{+\infty} R_{B,N}^k\| \leq 2
\|R_{B,N}^{n+1} \| \frac1{1- \|R_{B,N}\|} = \frac{\| U_{n+1}- U_{n}\|}{1-
 \|R_{B,N}\|} \, .
\]

Concerning the computation of $\partial_{t}\mu_{S}$, we have noticed in
Section~\ref{sec-bernoulli}, see \eqref{inv_muSt}, that we should invert 
\[
\cal A_{D,N}:=A^*_{S,N}-C_{D,N} A^{*-1}_{B,N}D_{D,N} \, ,
\]
 where $C_{D,N}$ and $D_{D,N}$ account for the interactions between the
 bottom and the free surface. If it has been established that $A^*_{B,N}$ is a
 perturbation of $\frac12{\rm I}_{N_{S}}$, it is not the case of $\cal A_{D,N}$
 because of the asymptotic behavior of $C_{D,N}$ and $D_{D,N}$
 when the free surface is far from the bottom. Indeed, studying the
 behavior when $z_{S}-z_{B} = \ic X + o(X)$ for large $X\in \R_+$, we get
 from the definition of $C_{D,N}$ and $D_{D,N}$ that 
\begin{align*}
 (C_{D,N})_{i,j}&=A_{tw}\frac{L_{B}}{N_{B}} \Real \Bigg[\frac1{2L\ic} \cot\Big(\frac{z_{S}(i)-z_{B}(j) }{L/\pi}\Big) z_{B,e}(j) \Bigg] \\
 &= A_{tw}\frac{L_{B}}{N_{B}} \Real \Bigg[\frac1{2L\ic} \left(-\ic z_{B,e}(j) \right)\Bigg] +o(1) \\
 &= -A_{tw} de_{B} \frac{\Real z_{B,e}(j)}{2L}+o(1).
\end{align*}
and
\[
 (D_{D,N})_{i,j} = \frac{L_{S}}{N_{S}} \Real \Bigg[\frac1{2L\ic} \cot\Big(\frac{ z_{B}(i)-z_{S}(j) }{L/\pi}\Big) z_{S,e}(j) \Bigg] = de_{S} \frac{\Real z_{S,e}(j)}{2L}+o(1).
\]
Using the decomposition of $A_{S,N}^*=\frac12({\rm I}_{N_{S}}-R_{S,N})$ and $A^{*-1}_{B,N}=2({\rm I}_{N_{B}}+\tilde R_{B,N})$ we conclude that
\begin{align*}
\cal A_{D,N} &= \frac12{\rm I}_{N_{S}} + \frac{A_{tw}de_{B} de_{S}}{2L^2}\Big(\Real z_{B,e}(j) \Big)_{i,j}\Big( \Real z_{S,e}(j)\Big)_{i,j}+\hat R_{B,N} +o(1)\\
& = \frac12{\rm I}_{N_{S}} + \frac{A_{tw}de_{B} de_{S}}{2L^2}\Big( \sum_{k=0}^{N_{B}} \Real z_{B,e}(k) \Real z_{S,e}(j) \Big)_{i,j}+\hat R_{B,N}+o(1)\\
& = \frac12 {\rm I}_{N_{S}} + \frac{A_{tw}de_{S}}{2L^2}\Big( \Real z_{S,e}(j) \Real\int_{0}^{L_{B}} z_{B,e}(e)\, de \Big)_{i,j}+\hat R_{B,N}+o(1)\\
& = \frac12 {\rm I}_{N_{S}} + \frac{A_{tw}de_{S}}{2L}\Big( \Real z_{S,e}(j)
\Big)_{i,j}+\hat R_{B,N}+o(1) = \frac12\tilde{\cal A}_{N}+\hat R_{B,N}+o(1)
\, ,
\end{align*}
where $de_{B}=L_B/N_B$ and $de_{S}=L_S/N_S \, .$

We are then first interested in inverting such matrices
\[
\tilde A= {\rm I}_{N_{S}} + (a_{j})_{i,j}.
\]
\begin{lemma}
 If $1+\sum_{k} a_{k}\neq 0$, then $\tilde A$ is invertible and we have
 \[
 \tilde A^{-1}= {\rm I}_{N_{S}} -\frac1{1+\sum a_{k}} (a_{j})_{i,j}.
 \]
\end{lemma}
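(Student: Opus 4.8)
The plan is to observe that the perturbation appearing in $\tilde A$ has rank one, so that its inverse can be read off from the Sherman--Morrison formula (and checked by a one-line direct computation).

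First I would rewrite $(a_{j})_{i,j}$ in outer-product form. Denoting by $\mathbf{e}=(1,\dots,1)^{\top}\in\R^{N_{S}}$ the all-ones column vector and by $\mathbf{a}=(a_{1},\dots,a_{N_{S}})^{\top}$, the matrix whose $(i,j)$ entry equals $a_{j}$ for every $i$ is precisely $\mathbf{e}\,\mathbf{a}^{\top}$. Thus $\tilde A={\rm I}_{N_{S}}+\mathbf{e}\,\mathbf{a}^{\top}$ is a rank-one update of the identity, and the scalar governing its invertibility is $\mathbf{a}^{\top}\mathbf{e}=\sum_{k}a_{k}$. Indeed, by the matrix determinant lemma $\det\tilde A=1+\mathbf{a}^{\top}\mathbf{e}=1+\sum_{k}a_{k}$, which is nonzero by hypothesis, so $\tilde A$ is invertible.

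For the formula for $\tilde A^{-1}$ I would not invoke Sherman--Morrison as a black box but simply verify the candidate by multiplication. Writing $s:=\sum_{k}a_{k}$ and using $\mathbf{a}^{\top}\mathbf{e}=s$, one expands
\[
\big({\rm I}_{N_{S}}+\mathbf{e}\mathbf{a}^{\top}\big)\Big({\rm I}_{N_{S}}-\tfrac{1}{1+s}\mathbf{e}\mathbf{a}^{\top}\Big)={\rm I}_{N_{S}}+\Big(1-\tfrac{1}{1+s}-\tfrac{s}{1+s}\Big)\mathbf{e}\mathbf{a}^{\top}={\rm I}_{N_{S}},
\]
and the product in the reverse order is identical; since $\mathbf{e}\mathbf{a}^{\top}=(a_{j})_{i,j}$ this is exactly the asserted inverse.

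There is essentially no obstacle here --- the statement is an elementary rank-one identity. The only points requiring care are the bookkeeping of the scalar $\mathbf{a}^{\top}\mathbf{e}$ and the observation that the hypothesis $1+\sum_{k}a_{k}\neq0$ is precisely $\det\tilde A\neq0$. The purpose of isolating this lemma is that, combined with the expansion $\mathcal{A}_{D,N}=\tfrac12\tilde{\mathcal{A}}_{N}+\hat R_{B,N}+o(1)$ established above, it exhibits $\tfrac12\tilde{\mathcal{A}}_{N}$ as an explicitly invertible leading part of $\mathcal{A}_{D,N}$, so that $\mathcal{A}_{D,N}^{-1}$ can again be reached by a Neumann-series argument, as was done for $A^{*}_{B,N}$.
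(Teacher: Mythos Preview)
Your proposal is correct and follows essentially the same approach as the paper: both verify the candidate inverse by direct multiplication, with the paper writing the computation componentwise and you expressing it via the outer-product form $\mathbf{e}\mathbf{a}^{\top}$. The extra remarks on the matrix determinant lemma and Sherman--Morrison are sound but not needed, since the paper (like you) simply checks that the product equals the identity.
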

\begin{proof}
 It is enough to check that the right hand side matrix is the right inverse to $\tilde A$, namely
 \begin{align*}
\tilde A \Big( {\rm I}_{N_{S}} -\frac1{1+\sum a_{k}} (a_{j})_{i,j} \Big) 
&= {\rm I}_{N_{S}} + (a_{j})_{i,j} -\frac1{1+\sum a_{k}} (a_{j})_{i,j} -\frac1{1+\sum a_{k}} (a_{j})_{i,j}^2\\
&= {\rm I}_{N_{S}} + \Bigg( \frac{a_{j}(1+\sum a_{k})}{1+\sum a_{k}} -\frac{a_{j}}{1+\sum a_{k}} -\frac{a_{j} \sum a_{k}}{1+\sum a_{k}} \Bigg)_{i,j} = {\rm I}_{N_{S}} .
 \end{align*}
\end{proof}

In our case, we have
\begin{equation}\label{sum-ak}
 \begin{aligned}
1+ \sum a_{j} &= 1+ \frac{A_{tw}de_{S}}{L} \sum_{j} \Real z_{S,e}(j) \\
&= 1+\frac{A_{tw}}{L}\Real\int_{0}^{L_{S}} z_{S,e}(e) \d e +o(1) =1+ A_{tw}+o(1) 
\end{aligned}
\end{equation}
which is non zero for the single-fluid formulation where $A_{tw}=1 \, ,$
but also for for the bi-fluid water-waves equations where $A_{tw}> -1$ if
$\rho_{F}>0$. 

By the previous lemma and \eqref{sum-ak}, we naturally set 
\[ 
\tilde{\cal A} := {\rm I}_{N_{S}} + \frac{A_{tw}de_{S}}{L}\Big( \Real z_{S,e}(j) \Big)_{i,j} ,\quad \tilde{\cal A}_{-1} := {\rm I}_{N_{S}} - \frac{A_{tw}de_{S}}{L(1+A_{tw})}\Big( \Real z_{S,e}(j) \Big)_{i,j}
\]
and
\[
\cal R := {\rm I}_{N_{S}} -2 \tilde{\cal A}_{-1} \cal A_{D,N} \, ,
\]
so that
\[
\cal A_{D,N} = \frac12\tilde{\cal A} ({\rm I}_{N_{S}} - \cal R ) \,,
\]
where we have neglected the error made in \eqref{sum-ak}.

We then have for $\|\cal R\|<1$ (because ${\cal A}_{D,N}=\frac12\tilde{\cal A}+o(1)$)
\[
\partial_{t} \mu_{S}= \cal A_{D,N}^{-1} {\rm RHS} = 2 ({\rm I}_{N_{B}} - \cal R )^{-1}\tilde{\cal A}_{-1} {\rm RHS}= 2 \sum_{k=0}^{+\infty} \cal R^k \tilde{\cal A}_{-1} {\rm RHS}.
\]
which can be written as a fixed point procedure
\begin{align*}
u_{n+1}&= 2\sum_{k=0}^{n+1} \cal R^k \tilde{\cal A}_{-1} {\rm RHS} = \cal R \Big(2 \sum_{k=0}^{n} \cal R^k\tilde{\cal A}_{-1} {\rm RHS} \Big) + 2\tilde{\cal A}_{-1} {\rm RHS} \\
&= \cal R u_{n} +u_{0} , \qquad u_{0}=2\tilde{\cal A}_{-1} {\rm RHS} \, . 
\end{align*}

Concerning the vortex method, we need to inverte $B_{B,N}$ which appears in the determination of $\gamma_{B}$ in the vortex formulation, see Sections~\ref{sec-gammaS0}, \ref{sec-gammaB}, \ref{sec-muS0}. Such an operator is related to the classical vortex method in domains with boundaries, see the operator $B$ in \cite[Section 3]{ADL}. The discrete version
\begin{align*}
 &B_{B,N}(i,j)= \frac{L_{B}}{N_{B}} \Imag \Bigg[ \frac{z_{B,e}(\tilde e_{B,i})}{2L\ic} \cot\Big(\frac{z_{B}(\tilde e_{B,i})-z_{B}(e_{B,j}) }{L/\pi}\Big)\Bigg]
 \ \forall (i,j)\in [1,N_{B}-1]\times [1,N_{B}]\,,\\
 &B_{B,N}(N_{B},j)=\frac{L_{B}}{N_{B}} \ \forall j\in [1,N_{B}]\,,
\end{align*}
is invertible if $\tilde e_{B,i}\in (e_{B,i},e_{B,i+1})$ are close to a
uniform distribution $\tilde \theta_{i}= (\theta_{i}+
\theta_{i+1})/2\,$.
Moreover \cite[Theorem 2.1]{ADL} states that
$\gamma_{B, N} = B_{B,N}^{-1} RHS_{VB,N}$ is a good approximation of
$\gamma_{B}\,$. Even though it is invertible, the matrix is not
well-conditioned and cannot be seen as a Neumann series, except relating
$B^{-1}$ to $A^{-1}$ through \cite[Equations (3.19) $\&$
 (3.22)]{ADL}.
However, this step only needs to be performed once at the beginning of the numerical
integration (since this matrix does not
evolve with time). It is thus worth inverting it accurately. 

The last matrix to inverte is 
\[
\cal A_{V,N}:=A_{S,N}-C_{V,N} B^{-1}_{B,N}D_{V,N}
\]
to compute $\partial_{t}\gamma_{S}$, see \eqref{inv_gammaSt}, which can be
also inverted by Neumann series as we did for $\cal A_{D,N}$.

\section{Numerical results and convergence}\label{Sect_Num}

In order to compare and validate the various numerical approaches, we have
considered three different initial conditions and test cases when the
bottom is flat $\Gamma_{B}=\T_{L}\times \{-h_{0}\}$. In all test cases, we
used $N_S=N_B=N$.

While simulations were performed using different values of the Atwood
number, we report here simulations with $A_{tw}=1 \, ,$ i.e. the
single-fluid formulation, which can be compared with existing solutions.
We also neglect surface tension in all the test cases below. This makes the
tests below more demanding, since the surface tension generally has 
regularising effects.

The parameter $\alpha$ was set to $\alpha=1$,
implying that the points are advected tangentially at the velocity of the
lower fluid. Also all test cases presented here include a flat bottom
(though the cases of infinite depth or variable bottom can also be handle
using the same code). The simulations are made non-dimensional, with a
length-scale based on the water depth, such that $h_0=1$, and a time-scale
based on gravity, such that $g=1 \, .$

We will numerically investigate the stability of our scheme in a few test cases.
We will then validate the numerically obtained solutions against analytical solutions, when
available.
When not available, we will simply check convergence to an $N$ independent solution
in the limit of large $N\, .$ 
Another useful validation concerns conserved quantities. 
The first thing to be checked is the conservation of mass
\begin{align*}
M(t):=&\rho_{F}{\rm Vol\,}\D_{F}(t)=\rho_{F}\iint_{\D_{F}(t)} \div \begin{pmatrix} 0\\ x_{2} \end{pmatrix} \d \mb x = \rho_{F}\int_{\partial\D_{F}(t)} \begin{pmatrix} 0\\ x_{2} \end{pmatrix} \cdot \tilde{\mb n} \d \sigma(\mb x)\\
=&\rho_{F} \int_{0}^{L_{S}} z_{S,2}(e) z_{S,e,1}(e)\d e - \rho_{F} \int_{0}^{L_{B}} z_{B,2}(e) z_{B,e,1}(e)\d e
\end{align*}
and 
the total energy
\begin{align*}
E(t):=&\frac{1}2 \iint_{\D_{F}(t)}\rho_{F} |\mb u|^2 + \iint_{\D_{F}(t)}\rho_{F} g x_{2}\\
=& \frac{\rho_{F}}2 \iint_{\D_{F}(t)} \nabla \phi_{F}\cdot \nabla^\perp \psi + \frac{\rho_{F} g}2 \iint_{\D_{F}(t)} \div \begin{pmatrix} 0\\ x_{2}^2 \end{pmatrix} \d \mb x \\
=&-\frac{\rho_{F}}2 \int_{0}^{L_{S}} \Real\Big[\widehat{\mb u}_{F}(z_{S}(e)) z_{S,e}(e)\Big] \psi(z_{S}(e))\d e\\&+ \frac{\rho_{F}}2 \int_{0}^{L_{B}} \Real\Big[\widehat{\mb u}_{F}(z_{B}(e)) z_{B,e}(e)\Big] \psi(z_{B}(e))\d e\\
&+\frac{\rho_{F}g}2 \int_{0}^{L_{S}} z_{S,2}^2(e) z_{S,e,1}(e)\d e - \frac{\rho_{F}g}2 \int_{0}^{L_{B}} z_{B,2}^2(e) z_{B,e,1}(e)\d e
\end{align*}
where the expression of $\widehat{\mb u}_{F}$ and $\psi$, given in
\eqref{BS-u} and \eqref{psi-mu}, has to be considered with the limit
formulae of Appendix~\ref{app-cot}. 

We also successfully checked conserved integrals for domains with horizontal symmetries
\cite{AlazardConserv} (not detailed here).

\subsection{Case 1: linear water waves}

The first test case we consider is that of simple waves of small
amplitude, it takes the form provided by Stokes at first order
\begin{equation}
\eta(t,x)= A \,\cos( k x - \omega t)\, , \qquad
\Phi(t,x, y)= A \, \frac{\omega}{k} \, \frac{\cosh k(y+h_0) }{\sinh k h_0}
\,\sin( k x - \omega t)\, ,
\label{eq z Phi lin}
\end{equation}
\vskip -2mm
\begin{equation*}
\text{with\ } \omega= \sqrt{ g k \tanh k h_0 } \, ,
\text{\ and\ } u_x=\partial_x \Phi \, , u_y=\partial_y \Phi \, .
\end{equation*}
Our initial condition is thus
\begin{equation}
 \eta_{0}(x)= A \,\cos( k x )\, , \qquad
 {\mb u} \cdot {\mb n} = A \, \sin kx \, \sqrt{g k \, \tanh kh_0} \, .
\label{eq CI simple wave}
\end{equation}

We also consider Stokes waves at second order, where the initial data is slightly modified (see \cite{Johnson}):
\begin{align*}
 \eta_{0}(x)&= A \,\cos( k x ) + kA^2 \frac{3- \tanh^2 kh_{0}}{4\tanh^3 kh_{0}}\, \cos( 2k x) \, , \\
 {\mb u} \cdot {\mb n} &= \frac{A \, \sin kx \, \sqrt{g k \, \tanh kh_0}}{\sqrt{1+k^2A^2\, \sin^2(kx)}}\Big(1+ \frac{kA}{\tanh kh_0}\, \cos (kx) \Big) \, .
\end{align*}

In our simulations, we used ${\mb u} \cdot {\mb n}$ as boundary condition to
numerically find $\gamma_{0}$ and $\mu_{0}$ (see \S~\ref{sec-gammaS0} and \S~\ref{sec-muS0}).
This contrasts with \cite{Baker82} which provides the analytical expression
for both $\gamma$ and $\mu$ in the limit of a vanishing amplitude. This distinction is small
at this stage, but turns out to be important at later stage (see our third test case below).

We consider numerically a wave number $k=1$, and a domain of horizontal extent $L=2\, \pi\, .$
We integrate our simulations for a time $t_f=10 \, k / \omega \, .$
We introduce a CFL-number
\begin{equation*}
CFL= \min (| \dot z_S | \, \Delta t / (|z_{S,e}| \, \d e)) \, ,
\end{equation*}
based on the Lagrangian velocity of the points sampling the interface.
We have checked that our numerical code is stable for a CFL-number
$CFL \leq CFL^*$, we observed numerically that $CFL^* \in [0.25, 0.5)\, .$
We can then test the convergence of our numerical solution by varying the spatial
resolution $N \, .$ In order to keep temporal truncation error, we then
used $CFL \leq 1/10 \, .$
Convergence is then assessed by comparing the
final state to the initial condition $\cal E=\| z(t_f) - z(0) \| _\infty$. The
resulting errors for waves of amplitudes $A=10^{-2}$ and $A=10^{-3}$
respectively are represented in Figure~\ref{fig simple wave}(a,b).
The error, defined in $L^\infty$-norm, decreases with increasing resolution to a value which decreases
with the amplitude of the wave. This can be interpreted as the signature of
weak non-linear corrections. 
Interestingly, the vortex and the dipole methods are indistinguishable in these graphs.

\begin{figure}
 \centerline{(a){\includegraphics[height=0.35\textwidth]{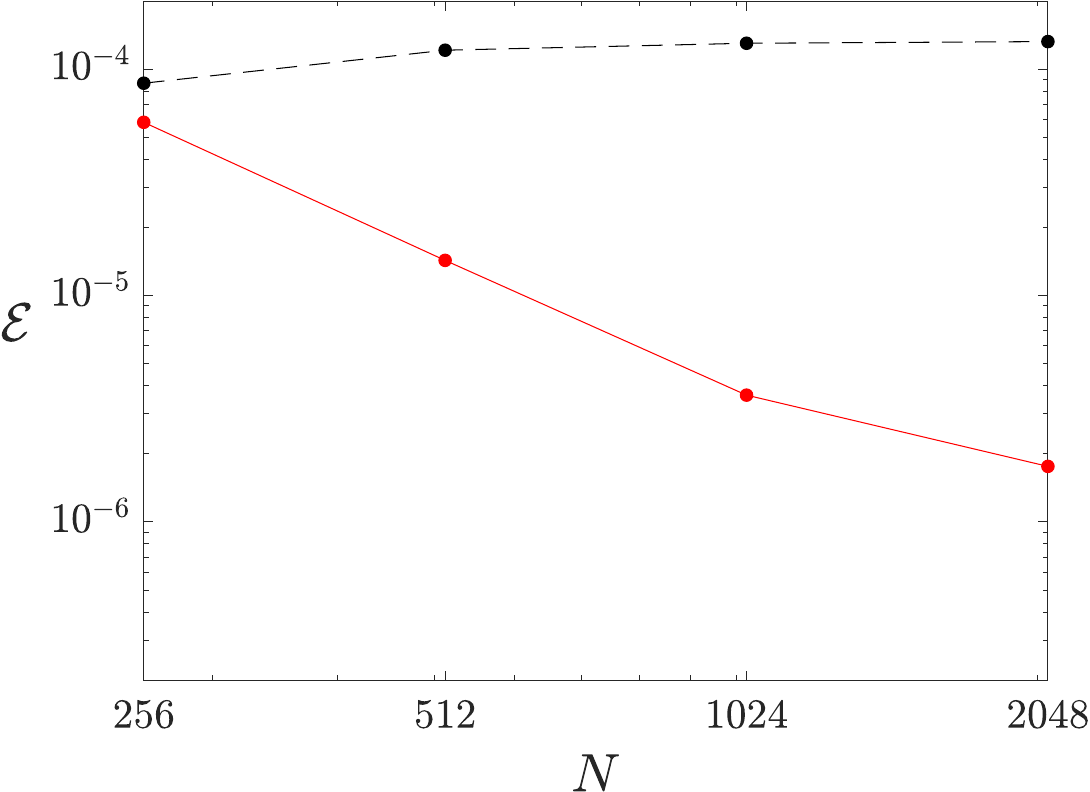}}
 \ \ (b){\includegraphics[height=0.35\textwidth]{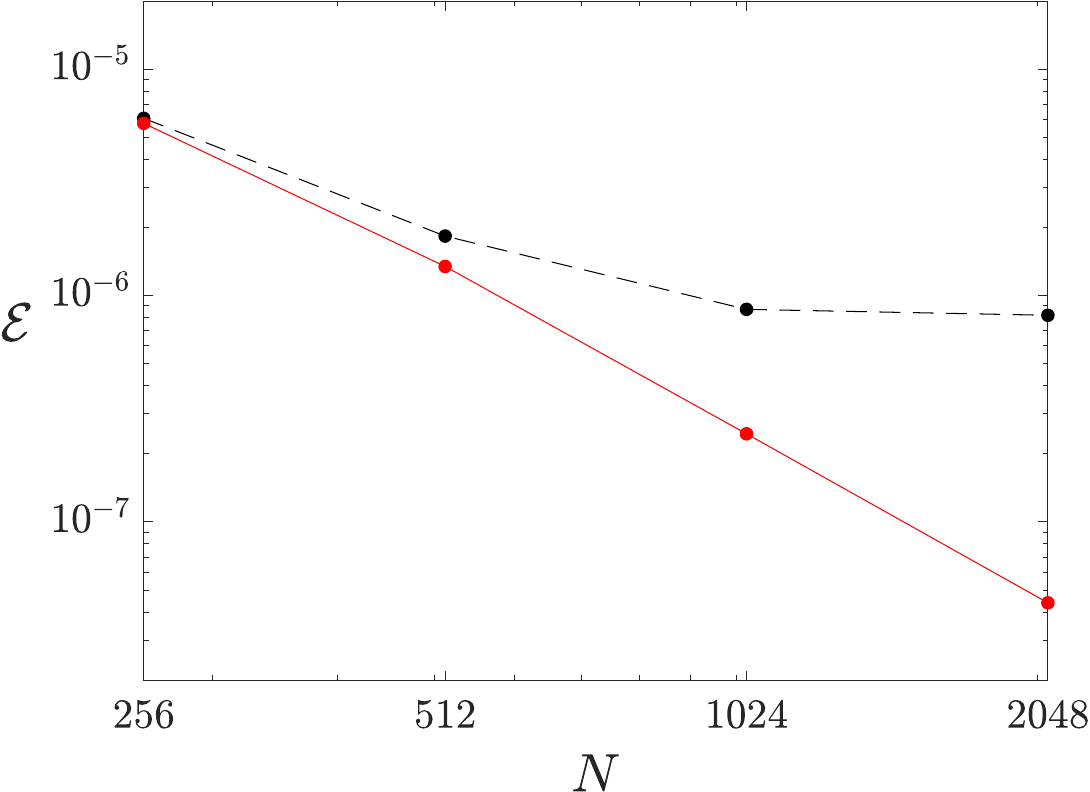}}}
\caption{Convergence of the numerical integration after 10 revolution periods, for
 simple waves with amplitude (a) 0.01 and (b) 0.001. The error (compared to
 the initial condition) defined as $\cal E=\| z(t_f) - z(0) \| _\infty$ is
 represented as a function of the number $N$ of points used in the
 numerics.
 The black dashed
 curve corresponds to the first order simple wave, whereas the solid red
 curve corresponds to the second order Stokes solution. The vortex and the
 dipole methods are indistinguishable in these graphs. 
}
\label{fig simple wave}
\end{figure}

\subsection{Case 2: solitary waves}
Our second test case concerns solitary waves. We consider here an
extension of solitary waves to a periodic domain.
This involves the cnoidal wave solution for the Green-Naghdi equation.

To present this solution, first we define the Jacobi elliptic functions
${\rm cn}(u,m):=\cos \varphi(u,m)$ as the inverse of the elliptic integral
\[
u(\varphi,m):=\int_{0}^\varphi \frac{\d \theta}{\sqrt{1-m\sin^2\theta}}
\] 
where $m\in (0,1)$. We also need the {\it complete elliptic integral of the first and second kind}:
\begin{equation*}
K(m):=\int_{0}^{\frac{\pi}{2}}\frac{\d\theta}{\sqrt{1-m\sin^2\theta}}\quad\text{and}\quad E(m):=\int_{0}^{\frac{\pi}{2}}\sqrt{1-m\sin^2\theta}\d\theta\,.
\end{equation*}

For a given {\it period} $L$, {\it amplitude} $A$ and {\it depth} $h_{0} \, ,$
we determine the nonlinearity parameter $m\in (0,1)$ verifying the following dispersion relation
\begin{equation*}
AL^2=\frac{16}{3}mK^2(m)\frac{h_{0}^2}{g}c^2(m),
\end{equation*}
where the velocity $c$ is given by
\[
c(m):=\sqrt{g h_{0}\Big(1+\frac{\eta_1(m)}{h_{0}}\Big)\Big(1+\frac{\eta_2(m)}{h_{0}}\Big)\Big(1+\frac{\eta_3(m)}{h_{0}}\Big)}
\]
with
\[
\eta_1(m):=-\frac{A}{m}\frac{E(m)}{K(m)},\quad \eta_2(m):=\frac{A}{m}\Big(1-m-\frac{E(m)}{K(m)}\Big),\quad\eta_3(m):=\frac{A}{m}\Big(1-\frac{E(m)}{K(m)}\Big).
\]

Once $m$ is obtained, the Green-Naghdi soliton is then defined by the surface elevation
\begin{equation*}
\eta(t,x)=\eta_2+A\, \hbox{cn}^2\Big(\frac{2K(m)}{L}(x-ct),m\Big).
\end{equation*}
As it translates to the right with velocity $c$, we write 
\[
\mb u\cdot \mb n\vert_{(0,x,\eta(0,x))}= -c\frac{\partial_{x}\eta(0,x)}{\sqrt{1+|\partial_{x}\eta(0,x)|^2}},
\]
which uniquely defines $\gamma_{0}$ and $\mu_{0}$, see \S~\ref{sec-gammaS0}
and \S~\ref{sec-muS0}.

For more details on cnoidal waves, we refer to \cite{cnoide1, cnoide2}, and references therein. 

We consider numerically a domain of large extent to allow for a localized solitary
wave, and choose $L= 40 \, \pi \, .$
For an amplitude $A=0.1\, ,$ we computed a circulation $\gamma = 3 \cdot 10^{-3} \, .$
We checked numerically that setting $\gamma=0$ did not alter the solution
in this case (i.e. the circulation is weak enough not to affect the solution).
The results are presented in Figure~\ref{fig solitary}.
Figure~\ref{fig solitary}(a) highlights the modification induced on the above initial
condition after one full period, i.e. $t^*=L/c\simeq 124.23$. The difference (represented in dashed
blue) appears dominated by a correction on the velocity $c$, presumably
due to higher order correction terms. We performed simulations up to
$t=10\, t^*$ which confirmed that no change of shape is observed, but the
phase lag with the analytical solution increases with time.

Both methods appear stable in time and no significant change on the
numerical solution was observed when the time-step was reduced by a factor
of $2 \, .$

The numerical convergence of both methods is demonstrated in
Figure~\ref{fig solitary}(b).
Here the solitons are localized and a small difference of phase
velocity yields a large $L^\infty$-norm, not reflecting the actual distance
between the two curves.
We thus use a discrete Hausdorff distance between two solutions to measure
the error (see Fig.~\ref{fig solitary}.b):
\begin{equation}
{\cal E}=d_H(z_1,z_2)\equiv
\max\left(\max_i(\min_j(|z_1(i)-z_2(j)|)),\max_j(\min_i(|z_1(i)-z_2(j)|))\right)\,
,
\label{hausdorff}
\end{equation} 
where both curves were interpolated using $2^{17}\simeq 130.000$ points in
order to identify accurately enough the closer points on both curves.

Both methods exhibit an approximately
second order convergence in space (as highlighted by the black dotted
line).
Both the vortex and the dipole methods yield very good results on this test
case as well.

For this rather long integration (up to $t=t^*\simeq 124.23$), we observe that the
volume is conserved up to fluctuations of the order of $ 10^{-8}$ and the
total energy (kinetic and potential) is very weakly dissipated (or the order
of $10^{-4}$ over the full period).

\begin{figure}
\centerline{(a){\includegraphics[height=0.35\textwidth]{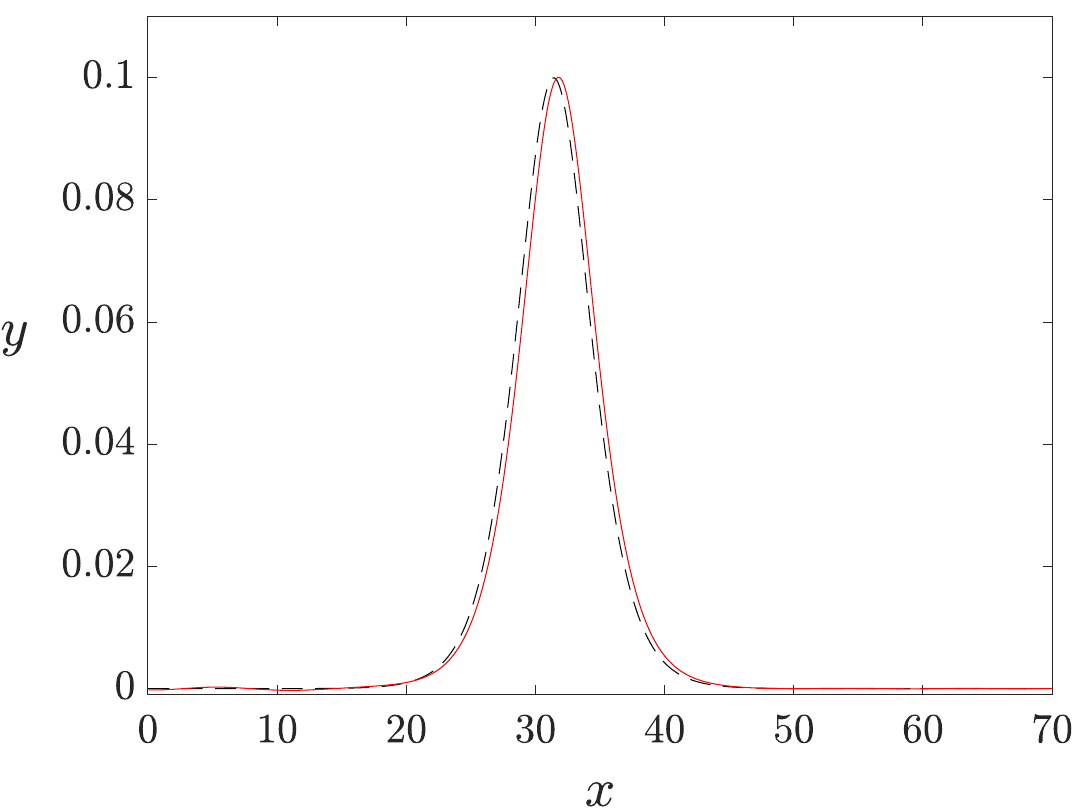}}
 \ \ (b)\raisebox{-0.2ex}{\includegraphics[height=0.35\textwidth]{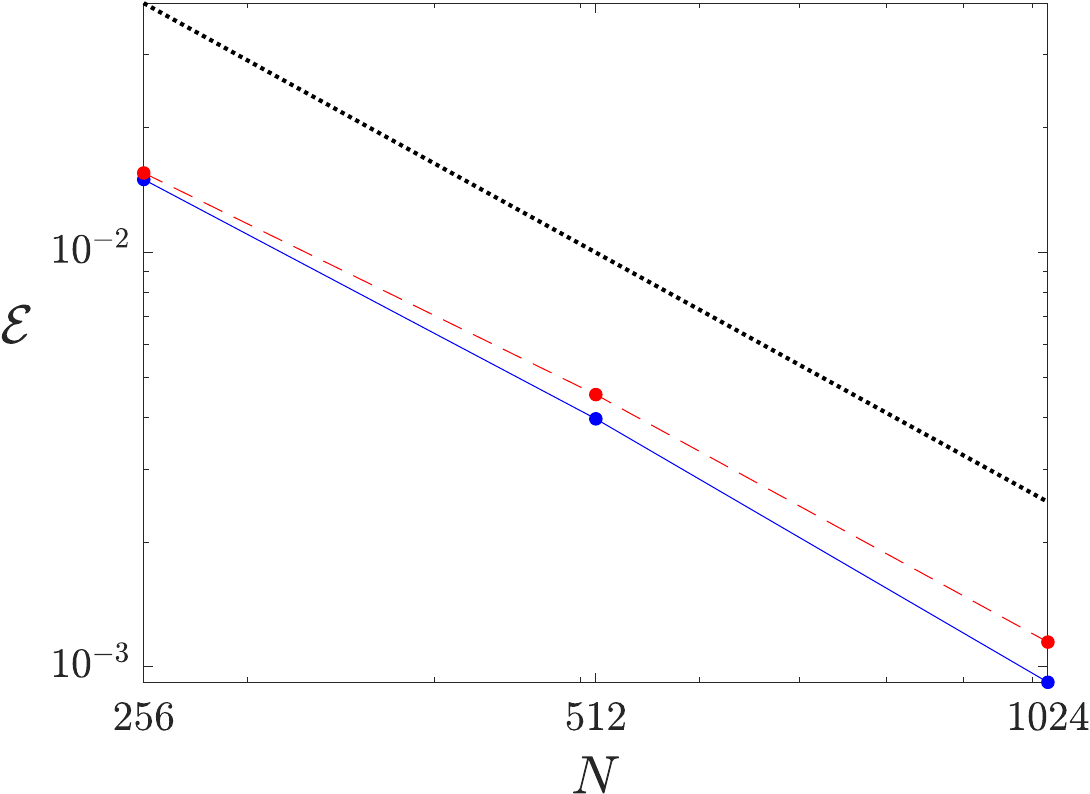}}}
\caption{(a) The initial solitary wave (dashed black) and the numerical integration with $N=2048$
 using the dipole formulation up to $t=124.23$,
i.e. after one revolution of the analytical solution. 
(b) Convergence of the dipole method (blue) and vortex method (red). The
error, measured using the Hausdorff distance, for a given resolution $N$ is
computed relative to $N=2048$ for the 
same method at time $t=124.23$. The black dotted line illustrates 
second order convergence. The full computational domain extends to $L=40\pi \, .$
}
\label{fig solitary}
\end{figure}

\subsection{Case 3: wave breaking}

Finally we turn to a strongly non-linear problem, that of a wave
breaking.
We consider a flat bottom and a mean water depth $h=1$.
 We consider an initial interface of the form
 \begin{equation} 
 \eta_{0}(x)= A \,\cos( k x ) \qquad \text{with}\ A=\frac{1}{2}\
 \text{and} \ k=1 \, .
 \label{CI breakinga} 
 \end{equation} 
Our initial velocity
 stems from \eqref{eq CI simple wave} but is now used with large
 amplitude.
 In doing so it is important to relax
 the approximation $\mb n \simeq \mb e_y \, .$ We start with \eqref{eq z Phi lin} which yields the
initial condition on velocity
\begin{equation*}
u_x=\partial_{x}\Phi= A\, \sqrt{\frac{g k}{\tanh kh}} \, \cos kx\, ,
\qquad
u_y= \partial_{y}\Phi=A\, \sqrt{g k \, \tanh kh} \, \sin kx\, .
\end{equation*}
Using expression of the normal
\begin{equation*}
 \mb n =
 \frac{1}{\sqrt{1+(\partial_x \eta)^2}}
 \begin{pmatrix}
 - \partial_x \eta\\
 1
 \end{pmatrix} 
 =
 \frac{1}{\sqrt{1+k^2 A^2 \sin ^2 kx}}
 \begin{pmatrix}
 k A \sin kx \\
 1 
 \end{pmatrix} \, ,
\end{equation*}
we get
\begin{equation}
 \mb u \cdot \mb n = \frac{A \, \sin kx \, \sqrt{g k \, \tanh kh} }{\sqrt{1+k^2 A^2 \sin ^2 kx}}
 \left(1 
 +k \,
 A\, {\frac{1}{\tanh kh}} \, \cos kx\, 
 \right)
\label{CI breakingb} 
\end{equation}
which resumes to \eqref{eq CI simple wave} in the limit of vanishing amplitudes.
It is interesting to note that this differs from \cite{Baker82}, which used
the $\gamma_{0}$ and $\mu_{0}$ constructed from the small amplitude approximation.

It should also be stressed that the construction of a sensible initial
condition at large amplitude (wave breaking) from the simple wave analytics
is necessarily based on simplifying assumption as there is no analytical
solution in this strongly non-linear limit.

This situation is in a fully non-linear regime and more challenging
numerically than the two previous test cases.

We observed here, as reported by \cite{Baker82}, that for fully-nonlinear
configurations the vortex method is unstable to a high wave number
instability. As the resolution is increased and higher wave numbers are
resolved, the integration time before an instability occurs decreases (see
first column in Table~\ref{table final times}).

Such is not the case (again as stressed by \cite{Baker82}) for the dipole
method, for which the integration time is fairly independent of the
resolution (see second column in Table~\ref{table final times}).

\begin{table}
 \[
 \begin{array}{l|llll}
 & {\rm Vortex} &{\rm Dipole} & {\rm OEC\!-\!dipole} & {\rm F\!-\!vortex}\\
 \hline
256 & 2.40 & 3.02 & 3.06 & 2.81 \\
512 & 2.03 & 3.04 & 3.37 & 2.80 \\
1024 & 1.73 & 3.13 & 3.39 & 2.86 \\
2048 & 1.15 & 3.10 & 3.60 & 2.93 \\
 \end{array}
 \]
 \caption{Final integration time before the occurrence of a numerical
 instability for various numerical schemes (the F-vortex method is
 introduced in \S~\ref{sectionFFT}).
We also integrated the {\rm OEC\!-\!dipole} formulation (introduced later
in the paper) with $4096$ points and
obtained a stable solution (nearly undistinguishable from the $2048$ points
simulation) until $t=3.61$. 
 }
 \label{table final times}
\end{table}

As we have shown in the first two cases, both the vortex method and the
dipole method are stable and can be used in situations involving a small to
moderate curvature. In the case of huge curvature (such as wave-breaking),
the vortex method becomes impractical and does not converge (as the
stability decreases with increasing resolution).

For both methods, we observed that the volume and total energy are
conserved up to less than $1 \% \, .$

The case of the dipole
method is not quite as severe, but we observe a numerical instability at a
time independent of $N$, well before the splash occurs.
The instability of the dipole method appears to develop first in the form of points approaching
each other in the direction tangent to the interface, in a manner similar
to the so called `phantom traffic jam' instability. In order to delay the
formation of this instability, we introduced an `odd-even coupling' (OEC)
at the end of each time-step in the form of an arbitrary regularization on $\mu$ 
by replacing at the end of each time step,
$\partial_t \mu_S(e_k)$ with
$\left( \partial_t \mu_S(e_{k-1})+2 \partial_t \mu_S(e_k)+\partial_t
\mu_S(e_{k+1})\right)/4\, .$
Using this approach we could extend the integration time by nearly
$10\%$
(see Table~\ref{table final times}).

For large grids, this coupling procedure does not alter the
simulations on the first two test-cases such as simple waves or solitary waves,
but does stabilise the numerical integration of wave breaking. 

The OEC approach introduces a stabilisation, which however does not affect the
overall convergence of the scheme. At times $t=1$, $t=2$ and $t=3$,
 the convergence of the OEC approach is
undistinguishible form that of the dipole method, see figure~\ref{fig oec
 nooec}.
Besides the OEC vanishes continuously in the limit of large grids.

It is worth stressing that contrarily to other regularization techniques
previously used on this problem (and discussed in the following section),
the above OEC does not involve any arbitrary small parameter $\varepsilon$
other than the grid space. This approach is thus free of the risk to present results
with vanishing distance between points and yet a finite regularization.

The curves marking the interface between water and air, are generally not
graphs for this test-case. We thus stick to the discrete Hausdorff distance 
between two curves, see
\eqref{hausdorff},
to measure the error (see Figs.~\ref{fig oec nooec} and \ref{fig wave breaking}).

It is worth stressing that the very same test case has been recently
investigated using the Navier-Stokes equations and a Finite Element
discretization, and that convergence has been achieved to the solution
portrayed on Fig.~\ref{fig wave breaking}.a as the Reynolds number is
increased~\cite{Riquier23}.

\begin{figure}
 \centerline{(a){\includegraphics[height=0.35\textwidth]{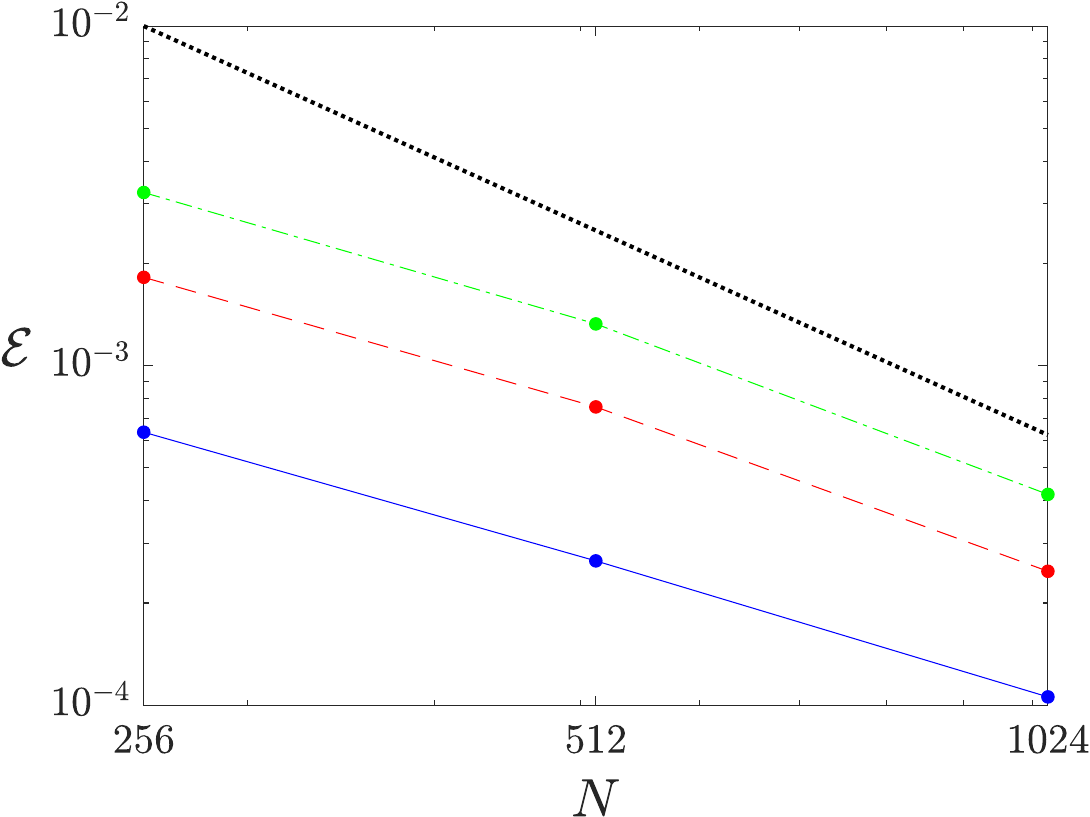}}
 \ \ (b){\includegraphics[height=0.35\textwidth]{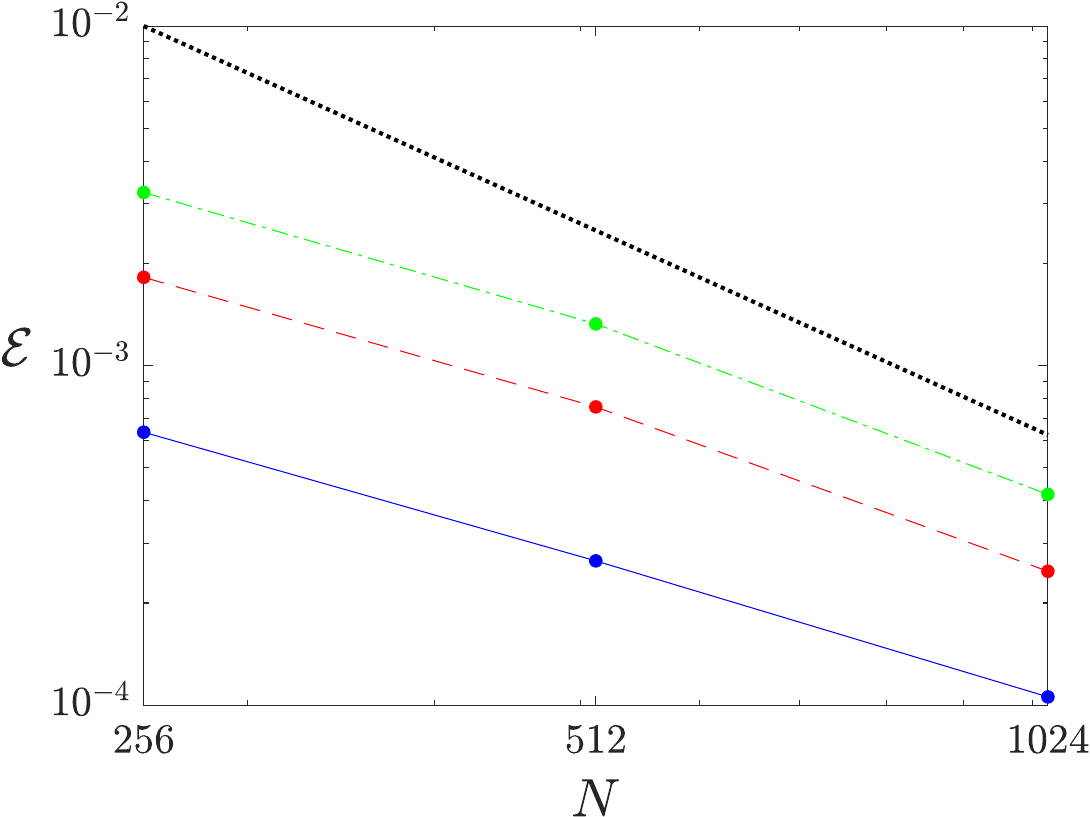}}}
 \caption{Convergence of the dipole method (a) and of the OEC-dipole (b).
 In both cases the reference solution is that obtained with the dipole
 method for $N=2048 \, .$ The error, measured by the Hausdorff distance, is represented at time $t=1.00$
 (solid blue), $t=2.00$ (dashed red) and
 time $t=3.00$ (dot dashed green). The dotted black curve indicates second
 order decrease.}
\label{fig oec nooec}
\end{figure}

\begin{figure}
 \centerline{\includegraphics[width=1.0\textwidth,trim=20 70 10 100, clip]{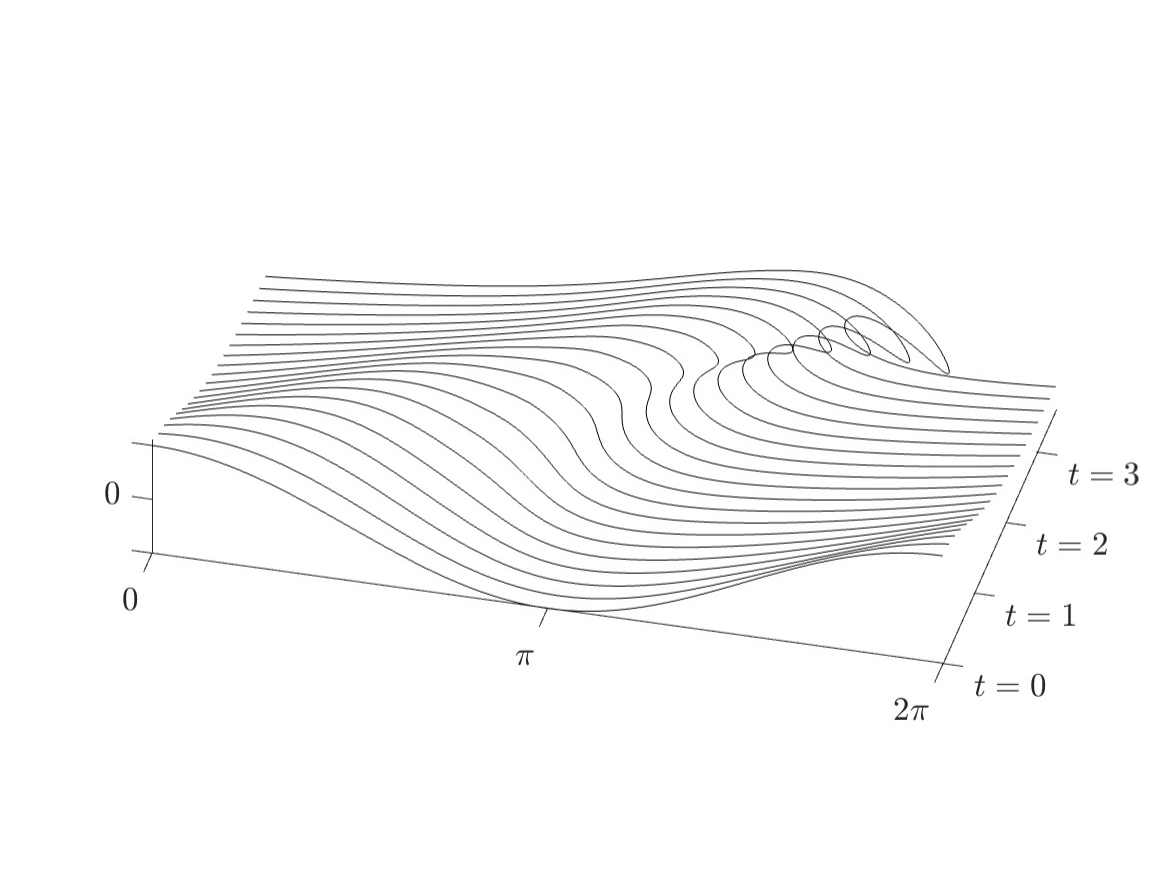}}
 \vskip -6mm
 \leftline{(a)}
 \vskip 1mm
 \centerline{\includegraphics[width=1.0\textwidth,trim=0 0 0 0, clip]{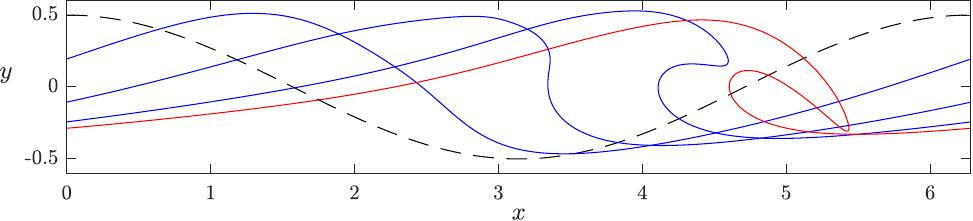}} 
 \vskip -6mm
 \leftline{(b)}
 \centerline{(c)\includegraphics[height=5.7cm]{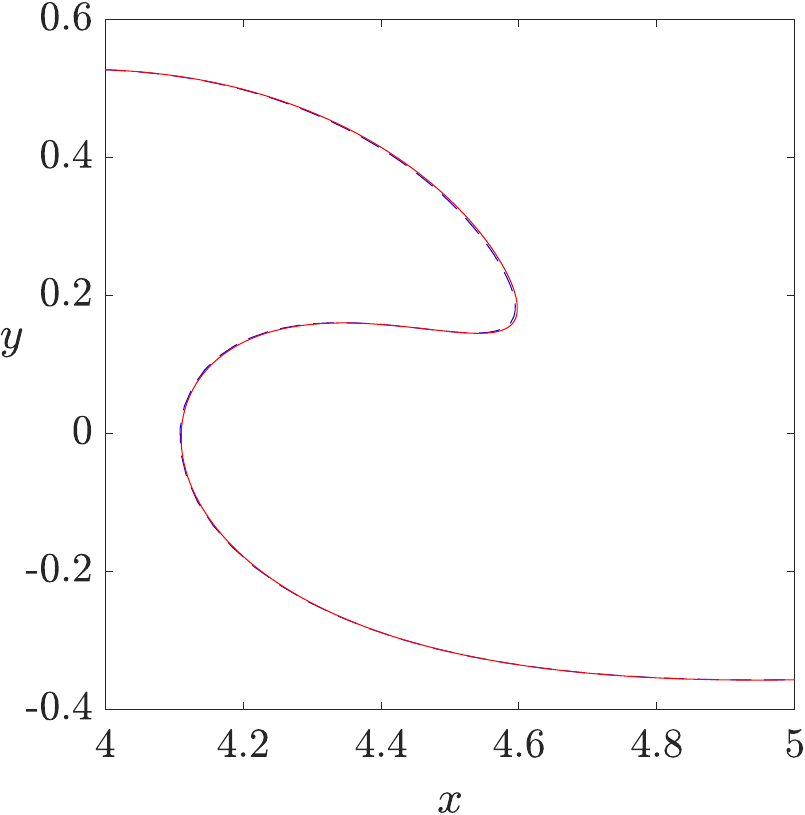} \ \
 (d)\raisebox{-0.2ex}{\includegraphics[height=5.7cm,trim=0 0 0 0, clip]{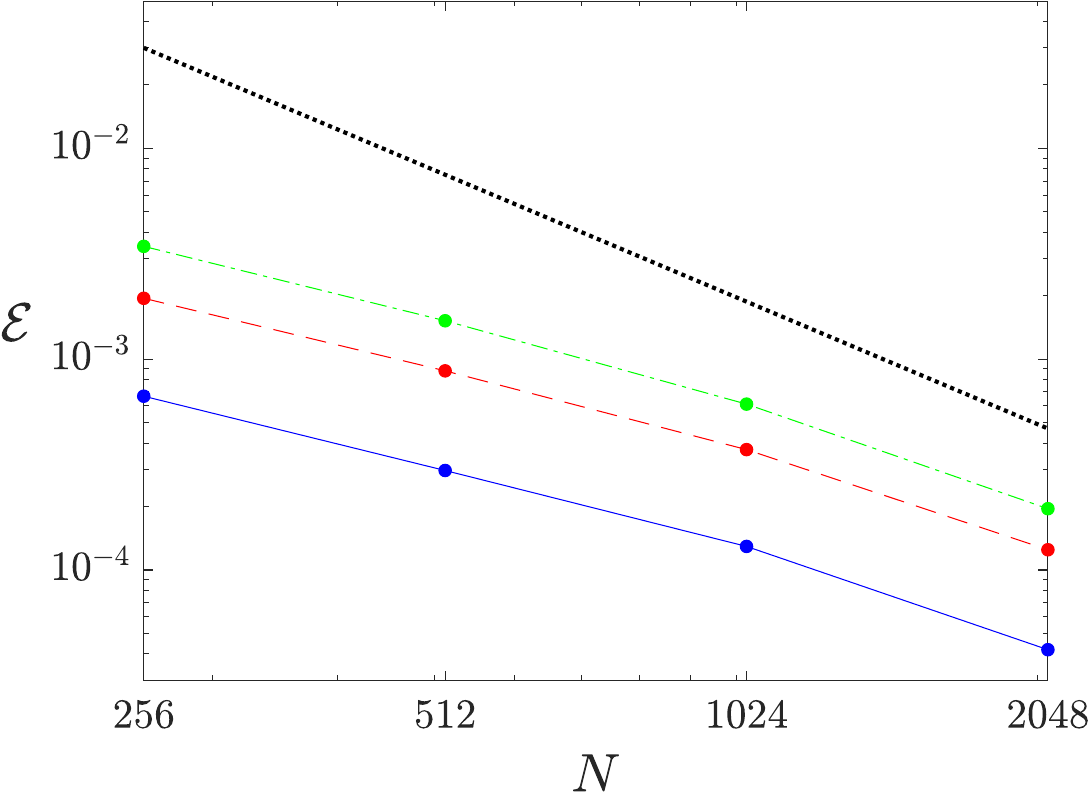}}}
 \caption{(a) Time evolutions of the breaking wave with $N=4096$ and the
 OEC-dipole discretization from the $\tfrac{1}{2}\cos(x)$ profile at
 $t=0$ (see~\eqref{CI breakinga}--\eqref{CI breakingb}) to the splash at
 $t\simeq 3.6$.
 (b) Plots at time $t=1$, $t=2$, $t=3$, $t=3.6$.
 (c) Solutions using the OEC-dipole discretization with 256 points (dashed
 blue) and 4096 points (solid red) at time t=3.0. (d) Convergence
of the OEC-dipole toward the OEC-dipole $N=4096$ curve used as reference at time $t=1.0$
 (solid blue), $t=2.0$ (dashed red) and
 time $t=3.0$ (dot dashed green). The dotted line indicates the exact second order convergence
slope.}
 \label{fig wave breaking}
\end{figure}

\section{Comparison with regularization strategies}\label{sec6}

\subsection{Fourier filtering for the vortex method} \label{sectionFFT}

Since the Vortex methods appears unstable (see \cite{Baker82} and Table~\ref{table final times}),
some Fourier filtering can be introduced.
The strategy of a filtered Vortex method is for example followed by
\cite{Wilkening21}.

The filtering corresponds to a product in Fourier space of both $z_S$ and
$\gamma_S$ with a filter function. We used here the filter function
introduced in \cite{Baker11} (see eq. (2.14) in this reference)
\[
\hat F = 
\frac{1}{2} -\frac{1}{2}\tanh\left(\frac{2 \, |k| \,
 \pi/N_S-\xi_0}{d}\right) \, .
\]
Two parameters are introduced in the above. 
$\xi_0$ locates the center of the transition zone (usually some fraction of $\pi$)
and $d$ controls the width of the transition zone.
Following \cite{Baker11}, we used $\xi_0=\pi/4$ and $d=\pi/40$.

The Fourier filtered method (referred to as `F-vortex' in the text) yields an
increased stability and thus longer integration. Remarkably the final time
of integration for the F-vortex method is independent of $N$ (see Table~\ref{table final times})
and the method does converge see
Fig.~\ref{error_FFT_nooec}(a).
Although the simulation has been extended in time far beyond the unfiltered
vortex method, the solution does converge (to first order) to that of the
dipole method, see
Fig.~\ref{error_FFT_nooec}(b).
The final integration time
however remains shorter than for the dipole method (let alone the
OEC-dipole method).

\begin{figure}
\centerline{(a) \includegraphics[height=0.3\textwidth]{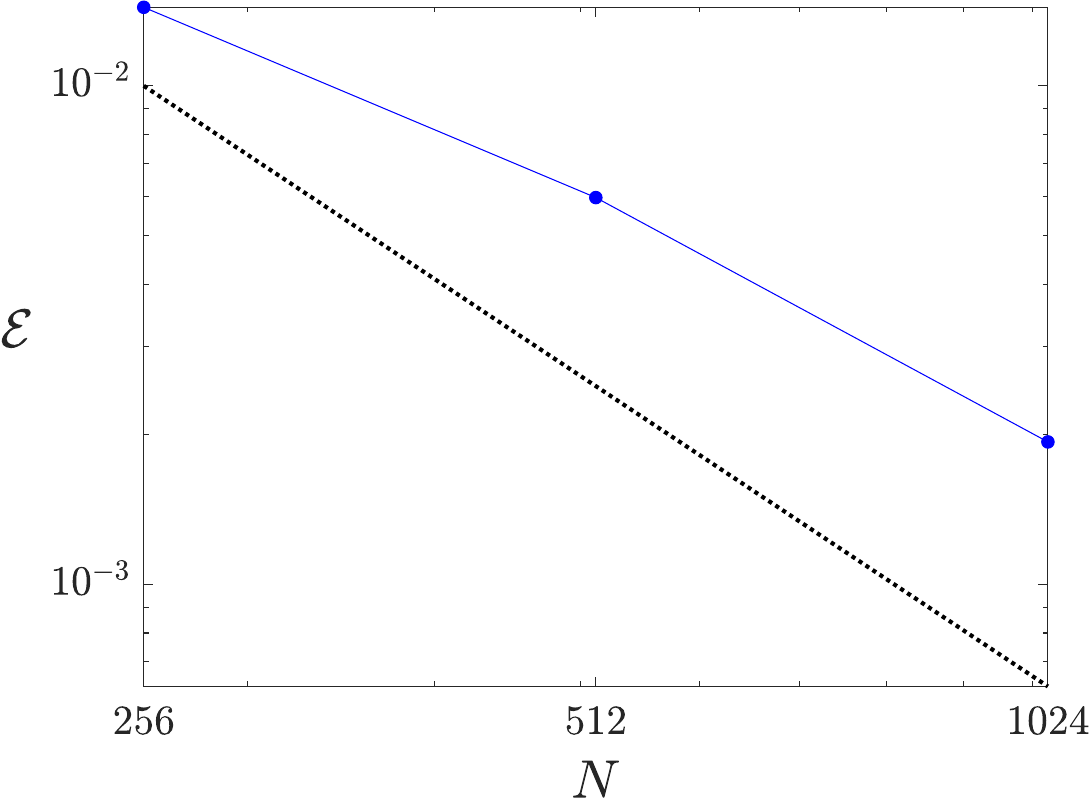}
\ (b) \includegraphics[height=0.3\textwidth]{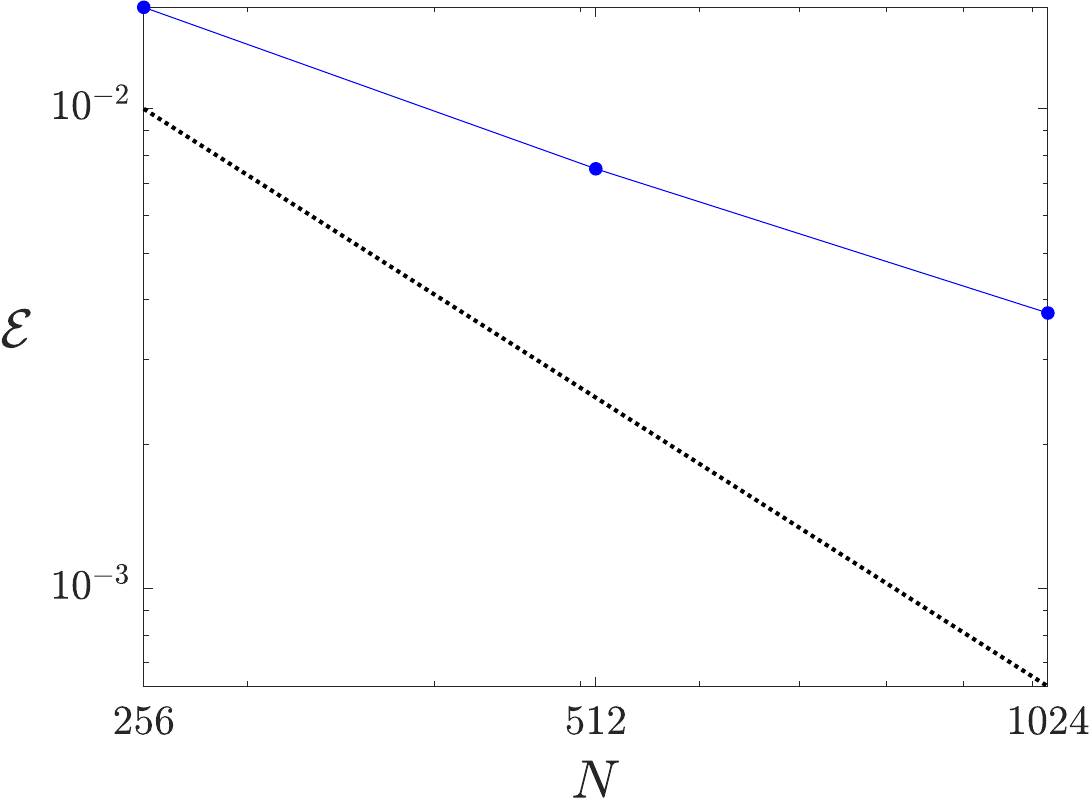}} 
 \caption{Convergence of the F-vortex method at time $t=2.75 \, ;$
 (a) compared to $N=2048$ with the same method (used as a reference). (b) compared to the
 dipole method with $N=2048 \, .$
 The dotted black line indicates second order convergence.}
\label{error_FFT_nooec}
\end{figure}

The use of $\xi_0=\pi/4$ in the above tests (guided by \cite{Baker11})
yields an `effective' resolution of approximately $N/4 \, , $ (though with a
larger stability than the pure vortex method with $N/4$).
Increasing the cut-off frequency, say to $\xi_0=\pi/2$ instead of
$\xi_0=\pi/4$ yields a less stable scheme. The observed time for
instability with $\xi_0=\pi/2$ was $t=2.64$ for $N=512$ and $t=2.24$ for
$N=1024 \, .$

We should also stress that \cite{Baker11} introduced, in the case of a very
stiff initial data, a filtering on the
dipole method. This interesting approach stabilizes the dipole method, thus
allowing for longer time integration.

\subsection{Curve-offset method}
Another approach to regularise the boundary integral consists in considering
that the vortices are located at a finite distance above the free surface
(e.g. \cite{Cao91,Tuck95,Tuck98,Scolan}).
This finite offset prevents any singularity in the integration as the
vortices are fictiously located at
\begin{equation*}
 (X_j,Y_j)=(x_j,y_j)+L_d \, \mb{n} \quad \text{with } L_d=\delta L/N \, .
\end{equation*}
While no singularity can occur with this technique, the kernel in the vortex
formulation takes the form
$\cot (\pi(x_{s,i}-x_{v,j})/L)$ where $(x_{s,i})$ are located on the
free surface whereas $(x_{v,i})$ are located a finite distance above the
free surface. It is finite but of order $\cot (\pi L_{d}/L)\, ,$ which becomes large
when $L_{d}$ is small.

This method is largely discussed in \cite[p. 928]{Scolan} where the author
introduces $L_d \, ,$ but also a second regularization which consists in replacing the
Green kernel $G(x_{s,i},x_{v,j})$, behaving as $\ln |x_{s,i}-x_{v,j}|$ (see
\eqref{Greenkernel}), with
$\ln |x_{s,i}-x_{v,j}|+b$ where $b$ is not necessarily small ($b\in [0; 10\,000]$).
The above $b$ term is difficult to justify from a mathematical point of view,
and not quite a small perturbation.

Instead of the above procedure, we consider a regularization inspired from
the vortex-blob method.
We thus introduce a second regularizing parameter $\varepsilon_{N}>0$ replacing the
previous kernel functions by $\cot (\pi(x_{s,i}-x_{v,j})/L+\varepsilon_{N}\, n)$.

In practice, when testing this approach, we have considered the $L_d$
regularization (finite distance of the vortices from the interface) and
a parameter $\varepsilon_{N}$ on the form of the blob method
(i.e. regularizing the kernel). Following \cite{Scolan},
we explicitely relate $L_d$ to $1/N \, .$ Also $\varepsilon_{N}$ is taken
to vanish as $1/N$ to try to assess the convergence properties of this approach.

We present in Fig.~\ref{fig:Offset} the numerical simulations performed with
the curve-offset method applied to the vortex formulation.
We used 
\eqref{eq CI simple wave}
as initial condition. This configuration corresponds to the wave breaking
and the results should be compared with
those of Fig.~\ref{fig wave breaking}.

We considered four different regularizations weights, namely
$L_d=L/N$ $\varepsilon_{N}=1/2N \, ,$
$L_d=2L/N$ $\varepsilon_{N}=1/2N \, ,$
$L_d=L/2N$ $\varepsilon_{N}=1/2N \, ,$
$L_d=L/N$ $\varepsilon_{N}=1/4N \, .$
We report the various numerical solutions at $t=3.68$ in
Fig.~\ref{fig:Offset}(a,b). We first note that all curves are significantly
different from the results presented in Fig.~\ref{fig wave breaking}.
The wave is much slower with the regularization. The unregularized
method, for which we observed convergence toward the Euler solution on
various test cases, has already reached splashing at that time.
The shape of the
obtained numerical solution also strongly depends on $L_d \, ,$ but only
weakly on $\varepsilon_{N} $ (the green and black curves being extremely
close to another). We performed further tests, which confirmed the weak influence of the
$\varepsilon_{N} $ regularization on the numerical solution.

A puzzling property of this approach is that, for a given choice of regularization,
say $L_d=L/N$ $\varepsilon_{N}=1/2N \, ,$ some convergence is achieved as $N$
is increased (see Fig.~\ref{fig:Offset}c). Yet the numerical curve then
converges toward a curve which could seem plausible, but significantly
differs from the unregularized solution. 
A final concern with this approach is that the total energy of the wave is
not conserved (and varies by $\sim 30\%$ through the simulation).

\begin{figure}
\centerline{(a)\includegraphics[width=0.97\textwidth]{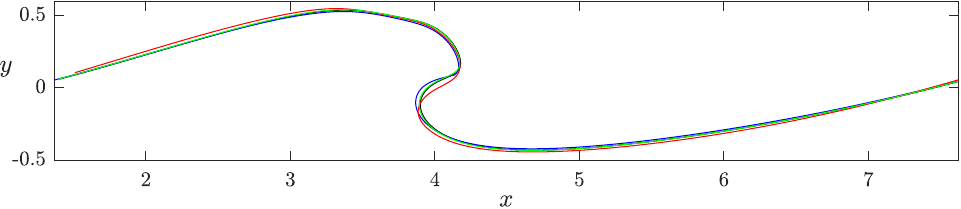}}
\vskip 3mm
\centerline{(b)\includegraphics[height=5.8cm]{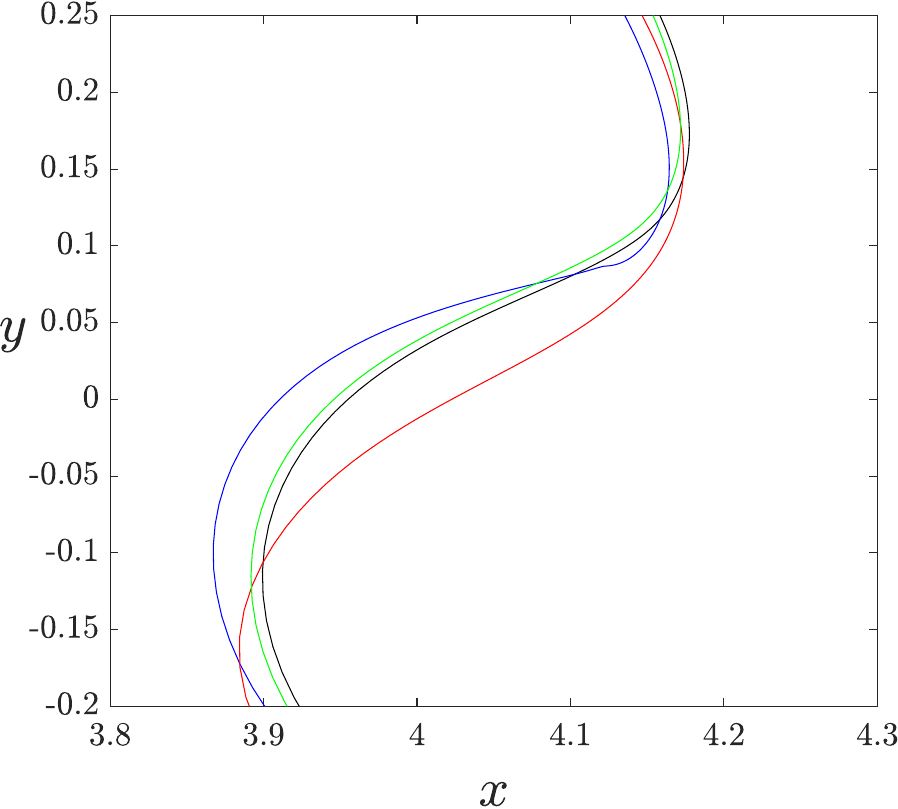} \ \
 (c)\includegraphics[height=5.68cm]{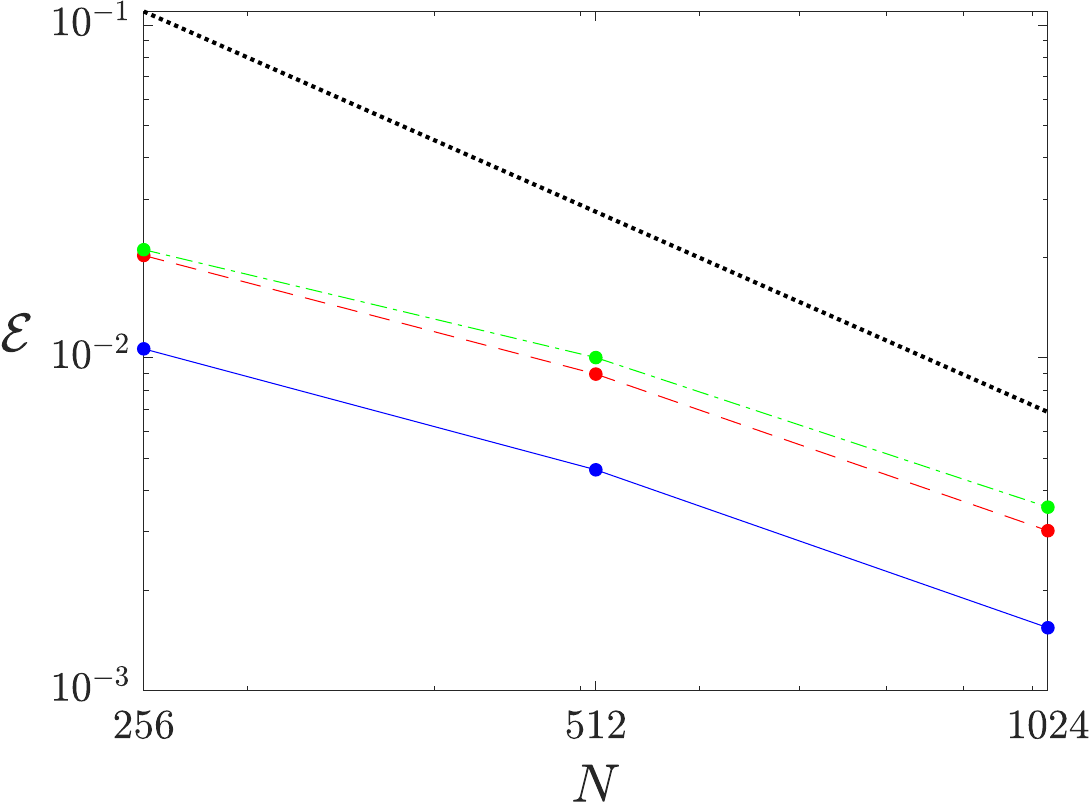}}
\caption{Simulations of wave breaking with initial condition
 \eqref{eq CI simple wave} using the curve-offset method. In graphs (a) and (b),
 profiles of the interface obtained with $N=512$ at time $t=3.68$ with $L_d=L/N\, ,$
 $\varepsilon_{N}=1/2N$
 (black), with $L_d=2L/N$ $\varepsilon_{N}=1/2N$ (blue), with $L_d=L/2N$
 $\varepsilon_{N}=1/2N$ (red), and with $L_d=L/N$ $\varepsilon_{N}=1/4N$
 (green).
(c) Evolution of the Hausdorff error, between
 N=2048 and N=256, 512, 1024, at time $t = 1.0$ (solid blue), $t = 2.0$ (dashed red) and time $t = 3.0$ (dot dashed green). The dotted line indicates an ideal $N^{-2}$ scaling.
}
\label{fig:Offset}
\end{figure} 

\section{Discussion}\label{sec7}

We derived a numerical strategy to discretize inviscid water waves in the
case of overturning interfaces (i.e. when the water-air interface is not a
graph). We showed that this discretisation can be used up to the splash
(i.e. when the interface self intersects). No filtering or regularisation
were introduced other than numerical discretisation. In the most severe
case of a splashing wave, an odd-even coupling was introduced.
It vanishes in the limit of a large number of points.

This formulation opens the way for further studies. In particular, we want
to study the possibility of a finite-time singularity formation at the tip
of a breaking wave. No singularity was observed with the initial condition
considered here. We also want to investigate the effect of an abrupt jump
in water height. Finally, the triple interface of water, air and land
(i.e. the sloping beach problem) still needs to be addressed.

\section*{Acknowledgements}
The authors are grateful to David Lannes and Alan Riquier for discussions.
This work was partially supported by the ANR project `SINGFLOWS'
(ANR-18-CE40-0027-01), the IMPT project `Ocean waves', and the CNES-Tosca project `Maeva'.

\appendix

\section{Cotangent kernel}\label{app-cot}

We begin this appendix by relating the cotangent kernel and the usual
kernel in $\R^2\,$.
Let $f$ a $L$-periodic real function and $z$ a curve verifying
$z(e+L)=z(e)+L\,$, then (we recall the notation $\widehat{\mb u}=u_{1}-\ic u_{2} \, ,$)
\begin{align*}
\widehat{K_{\R^2}}[|z_{e}|^{-1} f](x) 
&= \frac1{2\pi \ic } \int_{-\infty}^\infty \frac{1}{x-z(e')} f(e') \d e'
=\frac1{2\pi \ic } \int_{0}^{L} \sum_{k=-\infty}^{+\infty} \frac{1}{x-z(e') -L k} f(e') \d e'\\
&=
\frac{1}{2\pi L\ic } \int_{0}^{L} \sum_{k=-\infty}^{+\infty} \frac{1}{\frac{1}{L}(x-z(e')) - k} f(e') \d e'\\
&=
\frac1{2L\ic } \int_{0}^{L} \cot\Big(\frac{x-z(e')}{L/\pi}\Big) f(e') \d e'
\end{align*}
because it is well known for $x\in (0,1)$ that
\[
\pi \cot(\pi x)=\lim_{N\to \infty} \sum_{k=-N}^N \frac1{x+n} = \frac1x+ \lim_{N\to \infty} \sum_{k=1}^N \frac{2x}{x^2-n^2}
\]
which can be extended to $\C\setminus \Z$ by unicity of the analytic
extension. This computation can be found in several formal derivations to
get the Biot-Savart formula without using the Green kernel in $\T_{L}\times
\R\,$, see for instance \cite{Wilkening21}. 

The limit of Cauchy integrals from above provides
\begin{multline*}
\lim_{s\to 0^+} \frac1{2L\ic } \int_{0}^{L} \cot\Big(\frac{x-z(e')}{L/\pi}\Big) f(e') \d e'\Big\vert_{x= z(e)+s n } \\
=\frac{1}{2L\ic } {\rm pv}\int_{0}^{L} \cot\Big(\frac{z(e)-z(e')}{L/\pi}\Big) f(e') \d e' - \frac{1}{2}\frac{f(e)}{z_{e}(e)}
\end{multline*}
and from below we have
\begin{multline*}
\lim_{s\to 0^-} \frac1{2L \ic } \int_{0}^{L} \cot\Big(\frac{x-z(e')}{L/\pi}\Big) f(e') \d e'\Big\vert_{x= z(e) +s n } \\
=\frac{1}{2L\ic } {\rm pv}\int_{0}^{L} \cot\Big(\frac{z(e)-z(e')}{L/\pi}\Big) f(e') \d e' + \frac{1}{2}\frac{f(e)}{z_{e}(e)}\,.
\end{multline*}
From this formula, we recover that the tangential component
$\mb u\cdot \mb \tau=\Real (\hat {\mb u} \tau)$
has a jump, whereas the normal component $\mb u\cdot \mb n=-\Imag(\hat {\mb u} \tau)$ is continuous.

We can also note that $\int_{0}^{L} \Real \Big[\frac{z_{e}(e)}{2L\ic }\cot\Big(\frac{z(e)-z(e')}{L/\pi}\Big) \Big]f(e') \d e'$ is actually a classical integral whereas ${\rm pv}\int_{0}^{L} \Imag \Big[\frac{z_{e}(e)}{2L\ic }\cot\Big(\frac{z(e)-z(e')}{L/\pi}\Big) \Big]f(e') \d e'$ only makes sense in terms of principal value.

As usual concerning desingularization of the principal value, we first note that
\begin{align}
{\rm pv} \int_{0}^{L} & z_{e}(e') \cot\Big(\frac{z(e)-z(e')}{L/\pi}\Big) \d e'= \frac{L}{\pi} {\rm pv} \int_{-\infty}^\infty \frac{z_{e}(e')}{z(e)-z(e')} \d e' \label{desing}\\
=&-\frac{L}{\pi} \lim_{\varepsilon \to 0^+} \Bigg(\Bigg[ \ln \Big(z(e)-z(e') \Big) \Bigg]_{-\infty}^{e-\varepsilon} + \Bigg[ \ln \Big(z(e)-z(e') \Big) \Bigg]^{+\infty}_{e+\varepsilon} \Bigg)\nonumber\\
= & -\frac{L}{\pi} \lim_{\varepsilon \to 0^+} \lim_{S \to \infty} \Bigg(\Big( ( \ln (\varepsilon \rho)+i\theta) - \ln S \Big) + \Big( \ln S + i\pi - ( \ln (\varepsilon \rho)+i\theta + i \pi) \Big) \Bigg)=0\nonumber
\end{align}
where $z(e)-z(e \pm \varepsilon) = - \pm \varepsilon z_{e}(e) +
\mathcal{O}(\varepsilon ^2) = -\pm \varepsilon \rho e^{i\theta}+
\mathcal{O}(\varepsilon ^2)\,$, hence we can always write
\begin{equation*}
{\rm pv}\int \cot\Big(\frac{z(e)-z(e')}{L/\pi}\Big) f(e') de' = \int \cot\Big(\frac{z(e)-z(e')}{L/\pi}\Big) \frac{ f(e')z_{e}(e)-f(e)z_{e}(e') }{z_{e}(e)}de'
\end{equation*} 
which is now a classical integral of a continuous function.

For more details on singular integrals, we refer to \cite{musk,kellogg,fabes}, see \cite[Sect. 3]{ADL} for a brief summary.

\section{Discrete operators for the dipole formulation}\label{app-dipole}

For $z_{B}(i):=z_{B}(e_{B,i})\,$, $z_{S}(i):=z_{S}(e_{S,i})$ and
$\mu_{S}(i):=\mu_{S}(e_{S,i})$ given, we construct the matrix $A_{B,N}^*$ 
\begin{align*}
 &A_{B,N}^*(i,j)= \frac{L_{B}}{N_{B}} \Real \Bigg[\frac1{2L\ic} \cot\Big(\frac{z_{B}(i)-z_{B}(j) }{L/\pi}\Big) z_{B,e}(j) \Bigg]
 \ \forall i\neq j \in [1,N_{B}]\times [1,N_{B}]\,,\\
 &A_{B,N}^*(i,i)=\frac12- \frac{L_{B}}{N_{B}} \Real\Bigg[\frac1{4\pi\ic} \frac{z_{B,ee}(i)}{z_{B,e}(i)} \Bigg] \ \forall i\in [1,N_{S}]\,,
\end{align*}
and $F_{D,N}$ 
\[
 F_{D,N}(i)= -\sum_{j=1}^{N_{S}} \frac{L_{S}}{N_{S}} \mu_{S}(j) \Real \Bigg[ \frac{z_{S,e}(j)}{2L\ic} \cot\Big(\frac{z_{B}(i)-z_{S}(j) }{L/\pi}\Big)\Bigg]
 \ \forall i\in [1,N_{B}]\,.
 \]
We set $\mu_{B} = (A_{B,N}^*)^{-1} F_{D,N}\,$. This operation corresponds to \eqref{muB-muS}.

Next, we compute $\gamma_{B}=\partial_{e} \mu_{B}(e)\,$, $\gamma_{S}=\partial_{e} \mu_{S}(e)\,$, next $\partial_{t}z_{S}\,$, $\widehat{\nabla \phi_{F}} (z_{S}(e))$ and $\widehat{\nabla \phi_{A}} (z_{S}(e))$ where we could extend in the integral on $\Gamma_{S}$ for $e'=e$ by $\dfrac{\gamma_{S}(e)z_{S,ee}(e) -\gamma_{S,e}(e)z_{S,e}(e)}{2\pi \ic z_{S,e}^2(e)}\,$.

We compute $A_{S,N}^*$ as
\begin{align*}
 &A_{S,N}^*(i,j)= A_{tw}\frac{L_{S}}{N_{S}} \Real \Bigg[\frac1{2L\ic} \cot\Big(\frac{z_{S}(i)-z_{S}(j) }{L/\pi}\Big) z_{S,e}(j) \Bigg]
 \ \forall i\neq j \in [1,N_{S}]\times [1,N_{S}]\,,\\
 &A_{S,N}^*(i,i)=\frac12-A_{tw}\frac{L_{S}}{N_{S}} \sum_{j\neq i} \Real \Bigg[\frac1{2L\ic} \cot\Big(\frac{z_{S}(i)-z_{S}(j) }{L/\pi}\Big) z_{S,e}(j) \Bigg] \ \forall i\in [1,N_{S}]\,,
\end{align*}
next 
\[
 C_{D,N}(i,j)= A_{tw}\frac{L_{B}}{N_{B}} \Real \Bigg[\frac1{2L\ic} \cot\Big(\frac{z_{S}(i)-z_{B}(j) }{L/\pi}\Big) z_{B,e}(j) \Bigg]
 \ \forall (i, j) \in [1,N_{S}]\times [1,N_{B}]\,.
\]
and finally 
\begin{align*}
 D_{D,N}(i,j)= \frac{L_{S}}{N_{S}} \Real \Bigg[\frac1{2L\ic} \cot\Big(\frac{ z_{B}(i)-z_{S}(j) }{L/\pi}\Big) z_{S,e}(j) \Bigg]
 \ \forall (i,j) \in [1,N_{B}]\times [1,N_{S}]\,.
\end{align*}

Concerning the right hand side term, we compute
\begin{align}
& G_{D,1,N}(i)=\nonumber\\ 
&-A_{tw} \sum_{j\neq i} \frac{L_{S}}{N_{S}} ( \mu_{S}(i) - \mu_{S}(j))\Real \Bigg[\frac{\pi}{2L^2\ic} \sin^{-2}\Big(\frac{z_{S}(i)-z_{S}(j) }{L/\pi}\Big) (\partial_{t}z_{S}(i)-\partial_{t}z_{S}(j) )z_{S,e}(j) \Bigg] \nonumber\\
&-A_{tw} \sum_{j\neq i} \frac{L_{S}}{N_{S}} ( \mu_{S}(j) - \mu_{S}(i))\Real \Bigg[\frac{1}{2L\ic} \cot\Big(\frac{z_{S}(i)-z_{S}(j) }{L/\pi}\Big) \partial_{t}z_{S,e}(j) \Bigg] \nonumber\\
&+A_{tw} \sum_{j=1}^{N_{B}} \frac{L_{B}}{N_{B}} \mu_{B}(j)\Real \Bigg[\frac{\pi}{2L^2\ic} \sin^{-2}\Big(\frac{z_{S}(i)-z_{B}(j) }{L/\pi}\Big)\partial_{t}z_{S}(i) z_{B,e}(j) \Bigg] \label{GD1} \\
& + \frac12\Real \Bigg[ \partial_{t} z_{S}(i) \Big((A_{tw}+1)\widehat{\nabla \phi_{F}} (z_{S}(i))+(A_{tw}-1) \widehat{\nabla \phi_{A}} (z_{S}(i))\Big)\Bigg]\nonumber\\
&-\frac{1}{4} \Big((A_{tw}+1) \vert (\widehat{\nabla \phi_{F}}+ \widehat{\mb u_{\gamma}} )(z_{S}(i)) \vert^2+ (A_{tw}-1) \vert \widehat{\nabla \phi_{A}} (z_{S}(i)) \vert^2 \Big)\nonumber\\
& -\frac{(A_{tw}+1)\sigma }{2\rho_{F}}\kappa(z_{S}(i)) -gA_{tw} \Imag z_{S}(i)\,,\nonumber
\end{align}
for all $i\in [1, N_{S}]\,$, whereas 
\begin{align*}
 G_{D,2,N}(i)=
&- \sum_{j=1}^{N_{S}} \frac{L_{S}}{N_{S}} \mu_{S}(j) \Real \Bigg[\frac{1}{2L\ic} \cot\Big(\frac{ z_{B}(i)-z_{S}(j) }{L/\pi}\Big) \partial_{t}z_{S,e}(j) \Bigg] \\
&- \sum_{j=1}^{N_{S}} \frac{L_{S}}{N_{S}} \mu_{S}(j)\Real \Bigg[\frac{\pi}{2L^2\ic} \sin^{-2}\Big(\frac{ z_{B}(i)-z_{S}(j) }{L/\pi}\Big)\partial_{t}z_{S}(j) z_{S,e}(j) \Bigg]
\end{align*}
for all $i\in [1, N_{B}]\,$.

\begin{remark}
It could seem strange that the diagonal terms in $A_{S,N}^*$ are of a
different nature than those in $A_{B,N}^*$.
It is in fact the same, because we can make use of Appendix~\ref{app-cot}
to rewrite \eqref{muB-muS} in the form
\begin{multline*}
\frac12 \mu_{B}(e) + \int_{0}^{L_{B}} (\mu_{B}(e')-\mu_{B}(e))\Real \Bigg[\frac1{2L\ic} \cot\Big(\frac{z_{B}(e)-z_{B}(e') }{L/\pi}\Big) z_{B,e}(e') \Bigg]\d e'\\
=-\int_{0}^{L_{S}} \mu_{S}(e') \Real \Bigg[\frac1{2L\ic} \cot\Big(\frac{z_{B}(e)-z_{S}(e') }{L/\pi}\Big) z_{S,e}(e') \Bigg]\d e' \,,
\end{multline*}
for which we would defined
\[
A_{B,N}^*(i,i)=\frac12-\frac{L_{B}}{N_{B}} \sum_{j\neq i} \Real \Bigg[\frac1{2L\ic} \cot\Big(\frac{z_{B}(i)-z_{B}(j) }{L/\pi}\Big) z_{B,e}(j) \Bigg] \ \forall i\in [1,N_{S}]
\]
and we recover the same expression by the discretization of desingularization rule \eqref{desing}
\[
\frac{L_{B}}{N_{B}} \sum_{j\neq i} \Real \Bigg[\frac1{2L\ic} \cot\Big(\frac{z_{B}(i)-z_{B}(j) }{L/\pi}\Big) z_{B,e}(j) \Bigg] - \frac{L_{B}}{N_{B}} \Real\Bigg[\frac1{4\pi\ic} \frac{z_{B,ee}(i)}{z_{B,e}(i)} \Bigg] =0 \,.
\]

We can therefore use either of these formulations, but it is more
interesting to avoid the second derivative $z_{S,ee} \, ,$ which tends to destabilize
the numerical code when the curvature of the interface becomes large. As
the bottom boundary does
not depend on time, we can retain the expression in terms of $z_{B,ee}\,$. 

This relation could be also used in the extension for $\partial_{t}z_{S}$ mentioned above replacing
$\dfrac{L_{S}}{N_{S}}\dfrac{\gamma_{S}(i)z_{S,ee}(i) -\gamma_{S,e}(i)z_{S,e}(i)}{2\pi \ic z_{S,e}^2(i)}$
with
\[
\dfrac{L_{S}}{N_{S}}\dfrac{\gamma_{S}(i) }{ z_{S,e}(i)} \sum_{j\neq i} \frac1{2L\ic} \cot\Big(\frac{z_{S}(i)-z_{S}(j) }{L/\pi}\Big) z_{S,e}(j) - \dfrac{L_{S}}{N_{S}}\dfrac{\gamma_{S,e}(i)}{2\pi \ic z_{S,e}(i)}\,.
\]
Unfortunately, this does not improve the stability of the code. The method
explained in \S~\ref{sec-shift} with the shifted grids in space allows us
to avoid $\gamma_{S,e}\, ,$ which would appear by the extension by continuity. 
\end{remark}

\section{Discrete operators for the vortex formulation}\label{app-vortex}

First, we give the precise equation for the vortex formulation, then we give the discret version of the operators.

We compute
\begin{align*}
 &\frac12\partial_{t} \Psi_{S}(e)=
 \int_{0}^{L_{S}} \Real \Bigg[\frac{\partial_{t}\gamma_{S}(e')z_{S,e}(e)-\partial_{t}\gamma_{S}(e)z_{S,e}(e')}{2L\ic} \cot\Big(\frac{z_{S}(e)-z_{S}(e') }{L/\pi}\Big) \Bigg]\d e'\\
&+\int_{0}^{L_{B}} \partial_{t}\gamma_{B}(e')\Real \Bigg[\frac1{2L\ic} \cot\Big(\frac{z_{S}(e)-z_{B}(e') }{L/\pi}\Big) z_{S,e}(e) \Bigg]\d e'\\
&-\int_{0}^{L_{S}} \Real \Bigg[\pi\frac{\gamma_{S}(e')z_{S,e}(e)-\gamma_{S}(e)z_{S,e}(e')}{2L^2\ic} \sin^{-2}\Big(\frac{z_{S}(e)-z_{S}(e') }{L/\pi}\Big) (\partial_{t}z_{S}(e)-\partial_{t}z_{S}(e') ) \Bigg]\d e'\\
&+\int_{0}^{L_{S}} \Real \Bigg[\frac{\gamma_{S}(e')\partial_{t}z_{S,e}(e)-\gamma_{S}(e)\partial_{t}z_{S,e}(e')}{2L\ic} \cot\Big(\frac{z_{S}(e)-z_{S}(e') }{L/\pi}\Big) \Bigg]\d e'\\
&+\int_{0}^{L_{B}} \gamma_{B}(e')\Real \Bigg[\frac1{2L\ic} \cot\Big(\frac{z_{S}(e)-z_{B}(e') }{L/\pi}\Big) \partial_{t}z_{S,e}(e) \Bigg]\d e'\\
&-\int_{0}^{L_{B}} \gamma_{B}(e')\Real \Bigg[\frac{\pi}{2L^2\ic} \sin^{-2}\Big(\frac{z_{S}(e)-z_{B}(e') }{L/\pi}\Big)\partial_{t}z_{S}(e) z_{S,e}(e) \Bigg]\d e'\\
&+ \sum_{j=1}^{N_{v}} \gamma_{v,j} \Real\Bigg[\frac{\partial_{t}z_{S,e}(e)}{2L\ic } \cot\Big(\frac{z_{S}(e)-z_{v,j} }{L/\pi}\Big)-\frac{\pi z_{S,e}(e)\partial_{t}z_{S,e}(e)}{2L^2\ic } \sin^{-2}\Big(\frac{z_{S}(e)-z_{v,j} }{L/\pi}\Big)\Bigg] \\
&+\frac{ \omega_{0}}{4\pi}\int_{0}^{L_{S}} \ln\Big(\cosh \Imag \frac{z_{S}(e)-z_{S}(e')}{L/(2\pi)} -\cos \Real \frac{z_{S}(e)-z_{S}(e')}{L/(2\pi)} \Big) \\
& \hspace{7cm}\times\Real\Big[ \partial_{t} z_{S,e}(e)\overline{z_{S,e}(e')}+ z_{S,e}(e)\overline{\partial_{t} z_{S,e}(e')}\Big] \d e'\\
&- \omega_{0}\int_{0}^{L_{S}} \Imag \Bigg[ \frac{\partial_{t}z_{S}(e)-\partial_{t}z_{S}(e')}{2L\ic} \cot \Big( \frac{z_{S}(e)-z_{S}(e')}{L/\pi}\Big) \Bigg] \Real\Big[ z_{S,e}(e)\overline{z_{S,e}(e')}\Big] \d e'\\
&-\frac{ \omega_{0}}{4\pi}\int_{0}^{L_{B}} \ln\Big(\cosh \Imag \frac{z_{S}(e)-z_{B}(e')}{L/(2\pi)} -\cos \Real \frac{z_{S}(e)-z_{B}(e')}{L/(2\pi)} \Big) \Real\Big[\partial_{t}z_{S,e}(e)\overline{z_{B,e}(e')}\Big] \d e'\\
&+ \omega_{0}\int_{0}^{L_{B}}\Imag \Bigg[ \frac{\partial_{t}z_{S}(e)}{2L\ic} \cot \Big( \frac{z_{S}(e)-z_{B}(e')}{L/\pi}\Big) \Bigg] \Real\Big[z_{S,e}(e)\overline{z_{B,e}(e')}\Big] \d e'
\end{align*}
where we have used the relation on the cotangent \eqref{cot-nabla}.

Every integrals are classically defined, in particular the functions can be extended by continuity for $e=e'$ in the third integral by
\[
 \Real \Bigg[\frac{\gamma_{S}(e)z_{S,ee}(e)-\gamma_{S,e}(e)z_{S,e}(e)}{2\pi\ic z_{S,e}^2(e)} \partial_{t}z_{S,e}(e) \Bigg]
\]
and in the fourth one by
\[
\Real \Bigg[\frac{\gamma_{S}(e)\partial_{t}z_{S,ee}(e)-\gamma_{S,e}(e)\partial_{t}z_{S,e}(e)}{2\pi\ic z_{S,e}(e)} \Bigg]\,,
\]
which can be simplified by a part in the extension of the third integral.
We can replace the term $z_{S,ee}$ in the extension of the second integral by replacing
$
 \frac{L_{S}}{N_{S}}\Real \Bigg[\dfrac{\gamma_{S}(i)z_{S,ee}(i)}{2\pi\ic z_{S,i}^2(e)} \partial_{t}z_{S,e}(i) \Bigg]
$
by
\[
 \frac{L_{S}}{N_{S}} \sum_{j\neq i} \Real \Bigg[\dfrac{\gamma_{S}(i)}{z_{S,i}^2} \partial_{t}z_{S,e}(i)\frac1{2L\ic} \cot\Big(\frac{z_{S}(i)-z_{S}(j) }{L/\pi}\Big) z_{S,e}(j) \Bigg]\,.
\]
Unfortunately, it is more complicated to replace $ \partial_{t}z_{S,e}$ and
$\partial_{t}z_{S,ee}$ because differentiating the previous relation would
introduce an additional $\sin^{-2}$ term.
The first integral has to be replaced by
\[
 \int_{0}^{L_{S}} \Real
 \Bigg[\frac{\partial_{t}\gamma_{S}(e')z_{S,e}(e)}{2L\ic}
 \cot\Big(\frac{z_{S}(e)-z_{S}(e') }{L/\pi}\Big) \Bigg]\d e' \, ,
\]
where the continuous function is extended for $e=e'$ by zero.

These computations provide the explicit expression for $G_{V,1}\,$.
We can note that the many terms disappear when considering the
single-fluids formulation, i.e. the $\alpha=1$ and $A_{tw}=
1\,$ case. 

For the expression of $G_{V,2}\,$, we get
\begin{align*}
 &\hspace{-10pt}G_{V,2}(e) \\=& 
 -\int_{0}^{L_{S}} \gamma_{S}(e') \Imag \Bigg[\frac{\pi z_{B,e}(e) \partial_{t}z_{S}(e')}{2L^2\ic} \sin^{-2}\Big(\frac{z_{B}(e)-z_{S}(e') }{L/\pi}\Big)\Bigg] \d e'\\
&- \sum_{j=1}^{N_{v}}\gamma_{v,j} \Imag \Bigg[\pi\frac{z_{B,e}(e) \partial_{t}z_{v,j}}{2L^2\ic} \sin^{-2}\Big(\frac{z_{B}(e)-z_{v,j} }{L/\pi}\Big) \Bigg] \\
&-\frac{ \omega_{0}}{4\pi}\int_{0}^{L_{S}} \ln\Big(\cosh \Imag \frac{z_{B}(e)-z_{S}(e')}{L/(2\pi)} -\cos \Real \frac{z_{B}(e)-z_{S}(e')}{L/(2\pi)} \Big) \Imag \Bigg[z_{B,e}(e)\overline{\partial_{t }z_{S,e}(e')}\Bigg] \d e'\\
&- \omega_{0}\int_{0}^{L_{S}}\Imag \Bigg[ \frac{\partial_{t}z_{S}(e')}{2L\ic} \cot \Big( \frac{z_{B}(e)-z_{S}(e')}{L/\pi}\Big) \Bigg] \Imag \Bigg[z_{B,e}(e)\overline{\partial_{t }z_{S,e}(e')}\Bigg] \d e' \,.
\end{align*}

Concerning the numerical approximation, for $z_{B}(i):=z_{B}(e_{B,i})\,$, $z_{S}(i):=z_{S}(e_{S,i})$ and $\gamma_{S}(i):=\gamma_{S}(e_{S,i})$ given, we set $\tilde z_{B}(i):=(z_{B}(e_{B,i})+z_{B}(e_{B,i+1}))/2$ for $i=1,\dots, N_{B}-1$ to construct the matrix $B_{B,N}$
\begin{align*}
 &B_{B,N}(i,j)= \frac{L_{B}}{N_{B}} \Imag \Bigg[ \frac{\tilde z_{B,e}(i)}{2L\ic} \cot\Big(\frac{\tilde z_{B}(i)-z_{B}(j) }{L/\pi}\Big)\Bigg]
 \ \forall (i,j)\in [1,N_{B}-1]\times [1,N_{B}]\,,\\
 &B_{B,N}(N_{B},j)=\frac{L_{B}}{N_{B}} \ \forall j\in [1,N_{B}]\,,
\end{align*}
The discretization of ${\rm RHS}_{V0,B,N}$ and ${\rm RHS}_{VB,N}$ are
clear, replacing every $z_{B}(e)$ by $\tilde z_{B}(i)\,$, and where the
last component is $-\gamma\,$. We deduce $\gamma_{B} =
B_{B,N}^{-1} {\rm RHS}_{VB,N}\,$. 

Next, we compute $\partial_{t}z_{S}\,$, $\widehat{u_{F}} (z_{S}(e))$ and $\widehat{u_{A}} (z_{S}(e))$ where we could extend in the integral on $\Gamma_{S}$ for $e'=e$ by $\dfrac{\gamma_{S}(e)z_{S,ee}(e) -\gamma_{S,e}(e)z_{S,e}(e)}{2\pi \ic z_{S,e}^2(e)}\,$, and we compute the derivative with respect to $e\,$.

We compute $A_{S,N}$ as
\begin{align*}
 A_{S,N}(i,j)=& A_{tw}\frac{L_{S}}{N_{S}} \Real \Bigg[\frac1{2L\ic} \cot\Big(\frac{z_{S}(i)-z_{S}(j) }{L/\pi}\Big) z_{S,e}(i) \Bigg]
 \ \forall i\neq j \in [1,N_{S}]\times [1,N_{S}]\,,\\
 A_{S,N}(i,i)=&\frac12 \ \forall i\in [1,N_{S}]\,,
\end{align*}
next 
\[
 C_{V,N}(i,j)= A_{tw}\frac{L_{B}}{N_{B}} \Real \Bigg[\frac1{2L\ic} \cot\Big(\frac{z_{S}(i)-z_{B}(j) }{L/\pi}\Big) z_{S,e}(i) \Bigg]
 \ \forall (i, j) \in [1,N_{S}]\times [1,N_{B}]\,.
\]
and finally 
\begin{align*}
 &D_{V,N}(i,j)= \frac{L_{S}}{N_{S}} \Imag \Bigg[\frac1{2L\ic} \cot\Big(\frac{\tilde z_{B}(i)-z_{S}(j) }{L/\pi}\Big) \tilde z_{B,e}(i) \Bigg]
 \ \forall (i,j) \in [1,N_{B}-1]\times [1,N_{S}]\,,\\
 &D_{V,N}(N_{B},j)=0 \ \forall j\in [1,N_{S}]\,.
\end{align*}

\section{Notations}

We summarize here the notations used in this article.

\begin{itemize}
\item For a given vector $\mb a_R =(a_{R,1},a_{R,2})$, $a_{R,1}$ is the
  first component, whereas $a_{R,2}$ is the second component. We also
  introduce the complex notation: $a_R=a_{R,1}+\ic a_{R,2} \, .$

\item For a function $e\mapsto f_R(e)$, we denote the derivative with respect to $e$ as $f_{R,e}$.

 \item We define three domains: $\mathcal{D}= \T_{L} \times\R$, $\mathcal{D}_{F}$ the fluid domain, $\mathcal{D}_{A}$ the air domain,
$\mathcal{D}_{B}$ the domain below the bottom. $\Gamma_{S}$ is then the
   water-air free surface, and $\Gamma_{B}$ the bottom (see Fig.~\ref{Geometry} in Section~\ref{sect-intro}).

\item Parametrization of the boundaries: the free surface $\Gamma_{S}$
  (initially parameterised by arclength from
left to right) $e\in [0,L_{S}] \mapsto z_{S}(t,e)=z_{S,1}(e,e)+ \ic z_{S,2}(t,e) \,$; the bottom $\Gamma_{B}$ (by arclength from
left to right) $e\in [0,L_{B}] \mapsto z_{B}(e)=z_{B,1}(e)+ \ic
z_{B,2}(e)\, .$ 

\item Tangent vectors $\tau_{S} = \tau_{S,1} +\ic \tau_{S,2}=
  |z_{S,e}(e)|^{-1} z_{S,e}$ and $\tau_{B} = \tau_{B,1} +\ic \tau_{B,2}=
  |z_{B,e}(e)|^{-1} z_{B,e}$
  are pointing to the right. The normal vector $n_{S}=n_{S,1} +\ic n_{S,2}= -\tau_{S,2} +\ic \tau_{S,1}=\ic \tau_{S}$ is pointing out of the fluid domain, whereas $n_{B}=n_{B,1} +\ic n_{B,2}= -\tau_{B,2} +\ic \tau_{B,1}=\ic \tau_{B}$ is pointing in the fluid domain.

\item The jumps of the tangential components $(\gamma_{S},\gamma_{B})$ are
  defined in \eqref{def-gamma}, whereas the jumps of the potentials
  $(\mu_{S},\mu_{B})$ are defined in \eqref{def-mu}. 

\item The mean current (or circulation) is $\gamma\in \R$ and the constant vorticity is $\omega_{0}\in \R$. The non constant part of the vorticity inside the fluid is $\sum_{j=1}^{N_{v}} \gamma_{v,j} \delta_{z_{v,j}(t)} \,$.

\item $\psi_{F}$ is the stream function associated to $\mb u=\nabla^\perp \psi_{F}\,$.

\item $u_{\omega,\gamma}$ is defined as a vector satisfying
  \eqref{u-fgamma-1} and \eqref{u-fgamma-2}, which allows to define $\mb
  u_{R}:=\mb u-\mb u_{\omega,\gamma}$ as the gradient of a potential function
  (see \eqref{uR-pot}). 

\item For vector fields $\mb u = ( u_{1}, u_{2})$, we define
$\widehat{\mb u}=u_{1}-\ic u_{2} \, ,$
$(u_{1},u_{2})^\perp =(-u_{2},u_{1}) \, ,$
as well as the curl operator
$\curl \mb u = \partial_{1} u_{2}-\partial_{2} u_{1}\, .$ 
\end{itemize}

%%%%%%%%%%%%% Bibliography %%%%%%%%%%%%%%%%%%%%

%\bibliographystyle{abbrv}
%\bibliography{Biblio} 

\begin{thebibliography}{10}

\bibitem{AlazardConserv}
T.~Alazard.
\newblock Boundary observability of gravity water waves.
\newblock {\em Ann. Inst. Henri Poincar{\'e}, Anal. Non Lin{\'e}aire},
  35(3):751--779, 2018.

\bibitem{Alazard}
T.~Alazard, N.~Burq, and C.~Zuily.
\newblock On the {Cauchy} problem for gravity water waves.
\newblock {\em Invent. Math.}, 198(1):71--163, 2014.

\bibitem{Wilkening21}
D.~M. Ambrose, R.~Camassa, J.~L. Marzuola, R.~M. McLaughlin, Q.~Robinson, and
  J.~Wilkening.
\newblock Numerical algorithms for water waves with background flow over
  obstacles and topography.
\newblock {\em {Adv Comput Math}}, 48, 2022.

\bibitem{ADL}
D.~Ars\'{e}nio, E.~Dormy, and C.~Lacave.
\newblock The {V}ortex {M}ethod for {T}wo-{D}imensional {I}deal {F}lows in
  {E}xterior {D}omains.
\newblock {\em SIAM J. Math. Anal.}, 52(4):3881--3961, 2020.

\bibitem{BAKER198353}
G.~Baker.
\newblock Generalized vortex methods for free-surface flows.
\newblock In R.~E. MEYER, editor, {\em Waves on Fluid Interfaces}, pages
  53--81. Academic Press, 1983.

\bibitem{Baker82}
G.~R. {Baker}, D.~I. {Meiron}, and S.~A. {Orszag}.
\newblock {Generalized vortex methods for free-surface flow problems}.
\newblock {\em {J. Fluid Mech.}}, 123:477--501, 1982.

\bibitem{Baker11}
G.~R. {Baker} and C.~{Xie}.
\newblock {Singularities in the complex physical plane for deep water waves}.
\newblock {\em {J. Fluid Mech.}}, 685:83--116, 2011.

\bibitem{Beale96}
J.~T. Beale, T.~Y. Hou, and J.~Lowengrub.
\newblock Convergence of a boundary integral method for water waves.
\newblock {\em SIAM J. Numer. Anal.}, 33(5):1797--1843, 1996.

\bibitem{Lannes1}
G.~Beck and D.~Lannes.
\newblock Freely floating objects on a fluid governed by the {Boussinesq}
  equations.
\newblock {\em Ann. Inst. Henri Poincar{\'e}, Anal. Non Lin{\'e}aire},
  39(3):575--646, 2022.

\bibitem{BeichmanDenisov}
J.~{Beichman} and S.~{Denisov}.
\newblock {2D Euler equation on the strip: stability of a rectangular patch}.
\newblock {\em {Commun. Partial Differ. Equations}}, 42(1):100--120, 2017.

\bibitem{Lannes2}
D.~Bresch, D.~Lannes, and G.~Metivier.
\newblock Waves interacting with a partially immersed obstacle in the
  boussinesq regime.
\newblock {\em Anal. PDE}, 14(4):1085--1124, 2021.

\bibitem{cnoide1}
P.~F. Byrd and M.~D. Friedman.
\newblock {\em Handbook of elliptic integrals for engineers and physicists},
  volume~67.
\newblock Springer, 2013.

\bibitem{Cao91}
Y.~Cao, W.~W. Schultz, and R.~F. Beck.
\newblock Three-dimensional desingularized boundary integral methods for
  potential problems.
\newblock {\em Int. J. Numer. Methods Fluids}, 12(8):785--803, 1991.

\bibitem{Castro}
A.~Castro, D.~C{\'o}rdoba, C.~Fefferman, F.~Gancedo, and J.~G{\'o}mez-Serrano.
\newblock Finite time singularities for the free boundary incompressible
  {Euler} equations.
\newblock {\em Ann. Math. (2)}, 178(3):1061--1134, 2013.

\bibitem{CraigSulemSulem}
W.~Craig and C.~Sulem.
\newblock Numerical simulation of gravity waves.
\newblock {\em J. Comput. Phys.}, 108(1):73--83, 1993.

\bibitem{DautrayLions}
R.~Dautray and J.-L. Lions.
\newblock {\em Mathematical analysis and numerical methods for science and
  technology. {Volume} 4: {Integral} equations and numerical methods}.
\newblock Berlin: Springer, 2nd printing edition, 2000.

\bibitem{fabes}
E.~{Fabes}, M.~{Jodeit}, and N.~{Riviere}.
\newblock {Potential techniques for boundary value problems on $C^1$-domains.}
\newblock {\em {Acta Math.}}, 141:165--186, 1978.

\bibitem{Germain12}
P.~Germain, N.~Masmoudi, and J.~Shatah.
\newblock Global solutions for the gravity water waves equation in dimension 3.
\newblock {\em Ann. Math. (2)}, 175(2):691--754, 2012.

\bibitem{GHL90}
J.~Goodman, T.~Y. Hou, and J.~Lowengrub.
\newblock Convergence of the point vortex method for the {$2$}-{D} {E}uler
  equations.
\newblock {\em Comm. Pure Appl. Math.}, 43(3):415--430, 1990.

\bibitem{Grilli89}
S.~T. Grilli, J.~Skourup, and I.~Svendsen.
\newblock An efficient boundary element method for nonlinear water waves.
\newblock {\em Eng. Anal. Bound. Elem.}, 6(2):97--107, 1989.

\bibitem{guyenne_grilli_2006}
P.~Guyenne and S.~T. Grilli.
\newblock Numerical study of three-dimensional overturning waves in shallow
  water.
\newblock {\em {J. Fluid Mech.}}, 547:361–388, 2006.

\bibitem{Lannes3}
T.~Iguchi and D.~Lannes.
\newblock Hyperbolic free boundary problems and applications to wave-structure
  interactions.
\newblock {\em Indiana Univ. Math. J.}, 70(1):353--464, 2021.

\bibitem{cnoide2}
B.~Jiang and Q.~Bi.
\newblock Classification of traveling wave solutions to the green–naghdi
  model.
\newblock {\em Wave Motion}, 73:45--56, 2017.

\bibitem{Johnson}
R.~S. Johnson.
\newblock {\em A modern introduction to the mathematical theory of water
  waves}.
\newblock Camb. Texts Appl. Math. Cambridge: Cambridge University Press, 1997.

\bibitem{kellogg}
O.~{Kellogg}.
\newblock {\em {Foundations of potential theory.}}
\newblock {Berlin-Heidelberg-New York: Springer-Verlag}, 1967.

\bibitem{Lannes05}
D.~Lannes.
\newblock Well-posedness of the water-waves equations.
\newblock {\em J. Am. Math. Soc.}, 18(3):605--654, 2005.

\bibitem{Lannes-Livre}
D.~{Lannes}.
\newblock {\em {The water waves problem. Mathematical analysis and
  asymptotics}}, volume 188.
\newblock Providence, RI: American Mathematical Society (AMS), 2013.

\bibitem{Moon}
G.~Moon.
\newblock Local well-posedness of the gravity-capillary water waves system in
  the presence of geometry and damping.
\newblock \href{https://arxiv.org/abs/2201.04713}{arXiv:2201.04713}, 2022.

\bibitem{musk}
N.~I. Muskhelishvili.
\newblock {\em Singular integral equations}.
\newblock Wolters-Noordhoff Publishing, Groningen, 1972.
\newblock Boundary problems of functions theory and their applications to
  mathematical physics, Revised translation from the Russian, edited by J. R.
  M. Radok, Reprinted.

\bibitem{Pomeau_2008}
Y.~Pomeau, M.~L. Berre, P.~Guyenne, and S.~Grilli.
\newblock Wave-breaking and generic singularities of nonlinear hyperbolic
  equations.
\newblock {\em Nonlinearity}, 21(5):T61, 2008.

\bibitem{Pomeau}
Y.~Pomeau and M.~Le~Berre.
\newblock Topics in the theory of wave-breaking.
\newblock In {\em Singularities in mechanics. Formation, propagation and
  microscopic description}, pages 125--162. Paris: Soci{\'e}t{\'e}
  Math{\'e}matique de France (SMF), 2012.

\bibitem{Riquier23}
A.~Riquier and E.~Dormy.
\newblock A numerical study of the viscous breaking water waves problem and the
  limit of vanishing viscosity.
\newblock In prep., 2023.

\bibitem{Scolan}
Y.-M. Scolan.
\newblock Some aspects of the flip-through phenomenon: A numerical study based
  on the desingularized technique.
\newblock {\em J. Fluids Struct.}, 26(6):918--953, 2010.

\bibitem{Tuck95}
D.~Scullen and E.~Tuck.
\newblock Nonlinear free-surface flow computations for submerged cylinders.
\newblock {\em J. Ship Res.}, 39(3):185--193, 1995.

\bibitem{Tuck98}
E.~{Tuck}.
\newblock {\em {Solution of Free-Surface Problems by Boundary and
  Desingularised Integral Equation Techniques.}}
\newblock World Scientific Publishing, 1992.
\newblock Computational Techniques and Applications: CTAC97 B.J. Noye, M.D.
  Teubner and A.w. Gill (Editors).

\bibitem{Wu97}
S.~Wu.
\newblock Well-posedness in {Sobolev} spaces of the full water wave problem in
  2d.
\newblock {\em Invent. Math.}, 130(1):39--72, 1997.

\bibitem{Zakharov68}
V.~E. Zakharov.
\newblock Stability of periodic waves of finite amplitude on the surface of a
  deep fluid.
\newblock {\em J. Applied Mech. Tech. Phys.}, 9(2):190--194, 1968.

\end{thebibliography}

\end{document}